\title{A scattering construction for nonlinear wave equations on Kerr--Anti-de Sitter spacetimes}
\author{Gemma L. Hood\footnote{Universit\"at Leipzig, Institut f\"ur Theoretische Physik, Br\"uderstra{\ss}e 16, 04103 Leipzig, Bundesrepublik Deutschland, gemma\_louise.hood@uni-leipzig.de}}
\date{December 2024}
\DeclareMathAlphabet{\mathpzc}{OT1}{pzc}{m}{it}
\theoremstyle{definition}
\newtheorem{definition}{Definition}[section]
\newtheorem*{remark}{Remark}
\newtheorem{theorem}[definition]{Theorem}
\newtheorem{corollary}[definition]{Corollary}
\newtheorem{lemma}[definition]{Lemma}
\newtheorem{proposition}[definition]{Proposition}
\newtheorem{assumption}[definition]{Condition}
\numberwithin{equation}{section}
\begin{document}

\maketitle

\begin{abstract} 
Existence of a large class of exponentially decaying solutions of the nonlinear massive wave equation $\Box_g\psi+\alpha\psi = \mathcal{F}(\psi,\partial\psi)$ on a Kerr--Anti-de Sitter exterior is established via a backwards scattering construction. Exponentially decaying data is prescribed on the future event horizon, and Dirichlet data on the timelike conformal boundary. The corresponding solutions exhibit the full functional degrees of freedom of the problem, but are exceptional in the sense that (even) general solutions of the forward, linear ($\mathcal{F}=0$) problem are known to decay at best inverse logarithmically \cite{SharpDecay}. Our construction even applies outside of the \textit{Hawking-Reall bound} on the spacetime angular momentum, in which case, there exist exponentially growing mode solutions of the forward problem \cite{Dold}. As for the analogous construction in the asymptotically flat case \cite{SchwzScattering}, the assumed exponential decay of the scattering data on the event horizon is exploited to overcome the gravitational blueshift encountered in the backwards construction.
\end{abstract}

\tableofcontents

\section{Introduction}
This paper concerns solutions of the massive scalar wave (or Klein-Gordon) equation\footnote{$\Box_g$ denotes the d'Alembertian $\Box_g\psi=\frac{1}{\sqrt{|\det g|}}\partial_{\mu}(\sqrt{|\det g|}g^{\mu\nu}\partial_{\nu}\psi)$ with respect to the metric $g$.}
\begin{align}
\label{Massive}
    \Box_g\psi + \alpha\psi = \mathcal{F}(\psi,\partial\psi),
\end{align}
on the exterior ($r\geq r_+$) of a fixed $(3+1)$-dimensional Kerr--Anti-de Sitter (Kerr-AdS) background $(\mathcal{M},g)$ with cosmological constant $\Lambda=-\frac{3}{\ell^2}$ ($\ell>0$), mass $M>0$ and future event horizon ($\mathcal{H}^{+}$) radius $r_+$ whose angular momentum $a$ satisfies\footnote{The second requirement ensures that expression (\ref{KAdSMetric}) defines a Lorentzian metric (one can check via direct calculation that $\det g$ is undefined when $|a|=\ell$, due to the vanishing of $\Xi$ and when $|a|>\ell$, there exists $\theta=\theta'$ at which $\det g|_{\theta'}>0$). The first bound ensures that (\ref{KAdSMetric}) describes a black hole spacetime (\textit{extremal} in the case of equality). See \cite{Extremality}.}
\begin{align}
\label{ExtremalityCond}
    |a|\leq\sqrt{\frac{M\ell^2}{r_+}-\ell^2-2r_+^2}\quad\text{ and }\quad |a|<\ell.
\end{align}
\par

Recall that, in \textit{Boyer-Lindquist} coordinates, the metric takes the familiar form
\begin{align}
    \label{KAdSMetric}
    g =& -\frac{\Delta_{-}-\Delta_{\theta}a^2\sin^2\theta}{\Sigma}\mathrm{d}t^2+\frac{\Sigma}{\Delta_{-}}\mathrm{d}r^2+\frac{\Sigma}{\Delta_{\theta}}\mathrm{d}\theta^2+\frac{\Delta_{\theta}(r^2+a^2)^2
    -\Delta_{-}a^2\sin^2\theta}{\Xi^2\Sigma}\sin^2\theta\mathrm{d}\tilde{\phi}^2\nonumber\\
    &-2\frac{\Delta_{\theta}(r^2+a^2)^2-\Delta_{-}}{\Xi\Sigma}a\sin^2\theta\mathrm{d}\tilde{\phi}\mathrm{d}t,
\end{align}
where $\Delta_{-}$, $\Delta_{\theta}$, $\Sigma$ and $\Xi$ are functions of $r$ and $\theta$, defined below in (\ref{MetricQuantities}). In practice, alternative $(t^*,r,\theta,\phi)$ coordinates will be used, as these remedy that Boyer-Lindquist coordinates cease to be regular at the event horizon. The spacetime metric in these coordinates, as well as the relation between the two coordinate systems is stated in Section \ref{sec:KAdS}. One may consult Figure \ref{fig:Penrose} for examples of hypersurfaces of constant $t$ and $t^*$ time respectively.\par 

In (\ref{Massive}), the mass $\alpha$ is assumed to satisfy the \textit{Breitenlohner-Freedman bound}, $\alpha<\frac{9}{4}$, within which positivity of the energy was first derived in the ``pure" Anti-de Sitter (AdS) context \cite{BF}. The range $\alpha>0$ corresponds to ``negative mass."  The class of nonlinearity $\mathcal{F}$ treated here will be defined in Section \ref{sec:Eqn}. For now, one should think of $\mathcal{F}$ as polynomial in first-order ``unit" derivatives $D\psi$ (see (\ref{UnitDerivatives})) of the solution and consider $\mathcal{F}=\frac{1}{r^2}(\partial_{t^*}\psi)^2$ as one explicit example. In particular, in contrast to the asymptotically flat case, no algebraic condition on $\mathcal{F}$ is required. Specifically, one need not impose the \textit{null condition} which is crucial in the asymptotically flat case \cite{Christodoulou, FritzJohn, Klainerman}. This is due to the fact that derivatives in the null directions decay at the same rate in $r$, meaning that there are no ``good" and ``bad" derivatives. However, some general $r$-weighted condition (\ref{FBound}) on $\mathcal{F}$ is required. Note already that condition (\ref{FBound}) allows a broad class of nonlinearities. In particular, we permit terms of the form $g^{\mu\nu}\partial_{\mu}\psi\partial_{\nu}\psi$ which mimic the first-order terms appearing when one writes the Einstein equation in harmonic gauge. This is discussed further in Section \ref{sec:F}.

\subsection{The main result}
Since asymptotically-AdS spacetimes are not globally hyperbolic, the pertinent setting in which to study (\ref{Massive}) is an \textit{initial boundary value problem}. Here, we restrict our attention to the Dirichlet problem
\begin{equation}
    \begin{cases}
    \label{Problematic}
        &\Box_g\psi + \alpha\psi = \mathcal{F}(\psi,\partial\psi),\\
        &(\psi,\partial_{t^*}\psi)|_{\{t^*=\tau\}}=(\psi_0,\psi_1),\quad r^{\frac{3}{2}-s}\psi|_{\mathcal{I}}=0,\quad s=\sqrt{\frac{9}{4}-\alpha},
    \end{cases}
\end{equation}
with initial conditions prescribed on a spacelike hypersurface $\{t^*=\tau\}$ and boundary conditions at the timelike conformal boundary $\mathcal{I}$, which can formally be viewed as the limiting hypersurface $\{r=\infty\}$ (see Figure \ref{fig:Penrose}). However, one may easily extend the discussion that follows to Neumann or Robin boundary conditions in the manner of Holzegel-Warnick \cite{Twisted}.\par

\begin{figure}
    \centering
    \includegraphics[width=0.15\textwidth]{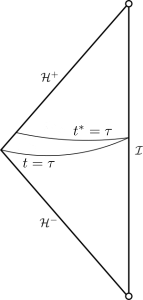}
    \caption{The Penrose diagram of the Kerr-AdS \textit{exterior} region $\{r\geq r_+\}$ depicting examples of constant $t$ and $t^*$ spacelike hypersurfaces.}
    \label{fig:Penrose}
\end{figure}

The objective of this paper is to construct and prove quantitative estimates for an expectedly non-generic, but nonetheless large, class of exponentially decaying solutions of (\ref{Problematic}). No symmetry assumptions will be made on the class of solutions which, moreover, exhibit the full functional degrees of freedom of the problem (\ref{Problematic}).  The main result can be stated informally as follows: 

\begin{theorem}[Exponentially decaying nonlinear waves on the Kerr-AdS exterior]
    \label{InformalMe}
    Under appropriate assumptions on $\mathcal{F}$ (see Section \ref{sec:F}), there exist a large class of (classical) solutions $\psi$ of (\ref{Massive}) with vanishing Dirichlet boundary conditions at $\mathcal{I}$, parameterised by appropriate \textit{scattering data} on $\mathcal{H}^+$, on the (sub)extremal\footnote{Note that this result does not assume the Hawking-Reall bound (\ref{HawkingReall}), outside of which exponentially \textit{growing} mode solutions are known to exist (see Theorem \ref{thm:Dold} below).} (\ref{ExtremalityCond}) Kerr-AdS exterior defined globally to the future of some constant $t^*$ hypersurface. These solutions, moreover, decay exponentially in time towards the future.
\end{theorem}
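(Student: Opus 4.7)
The plan is to reduce the nonlinear problem to a fixed-point argument in an exponentially weighted Banach space, using a backward solution of the associated linear inhomogeneous Klein--Gordon problem as the basic building block. Concretely, for each $T>\tau$, I will solve the truncated linear mixed problem $\Box_g\phi+\alpha\phi=F$ on the slab $\{\tau\leq t^*\leq T\}$, with the prescribed scattering data on $\mathcal{H}^+\cap\{t^*\in[\tau,T]\}$, Dirichlet data on $\mathcal{I}$, and vanishing Cauchy data on $\{t^*=T\}$. Existence on each finite slab follows from the standard well-posedness theory for the IBVP on asymptotically-AdS backgrounds in the manner of Holzegel--Warnick, since the backward problem is simply a forward problem after time reversal. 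The decisive step is to derive estimates for $\phi$ that are uniform in $T$ in a norm encoding exponential decay in $t^*$, so that one can pass to the limit $T\to\infty$.

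The central estimate is a weighted energy identity obtained by applying the divergence theorem to the stress--energy current of $\phi$ contracted with a timelike multiplier of the form $X=f(r)\partial_{t^*}$, supplemented by a Dafermos--Rodnianski-type redshift vector near $\mathcal{H}^+$ (whose role, when the problem is run backwards, becomes that of a blueshift), and then multiplying the integrand by $e^{2\delta t^*}$ for some $0<\delta<\kappa$, where $\kappa$ is the assumed horizon decay rate. The derivative of $e^{2\delta t^*}$ generates a bulk coercive term of the right sign because integration runs backwards; the conformal-boundary contribution is handled by the Dirichlet condition together with the Breitenlohner--Freedman bound $\alpha<9/4$ in a twisted-derivative framework; and the horizon flux, though of the wrong sign due to the blueshift, is bounded by $\int_{\mathcal{H}^+\cap[\tau,T]}e^{2(\delta-\kappa)t^*}\|\text{data}\|^2$, which is finite uniformly in $T$. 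Commuting with the Killing fields $\partial_{t^*}$, $\partial_{\tilde\phi}$ and with angular derivatives, and combining with standard interior elliptic regularity, yields higher-order weighted bounds of the schematic form $\|\phi\|_{\mathcal{X}_{N,\delta}}\lesssim\|\text{data}\|_{N,\kappa}+\|F\|_{\mathcal{Y}_{N-1,\delta}}$, and passing $T\to\infty$ by weak compactness in the weighted norm produces a bounded linear solution operator $\mathcal{L}:F\mapsto\phi$ on all of $\{t^*\geq\tau\}$.

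The nonlinear solution is then constructed as the fixed point of the map $\psi\mapsto\mathcal{L}(\mathcal{F}(\psi,\partial\psi))$. The structural condition (\ref{FBound}) on the nonlinearity, combined with Sobolev algebra estimates in the exponentially weighted norm, ensures that $\|\mathcal{F}(\psi,\partial\psi)\|_{\mathcal{Y}_{N-1,\delta}}\lesssim\|\psi\|_{\mathcal{X}_{N,\delta}}^2$ together with the analogous Lipschitz estimate for differences. Consequently, for horizon scattering data of sufficiently small $\kappa$-weighted norm, the map is a contraction on a small ball in $\mathcal{X}_{N,\delta}$; its unique fixed point is the desired classical solution $\psi$, whose exponential decay in $t^*$ is exactly its membership in $\mathcal{X}_{N,\delta}$, and whose regularity follows from the higher-order estimates together with an elementary bootstrap on $N$.

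I expect the main obstacle to be closing the weighted energy estimate outside the Hawking--Reall regime, where $\partial_{t^*}$ fails to be timelike everywhere in the exterior and the associated energy density is sign-indefinite on the ergoregion; this is precisely what sources the Dold growing modes for the forward problem. In the present backward construction, the mechanism compensating this loss of coercivity is the interplay between the exponential weight $e^{2\delta t^*}$ and the assumed horizon decay rate $\kappa$, and quantifying it requires choosing $\delta$ small enough that the $\delta$-bulk term absorbs the ergoregion and $r$-weighted commutator contributions, yet strictly positive (and strictly less than $\kappa$) so that the resulting norm genuinely encodes exponential time decay. Threading this margin through all commuted estimates, uniformly in $T$ and at every level of regularity needed to close the Sobolev-algebra nonlinear estimate, is the technical heart of the argument.
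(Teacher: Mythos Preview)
Your overall architecture---first solving the linear inhomogeneous backward problem in an exponentially weighted space, then closing the nonlinearity by a contraction mapping---is a legitimate alternative to the paper's route, which runs a bootstrap directly on the \emph{nonlinear} finite-in-time problems (\ref{FiniteProblem}) and passes to the limit by Arzel\`a--Ascoli. Both strategies ultimately rest on the same core linear estimate.

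There is, however, a genuine error in your description of the weighted mechanism. You say one should choose ``$\delta$ small enough that the $\delta$-bulk term absorbs the ergoregion and $r$-weighted commutator contributions.'' This is backwards. In the backward direction the blueshift contributes a bulk error of size $C_{\kappa} E$ (with $C_{\kappa}$ depending on the surface gravity and the number of commutations), while the weight $e^{2\delta t^*}$ contributes a favourable bulk term $2\delta E$; absorbing the former requires $2\delta > C_{\kappa}$, i.e.\ $\delta$ \emph{large}. The upper constraint $\delta<\kappa$ enters only to make the weighted horizon flux $\int e^{2(\delta-\kappa)t^*}$ finite. The true window is therefore $C_{\kappa} < 2\delta < 2\kappa$, which is exactly why the horizon data decay rate must exceed a threshold set by the surface gravity---the mechanism the paper makes explicit via Gr\"onwall in (\ref{PostGronwall}). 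Relatedly, the ergoregion is not the obstruction you suggest: with a uniformly timelike multiplier $N$ the associated energy is coercive on the whole exterior (Proposition~\ref{prop:NCoercive}), so the Hawking--Reall bound plays no role in closing the estimate. The sole enemy is the blueshift bulk term, and it is defeated by taking $\delta$ (equivalently, the data decay rate) large, not small.

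A smaller gap: ``commuting with angular derivatives'' on Kerr--AdS is not free, since the angular momentum operators are not Killing away from the Schwarzschild case. The paper avoids this entirely by commuting only with $T$, $\Phi$, $N$ and recovering the missing spatial derivatives through a hierarchy of elliptic estimates on the slices (Section~\ref{sec:Elliptic}); your proposal should do likewise or explain how the angular commutator errors are absorbed.
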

Theorem \ref{InformalMe} follows from recasting the initial boundary value problem (\ref{Problematic}) as a \textit{backwards} problem with exponentially decaying scattering data on the horizon, and constructing solutions via an iteration scheme. It is the exponentially decaying data which makes the solutions non-generic. However, it seems an extremely intricate problem to a priori characterise the initial data giving rise to such solutions as a set in the space of all Cauchy data for (\ref{Problematic}).\par 

\begin{remark}[Solutions outside of the span of quasinormal modes]
    As one immediate application of the construction of Theorem \ref{InformalMe}, it follows that, even for the linear ($\mathcal{F}=0$) equation (\ref{Massive}), there are solutions which do not lie in the span of so-called quasinormal modes. Indeed, one can prescribe the scattering data to vanish after finite time to construct non-trivial solutions which vanish to the future of some late $t^*=\text{constant}$ slice. These solutions do not lie in the span of quasinormal modes. See \cite{ClaudeQNM} where such solutions and their relation to quasinormal modes are discussed in the Schwarzschild-de Sitter case.
\end{remark}

Before discussing our construction to establish Theorem \ref{InformalMe} in more detail, the following section provides a brief review of what is known about the \textit{forward} problem (\ref{Problematic}).

\subsection{Comparison with the forward problem}
Understanding the behaviour of general solutions of the nonlinear initial boundary value problem (\ref{Problematic}) is key to the wider \textit{black hole stability problem} (see the discussion of Section \ref{sec:Motivation}) and is a difficult open question. Holzegel and Smulevici \cite{KGDecay} derived sharp logarithmic decay of general solutions of (\ref{Problematic}) for the class of \textit{linear} ($\mathcal{F}=0$) problems, in the setting where the \textit{Hawking-Reall bound}
\begin{align}
    \label{HawkingReall}
    |a|\ell<r_+^2
\end{align}
is satisfied. Informally, their result can be stated as follows.

\begin{theorem}[Logarithmic decay of general linear waves on the Kerr-AdS exterior \cite{KGDecay}]
    \label{InformalGustav}
    Consider a fixed Kerr-AdS background $(\mathcal{M},g)$ satisfying the Hawking-Reall bound (\ref{HawkingReall}). General solutions of the initial boundary value problem for
    \begin{equation}
    \label{LinEq}
        \Box_g\psi+\alpha\psi=0
    \end{equation}
    with Dirichlet boundary conditions on the subextremal Kerr-AdS exterior decay logarithmically and no faster.
\end{theorem}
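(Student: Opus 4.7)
The plan is to establish both the upper bound (uniform logarithmic decay) and the matching lower bound (sharpness) by reducing the problem to understanding the behaviour of the cutoff resolvent of the stationary operator $P(\sigma)=\Box_g+\alpha-\ldots$ near the real axis, and to the construction of high-frequency quasimodes. Both halves are governed by the same geometric feature: the presence of \emph{stably trapped} null geodesics in the Kerr-AdS exterior. Unlike in the asymptotically flat case, the reflecting boundary condition at $\mathcal{I}$ confines the null geodesics that linger near the photon sphere, so the trapping is of elliptic type, and the standard Regge--Wigner potential obtained after separation of variables has a genuine potential well for all sufficiently large frequencies.

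For the upper bound, I would first perform a formal mode decomposition and study the radial ODE obtained after separating the $(t,\tilde\phi)$-variables and the oblate spheroidal angular equation. Using Agmon/Carleman-type estimates for the radial operator in the potential well, one shows a bound of the form $\|\chi R(\sigma)\chi\|_{L^2\to L^2}\lesssim e^{C|\sigma|}$ for $\sigma\in\mathbb{R}$ with $|\sigma|$ large, where $\chi$ cuts off in a compact region of $r$ and $R(\sigma)=P(\sigma)^{-1}$. The Hawking--Reall bound (\ref{HawkingReall}) is used here to ensure a positive (Killing) energy current adapted to the horizon generator, ruling out real-frequency mode instabilities and making $R(\sigma)$ exist on the real line. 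Combining this high-frequency exponential resolvent estimate with a low-frequency non-quantitative bound (via unique continuation / Holmgren near $\mathcal{I}$), a Burq--type Tauberian argument then converts the exponential resolvent bound into a $(\log(2+t))^{-k}$ decay of the local energy, where $k$ counts the number of higher commuted derivatives controlled on the initial data. To upgrade local energy decay to global energy decay, one finally uses a red-shift estimate at $\mathcal{H}^+$ together with the twisted energy of Holzegel--Warnick \cite{Twisted} to handle the conformal infinity.

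For the sharpness, I would construct quasimodes concentrating on the trapped null geodesics at the photon sphere. Via a Gaussian-beam / WKB ansatz centred on the circular orbits of fixed $r=r_{\mathrm{ph}}(a,M,\ell)$, one produces, for every $N\in\mathbb{N}$, a sequence $(\psi_n,\sigma_n)$ with $\sigma_n\to\infty$ real and
\begin{equation*}
    \bigl\|(P(\sigma_n)-\sigma_n^2)\psi_n\bigr\|_{L^2}\lesssim \sigma_n^{-N}\|\psi_n\|_{L^2},
\end{equation*}
each $\psi_n$ supported in a small neighbourhood of $r_{\mathrm{ph}}$ and vanishing at $\mathcal{I}$. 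Evolving such data forwards (and relating the quasimode to a genuine solution via a resolvent/duality argument) produces a family of solutions whose local energy cannot decay faster than any inverse power of $\log t$, forcing the upper bound to be optimal.

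The main obstacle, in my view, is the high-frequency resolvent estimate, and specifically handling the combination of the superradiant-like frequency regime (which exists even under (\ref{HawkingReall}) at the level of the separated angular/radial system) with the absence of a globally causal Killing field. One must carefully choose a frequency-dependent multiplier that simultaneously dominates the superradiant contribution, survives the trapping at $r_{\mathrm{ph}}$, and is compatible with the Dirichlet condition at $\mathcal{I}$; this is precisely where the Hawking--Reall bound enters quantitatively. The quasimode construction, while technically delicate, is by comparison a more mechanical application of the stability of the trapped set, once one has verified that the effective potential has a strict local maximum at $r_{\mathrm{ph}}$.
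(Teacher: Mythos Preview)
The paper you are reviewing does \emph{not} contain a proof of this theorem. Theorem~\ref{InformalGustav} is stated purely as background from the literature: the upper bound is attributed to Holzegel--Smulevici \cite{KGDecay} and the sharpness (the ``no faster'' clause) to \cite{SharpDecay}. The present paper's contribution is the scattering construction of Theorem~\ref{InformalMe}, and Theorem~\ref{InformalGustav} appears only in Section~1.2 to contrast the forward problem with the backwards construction. There is therefore nothing in this paper to compare your proposal against.

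That said, your sketch is broadly in the spirit of the cited works, with one notable difference in framing. The original proof of the upper bound in \cite{KGDecay} does not proceed via a cutoff resolvent estimate and a Burq-type Tauberian theorem; rather, it carries out a full separation of variables in physical space (Carter separation on Kerr-AdS), proves an integrated local energy decay estimate with an exponential loss in the frequency parameter coming from the stable trapping, and then interpolates to obtain logarithmic decay of the energy. Your resolvent/Tauberian packaging is a legitimate alternative route to the same conclusion (and is closer to Gannot's later work \cite{Gannot1,Gannot2}), but it is not what \cite{KGDecay} actually does. For the lower bound your quasimode idea is exactly the mechanism of \cite{SharpDecay}. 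One small correction: under the Hawking--Reall bound \eqref{HawkingReall} the Hawking vector field $T+\omega_{\mathcal{H}^+}\Phi$ \emph{is} globally causal on the exterior, which is precisely what makes the forward energy argument close without superradiant growth; your remark about the ``absence of a globally causal Killing field'' applies outside \eqref{HawkingReall}, not within it.
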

See also the work of Gannot who constructs quasinormal mode solutions of (\ref{LinEq}) on Kerr-AdS \cite{Gannot1,Gannot2}.\par

When the Hawking-Reall bound is violated, behaviour of solutions of (\ref{Massive}) can generally be much worse. This was exhibited by Dold \cite{Dold} who produced solutions of (\ref{LinEq}) which grow exponentially in time, for parameters violating (\ref{HawkingReall}).

\begin{theorem}(Exponentially growing mode solutions of (\ref{LinEq}) \cite{Dold})
\label{thm:Dold}
    For any $\ell$, there is a Kerr-AdS exterior (violating the Hawking-Reall bound (\ref{HawkingReall})) for which equation (\ref{LinEq}) with Dirichlet boundary conditions at $\mathcal{I}$ admits an exponentially growing mode solution.
\end{theorem}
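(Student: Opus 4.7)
The plan is to reduce the problem to radial mode analysis and exhibit an unstable mode via a continuity argument. First, separate variables in Boyer-Lindquist coordinates using the ansatz $\psi = e^{-i\omega t}e^{im\tilde\phi}S_{m\ell}(\theta;a\omega)R(r)$, where $S_{m\ell}$ is a (generalised) oblate spheroidal harmonic with eigenvalue $\lambda_{m\ell}(a\omega)$. This reduces (\ref{LinEq}) to a Schr\"odinger-type radial ODE in an appropriate tortoise coordinate, supplemented by (i) a purely ingoing boundary condition at $\mathcal{H}^+$ and (ii) the Dirichlet condition $r^{3/2-s}R\to 0$ at $\mathcal{I}$. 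The task becomes to exhibit a non-trivial $R$ compatible with both boundary conditions at some $\omega$ with $\mathrm{Im}(\omega)>0$.

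The key structural fact is that when $|a|\ell>r_+^2$, the Killing field $\partial_t+\Omega_H\partial_{\tilde\phi}$ (with $\Omega_H$ the horizon angular velocity), which is null on $\mathcal{H}^+$, becomes spacelike near $\mathcal{I}$, so that no globally causal Killing field exists in the exterior. Consequently, the conserved energy derived from $\partial_t$ is indefinite on modes in the superradiant regime $0<\omega<m\Omega_H$. Since $\mathcal{I}$ acts as a reflecting wall under Dirichlet conditions, such a mode cannot radiate out and one expects exponential growth.

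To produce one rigorously, use a continuity argument in the black-hole parameter. In a limiting degenerate regime (for instance $M\to 0$), the problem reduces to pure AdS with Dirichlet data, whose radial spectrum is discrete, purely real, and accumulates at infinity. Reintroducing the horizon and replacing regularity at $r=0$ by the purely ingoing condition at $r=r_+$ perturbs each real eigenfrequency $\omega_\ast$ into the complex $\omega$-plane. A Green's-identity computation, testing the perturbed radial equation against $\overline{R}$ and integrating by parts across $[r_+,\infty)$, yields a leading correction of the schematic form $\delta\omega = i c\,(m\Omega_H - \omega_\ast)$ with $c>0$ depending on the unperturbed mode. In the Hawking-Reall violating regime one can then select the parameters $(M,a,\ell,m)$ so that $m\Omega_H>\omega_\ast$ for some eigenmode in the family, giving $\mathrm{Im}(\delta\omega)>0$ and hence exponential growth.

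The main obstacle will be making this perturbative picture rigorous: one must uniformly control the AdS Dirichlet spectrum as parameters vary, handle the singular replacement of the regularity condition at the origin by the ingoing condition at $\mathcal{H}^+$, and verify that the perturbed eigenvalue persists in the discrete spectrum of the resulting non-self-adjoint problem rather than dissolving into continuous or essential spectrum. A cleaner, less constructive alternative is a variational route: exhibit a trial field $\psi_{\mathrm{trial}}$ satisfying both boundary conditions and making the $\partial_t$-energy strictly negative, then appeal to abstract spectral theory for the generator of $t$-translations on a suitable Hilbert space to extract an unstable mode with $\mathrm{Im}(\omega)>0$.
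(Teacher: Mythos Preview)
The paper does not prove this theorem; it is stated as a citation of Dold's result \cite{Dold} in the introductory discussion, with no proof or sketch given in the body of the paper. So there is no ``paper's own proof'' to compare against.

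That said, your sketch is in the right spirit but diverges from Dold's actual argument in a significant way. Dold does not take the singular $M\to 0$ limit you propose; instead he fixes black-hole parameters (violating the Hawking--Reall bound) and varies the Klein--Gordon mass $\alpha$. He first establishes that for a critical value of $\alpha$ there exists a \emph{real} (periodic) mode solution satisfying both boundary conditions, and then perturbs $\alpha$ away from this critical value to push the eigenfrequency into the upper half-plane. The advantage of this route is that the boundary conditions at $\mathcal{H}^+$ and $\mathcal{I}$ are held fixed throughout the deformation, avoiding precisely the obstacle you flag --- the singular transition from regularity at $r=0$ to an ingoing condition at $r=r_+$. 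Your $M\to 0$ approach would require controlling that transition uniformly, which is genuinely hard and, as you correctly note, is where the proposal would stall. Your variational alternative (negative $\partial_t$-energy test function plus abstract spectral theory) is closer to what is sometimes done for black-hole bomb instabilities, but making the spectral-theoretic step rigorous for this non-self-adjoint problem is itself nontrivial and is not how Dold proceeds.
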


The proof of Theorem \ref{InformalGustav} relies on frequency space analysis of solutions of the forwards problem (\ref{Problematic}). This allows the authors to deal with geometric features of the Kerr black-hole exterior, such as superradiance and stable trapping, which cannot be fully understood in physical space. On the other hand, as we shall see, solving backwards from (sufficiently fast) \textit{exponentially decaying} scattering data for (\ref{Massive}) not only makes it possible to treat the more general \textit{nonlinear} $(\mathcal{F}\neq 0)$ equations of Theorem \ref{InformalMe}, but also allows one to use purely physical space methods since the exponential decay estimates render the procedure insensitive to both trapping and superradiance.\par

One may expect that, for slower than exponentially decaying scattering data on $\mathcal{H}^+$, the construction of Theorem \ref{InformalMe} can be generalised to again construct future global solutions on the exterior. Such solutions would now generically be singular at the event horizon, however. Characterising the scattering data which give rise to regular solutions at $\mathcal{H}^+$, which thus arise as solutions of the forward problem (\ref{Problematic}) with regular Cauchy data, is an interesting and difficult open problem.

\subsection{Outline of the proof}
As noted, Theorem \ref{InformalMe} is proven by way of a scattering construction, inspired by that of Dafermos, Holzegel and Rodnianski \cite{SchwzScattering} in the $\Lambda=0$ setting. This involves prescribing the exponentially fast decay of the solution to be constructed as data $h_{\mathcal{H}^+}$ along the event horizon of the Kerr-AdS black hole and evolving backwards to obtain a global towards the future solution of (\ref{Massive}).\par

It is worth noting that, conceptually, the construction on Kerr-AdS is not significantly more difficult than that on the spherically symmetric Schwarzschild-AdS spacetime.\par

The key difficulty to overcome, the \textit{blueshift effect} (see Section \ref{sec:Redshift} below), already arises at the level of treating a \textit{linear} problem (\ref{Limiting}). To extend to the full nonlinear problem, one must control an additional bulk term in the energy estimates. This is straightforward, provided $\mathcal{F}$ satisfies an appropriate assumption (\ref{FBound}).\par

The main step in the proof consists of obtaining a priori energy estimates for a sequence of approximating problems. We give an outline via a discussion of the linear and nonlinear aspects of the problem.\par

\subsubsection{Linear aspects: The redshift effect}
\label{sec:Redshift}
In studying the forward problem (\ref{Problematic}), one encounters \textit{gravitational redshift} (a feature of black holes with positive \textit{surface gravity}\footnote{The surface gravity of a Killing horizon is the constant $\kappa$ which satisfies $\nabla^{a}(T^b T_b)=-2\kappa T^a$, where $T$ is the Killing field which generates the horizon.}) at the event horizon, giving rise to an exponential decay mechanism. In the backwards problem, instead of this key stability mechanism, one sees a \textit{gravitational blueshift}.
Due to the lack of a globally uniformly timelike Killing vector field on Kerr-AdS, this means that \textit{any} energy estimate generated by a timelike vector field will produce a non-vanishing bulk term which could, potentially, drive exponential growth. This should be contrasted with forward-in-time propagation where a (uniformly timelike) \textit{redshift vector field} $N$ can be carefully constructed to ensure all derivatives in the energy estimate appear with a good sign near the horizon. In order to close energy estimates and prove Theorem \ref{InformalMe}, we impose a rate of exponential decay on the scattering data which is stronger than the blueshift-generated exponential growth. Consequently, the methods used in this paper do not allow for scattering data which decay any slower. It is unclear how one might address backwards evolution of (\ref{Massive}) in the setting of polynomially decaying data, for instance\footnote{In the $\Lambda=0$ setting, it is conjectured that polynomially decaying perturbations of the Kerr event horizon give rise to \textit{weak null singularities} \cite{SchwzScattering}. One can expect similar behaviour in the $\Lambda<0$ setting.}. However, in the \textit{extremal} setting (the setting of equality in the first condition appearing in (\ref{ExtremalityCond})), the redshift effect degenerates ($\kappa=0$), and so one may be able to close estimates under weaker decay assumptions.\par

\begin{figure}
    \centering
    \includegraphics[width=0.25\textwidth]{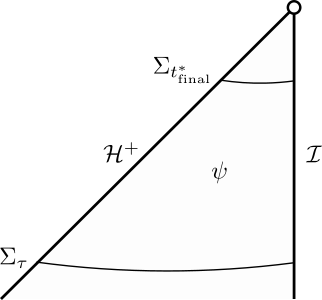}
    \caption{The region of integration for estimate (\ref{LayDownTheLaw}).}
    \label{fig:IntroRegion}
\end{figure}

To see how this works in practice, let us consider a backwards, finite-in-time, linear toy scattering problem which should be thought of as approximating, in the limit $t^*\to\infty$, the full problem
\begin{align}
\label{Limiting}
    \begin{cases}
        \Box_g\psi+\alpha\psi= 0,\quad\alpha<\frac{9}{4}\\
        \psi|_{\mathcal{H}^+}=h_{\mathcal{H}^+},\quad r^{\frac{3}{2}-s}\psi|_{\mathcal{I}}=0,\quad s=\sqrt{\frac{9}{4}-\alpha}.
    \end{cases}
\end{align}
In (\ref{Limiting}), the smooth scattering data $h_{\mathcal{H}^+}$ satisfies
\begin{align}
    \label{ADecay}
    F_{\mathcal{H}^+}[h_{\mathcal{H}^+}]\leq C\exp(-B\cdot\tau),
\end{align}
for some $C>0$ and some constant $B>0$ sufficiently large, depending on the surface gravity $\kappa$. Here, $F_{\mathcal{H}^+}$ is an $H^1$-norm of $h_{\mathcal{H}^+}$, a truncated form of which will appear as the horizon flux term in a subsequent energy inequality (\ref{LayDownTheLaw}). One must also assume higher-order analogues of (\ref{ADecay}) but, for now, the estimate will only be demonstrated to first order.\par

The toy problem we consider is one with mixed spacelike and null scattering data: 
\begin{align}
\label{ODEProblem}
    \begin{cases}
        \Box_g\psi+\alpha\psi= 0,\quad\alpha<\frac{9}{4}\\
        \psi|_{\mathcal{H}^+\cap\{t^*\leq t^*_{\text{final}}\}}=\tilde{\chi}(t^*)h_{\mathcal{H}^+},\quad(\psi,\partial_{t^*}\psi)|_{\Sigma_{t^*_{\text{final}}}}=0,\quad r^{\frac{3}{2}-s}\psi|_{\mathcal{I}}=0,\quad s=\sqrt{\frac{9}{4}-\alpha}.
    \end{cases}
\end{align}
Here, the scattering data on the horizon is truncated via an appropriate cut-off function $\tilde{\chi}$ which vanishes with all derivatives at $t^*_{\text{final}}$. Note that the energy $F_{\mathcal{H}^+\cap[\tau,t^*_{\text{final}}]}=F_{\mathcal{H}^+}[\tilde{\chi}(t^*)h_{\mathcal{H}^+}]$ of the cut-off scattering data is controlled by that without the cut-off (and in turn controlled by (\ref{ADecay})). Local well-posedness of problem (\ref{ODEProblem}) follows from a characteristic data analogue of \cite{WP}. That problem (\ref{ODEProblem}) indeed approximates (\ref{Limiting}) follows from the iteration procedure in Section \ref{sec:Proof} with $\mathcal{F}=0$. \par

We define the non-degenerate energy (expressed in regular coordinates $(t^*,r,\theta,\phi)$)
\begin{align*}
    E[\psi](t^*)=\int_{\Sigma_{t^*}}\bigg[\frac{1}{r^2}(\partial_{t^*}\psi)^2+r^2(\partial_r\psi)^2+|\slashed\nabla\psi|^2+\psi^2\bigg]\mathrm{vol}_{\Sigma_{t^*}}.
\end{align*}
Multiplying the equation in (\ref{ODEProblem}) by $N\psi$, where $N$ is a uniformly timelike vector field equal to the stationary Killing field $T=\partial_{t^*}$ at infinity, and integrating by parts over the spacetime strip $\mathcal{D}_{[\tau,t^*_{\text{final}}]}=\mathcal{M}\cap[\tau,t^*_{\text{final}}]$ yields the energy estimate\footnote{In fact, the zeroth-order term arising from using $N$ (or $T$) as a multiplier appears with a bad sign and, as such, some additional work is required to show that this multiplier controls $E[\psi]$. This is carried out in Section \ref{sec:Coercivity}.}
\begin{align}
\label{LayDownTheLaw}
    E[\psi](\tau)\leq F_{\mathcal{H}^+\cap[\tau,t^*_{\text{final}}]}+F_{\mathcal{I}\cap[\tau,t^*_{\text{final}}]}+C_{\kappa}\int_{\tau}^{t^*_{\text{final}}}E[\psi](t^{*}{'})\mathrm{d}t^{*}{'}.
\end{align}
Here, $F_{\mathcal{H}^+}$ and $F_{\mathcal{I}}$ are the flux terms on the event horizon and future null infinity, respectively. The constant $C_{\kappa}>0$ depends linearly on the surface gravity $\kappa$.\par

Applying the Dirichlet boundary conditions and assumed decay (\ref{ADecay}) along the horizon, (\ref{LayDownTheLaw}) becomes
\begin{align}
    \label{EgIneq}
    E[\psi](\tau)\leq C\exp(-B\cdot t^*)+C_{\kappa}\int_{\tau}^{t^*_{\text{final}}}E[\psi](t^{*}{'})\mathrm{d}t^{*}{'}.
\end{align}
Via a Gr\"onwall inequality, (\ref{EgIneq}) implies
\begin{align}
    E[\psi](\tau)\leq& C\exp(-B\cdot \tau)+\int_{\tau}^{t^*_{\text{final}}}C_k C\exp(-B\cdot t^*)\exp\bigg(\int_{\tau}^{t^*}C_{\kappa}\mathrm{d}s\bigg)\mathrm{d}t^*\nonumber\\
    =&C\exp(-B\cdot \tau)+\int_{\tau}^{t^*_{\text{final}}}C_k C\exp\Big((C_{\kappa}-B) t^*\Big)\mathrm{d}t^*\nonumber\\
    \leq&C\bigg(1+\frac{C_{\kappa}}{B-C_{\kappa}}\bigg)\exp(-B\cdot\tau)\nonumber\\
    =&\tilde{C}\exp(-B\cdot\tau),\label{PostGronwall}
\end{align}
with
\begin{align*}
    \tilde{C}=C\bigg(1+\frac{C_{\kappa}}{B-C_{\kappa}}\bigg)>0,
\end{align*}
provided that the rate $B>0$ of decay (\ref{ADecay}) along the horizon is strictly greater than the constant $C_{\kappa}$.

\subsubsection{Linear aspects: Controlling higher-order derivatives}
In order to prove Theorem \ref{InformalMe}, one requires estimates analogous to (\ref{PostGronwall}) for higher order energies. Commuting with $T$, $N$ and $\Phi=\partial_{\phi}$ (the Killing vector field associated with the axial symmetry of the background) gives non-degenerate control of higher order time (and mixed time) derivatives. This is carried out in Section \ref{sec:EgEst}, for a linear inhomogeneous problem. Commuting with $T$ and $\Phi$ is trivial, however, commuting with $N$ generates non-vanishing commutator terms. On Schwarzschild-AdS, one could also commute with the angular momentum operators (without generating further additional bulk terms) to control the missing angular derivatives. However, outwith spherical symmetry, one must take more care. There are many ways to resolve this (and the terms generated by commuting with $N$) but, for simplicity, here this is achieved in Section \ref{sec:Elliptic} by constructing estimates from the equation (\ref{Massive}). Note that the estimates obtained in this manner are not optimal. In particular the $L^{\infty}$ Sobolev embedding (Theorem \ref{SliceSobolev}) one can prove using these estimates gives only $r^{-\frac{1}{2}}$ decay of $\psi$, $|\slashed\nabla\psi|$. One could improve these estimates by commuting with the angular momentum operators and treating the resulting bulk contributions as error terms, yielding $r^{-\frac{3}{2}}$ decay of $\psi$ and angular derivatives with a stronger weight $|r\slashed\nabla\psi|$, plus further improvements, depending on the Breitenlohner-Freedman bound. However, it will become clear that the weaker decay is sufficient to close the argument, even in the nonlinear case.\par

To control all missing spatial derivatives at general order, we require a four level hierarchy of $L^2$ estimates, which appear in Section \ref{sec:Elliptic}. The first level (\textbf{I}) in the hierarchy controls general derivatives $D^{\sigma}\psi$ on the ``far" region (where $T$ is uniformly timelike). These elliptic estimates are stated in Section \ref{sec:FarElliptic}. Level (\textbf{II}) controls derivatives of the form $N^{\sigma_1}T^{\sigma_2}\Phi^{\sigma_3}D_x^2\psi$, i.e., those which involve only two derivatives in general spatial directions, on the entire exterior region. These estimates are derived in Section \ref{sec:Level1}. Level (\textbf{III}) contains the ``near" region analogues of those estimates at level (\textbf{I}), and are derived in Section \ref{sec:GenNear}. Finally, the fourth level (\textbf{IV}) of $L^2$ estimates, appearing in Section \ref{sec:Level2}, control general derivatives $D^{\sigma}\psi$ on the entire exterior region. This four-level hierarchy of estimates must already be carried out at the linear level. In the following, we include terms arising from the nonlinearity which must eventually be dealt with by nonlinear estimates, but note that the estimates hold at the linear level without these terms being present.\par

\addtocontents{toc}{\protect\setcounter{tocdepth}{2}}
\subsubsection*{I Estimating $D^{\sigma}\psi$ on the ``far" region}
\addtocontents{toc}{\protect\setcounter{tocdepth}{3}}
To construct the elliptic estimate of Proposition \ref{prop:FarKthOrder} (appearing in Section \ref{sec:FarElliptic}), we begin by estimating all second order spatial derivatives on the ``far" region (captured by Proposition \ref{prop:Far2ndOrder}). To do so, we rewrite the equation (\ref{Massive}) (in regular $(t^*,r,\theta,\phi)$ coordinates) in the form
\begin{align}
    \label{TElliptic}
    (\slashed\Delta + g^{rr}\partial_r^2+2g^{\phi r}\partial_{\phi}\partial_r)\psi = &\Box_g\psi+\alpha\psi+\ldots,
\end{align}
where $\slashed\Delta$ denotes the Laplace-Beltrami operator of the spheres of constant $(t^*,r)$. The $\ldots$ represent terms controlled by the first-order energy and the energy after one commutation with the Killing vector field $T$. On the ``far" region, where $T$ is timelike, the operator in brackets on the left-hand side of (\ref{TElliptic}) is elliptic. Via a standard argument, this allows one to control all spatial derivatives on the ``far" region by commuting with $T$ (and eventually an estimate for the nonlinearity $\mathcal{F}=\Box_g\psi+\alpha\psi$. See Proposition \ref{prop:Far2ndOrder}):
\begin{align}
    &\int_{\Sigma_{t^*}}\tilde{\xi}(r)\bigg(|\slashed\nabla^2\psi|^2+r^2|\slashed\nabla\partial_r\psi|^2+r^4(\partial_r^2\psi)^2\bigg)r^2\mathrm{d}r\mathrm{d}\omega\nonumber\\
        \leq& C\bigg(E[\psi](t^*)+E[T\psi](t^*)+\int_{\Sigma_{t^*}}(\Box_g\psi+\alpha\psi)^2r^2\mathrm{d}r\mathrm{d}\omega\bigg),\label{SchemFar}
\end{align}
for some $C>0$. Here, $\tilde{\xi}(r)$ is supported on the ``far" region where $T$ is timelike. Successively applying $\Box_g$ to (\ref{TElliptic}) and commuting sufficiently many times with $T$ then allows one to control all derivatives on the ``far" region in this manner (Proposition \ref{prop:FarKthOrder}):
\begin{align}
         &\sum_{|\sigma|= k}\int_{\Sigma_{t^*}}\tilde{\xi}(r)r^{2\sigma_2}|\slashed\nabla^{\sigma_1}\partial_{t^*}^{\sigma_2}\partial_r^{\sigma_3}\psi|^2 r^{2}\mathrm{d}r\mathrm{d}\omega\nonumber\\
        \leq &C\sum_{\substack{|\alpha|\leq k-1,\\ |\beta|\leq k- 2,\\|\gamma|\leq\lceil\frac{k-2}{2}\rceil}}\bigg[E[T^{\alpha}\psi](t^*)+\int_{\Sigma_{t^*}}\bigg(\Big(T^{\beta}(\Box_g\psi+\alpha\psi)\Big)^2+\Big(\Box_{g}^{\gamma}(\Box_g\psi+\alpha\psi)\Big)^2 \bigg)r^2\mathrm{d}r\mathrm{d}\omega\bigg],\label{SchemKFar}
     \end{align}
for some $C>0$, $k\geq 2$. The last two terms on the right-hand side of (\ref{SchemKFar}) will eventually be controlled by an estimate for $\mathcal{F}$.

\addtocontents{toc}{\protect\setcounter{tocdepth}{2}}
\subsubsection*{II(A) Estimating $D_x^2\psi$ on the entire exterior}
\addtocontents{toc}{\protect\setcounter{tocdepth}{3}}
To construct the estimate of Proposition \ref{prop:2ofk} (appearing in Section \ref{sec:Level1}) for derivatives $N^{\sigma_1}T^{\sigma_2}\Phi^{\sigma_3}D_x^{2}\psi$, we begin by estimating all second order spatial derivatives $D_x^2\psi$. To do so, we first note that the second term on the left-hand side of (\ref{TElliptic}) degenerates at the event horizon (where $g^{rr}=0$). Furthermore, the third term carries a potentially bad sign which plays a role outside of the large-$r$ regime. To deal with the latter, we rewrite (\ref{TElliptic}) as
\begin{align}
    \label{TElliptic2}
    \slashed\Delta\psi + g^{rr}\partial_r^2\psi = &\Box_g\psi+\alpha\psi+\ldots,
\end{align}
where the $\ldots$ represent terms controlled by the first-order energy and the energy after one commutation with each of the Killing vector fields $T$, $\Phi$. Away from the horizon (where $g^{rr}\neq 0$), (\ref{TElliptic2}) allows one to estimate all second order spatial derivatives by the energy after one commutation with each of $T$, $\Phi$ (and eventually estimates for the nonlinearity $\mathcal{F}=\Box_g\psi+\alpha\psi$). To counteract the degeneration at the horizon, we rewrite the $T^2\psi$ term on the right-hand side of (\ref{TElliptic2}) in terms of the redshift vector field $N$, which (close to the horizon) carries a non-degenerate $\partial_r$ term:
\begin{align*}
    N = \Big(1+\sqrt{-g^{t^*t^*}}\Big)\partial_{t^*}-\frac{g^{t^*r}}{\sqrt{-g^{t^*t^*}}}\partial_r-\frac{g^{t^*\phi}}{\sqrt{-g^{t^*t^*}}}\partial_{\phi}.
\end{align*}
This gives
\begin{align}
    \label{SchematicElliptic}
    \slashed\Delta\psi + \Big(g^{rr}+\xi(r)\Big)\partial_r^2\psi= &\Box_g\psi+\alpha\psi+\mathcal{O}\bigg(\frac{1}{r^2}\bigg)N^2\psi-\ldots,
\end{align}
where $\xi(r)$ is a cut-off function supported on the horizon and up to a finite radius. The $\ldots$, as before, represent terms controlled by the first-order energy and the energy after one commutation with each of $T$, $\Phi$. Multiplying (\ref{SchematicElliptic}) by $\slashed\Delta\psi$, integrating by parts on the left and applying Cauchy-Schwarz on the right allows one to obtain control of $\slashed\nabla^2\psi$, $\slashed\nabla\partial_r\psi$, provided care is taken with the $r$-weights which appear. The first two terms on the right-hand side of (\ref{SchematicElliptic}) will eventually be controlled by estimates for the nonlinearity $\mathcal{F}$. This gives
\begin{align}
    &\int_{\Sigma_{t*}}\bigg((\slashed\Delta\psi)^2+r^2|\slashed\nabla\partial_r\psi|^2\bigg)r^2\mathrm{d}r\mathrm{d}\omega\nonumber\\
    \leq& C\bigg[E[\psi](t^*)+E[T\psi](t^*)+E[N\psi](t^*)+E[\Phi\psi](t^*)+\int_{\Sigma_{t^*}}(\Box_g\psi+\alpha\psi)^2r^2\mathrm{d}r\mathrm{d}\omega\bigg],\label{PreClose}
\end{align}
for some $C>0$. The integrand of the final term on the right-hand side of (\ref{PreClose}) will eventually be replaced by $\mathcal{F}^2$ and estimated. Estimate (\ref{PreClose}) does not close immediately due to the presence of the $E[N\psi]$ term. However, after commuting sufficiently many times with $N$, this is eventually controlled (via a Gr\"onwall inequality) by estimates for the (higher order) $N$-commuted energy, which involve commutator terms supported on a compact $r$ region (see Proposition \ref{prop:LeaveFAlone}). The second order radial derivative is controlled on the near region by $E[N\psi]$ and on the far region by estimate (\ref{SchemFar}), so combining (\ref{SchemFar}) and (\ref{PreClose}) yields (see Proposition \ref{prop:SpacelikeElliptic})
\begin{align}
    &\int_{\Sigma_{t*}}\bigg((\slashed\Delta\psi)^2+r^2|\slashed\nabla\partial_r\psi|^2+r^2(\partial_r^2\psi)^2\bigg)r^2\mathrm{d}r\mathrm{d}\omega\nonumber\\
    \leq& C\bigg[E[\psi](t^*)+E[T\psi](t^*)+E[N\psi](t^*)+E[\Phi\psi](t^*)+\int_{\Sigma_{t^*}}(\Box_g\psi+\alpha\psi)^2r^2\mathrm{d}r\mathrm{d}\omega\bigg],
\end{align}
for some $C>0$.

\addtocontents{toc}{\protect\setcounter{tocdepth}{2}}
\subsubsection*{II(B) Estimating $N^{\sigma_1}T^{\sigma_2}\Phi^{\sigma_3}D_x^{2}\psi$ on the entire exterior}
\addtocontents{toc}{\protect\setcounter{tocdepth}{3}}

To produce the higher order estimates in this level of the hierarchy (Proposition \ref{prop:2ofk}), we successively commute (\ref{SchematicElliptic}) with $N$, $T$ and $\Phi$ and repeat the same procedure. Since $N$ does not commute with $\Box_g$, doing so generates additional terms at each step. However, these terms are always of lower order in $N$ and involve at most two general derivatives, so are immediately controlled by lower order estimates. This gives (see Proposition \ref{prop:2ofk})
\begin{align}
\label{SchemCom1Ellip}
    &\sum_{|\sigma|=k-2}\int_{\Sigma_{t*}}\bigg((\slashed\Delta \Gamma^{\sigma}\psi)^2+r^2|\slashed\nabla \partial_r\Gamma^{\sigma}\psi|^2+r^2(\partial_r^2\Gamma^{\sigma}\psi)^2\bigg)r^2\mathrm{d}r\mathrm{d}\omega\nonumber\\
    \leq& C\sum_{\substack{|\alpha|\leq k-1,\\|\beta|\leq k-2}}\bigg[E[\Gamma^{\alpha}\psi](t^*)+\int_{\Sigma_{t^*}}\Big(\Gamma^{\beta}(\Box_g\psi+\alpha\psi)\Big)^2 r^2\mathrm{d}r\mathrm{d}\omega\bigg],
\end{align}
for some $C>0$, $k\geq 2$, where $\Gamma^{\sigma}=N^{\sigma_1}T^{\sigma_2}\Phi^{\sigma_3}$.

\addtocontents{toc}{\protect\setcounter{tocdepth}{2}}
\subsubsection*{III(A) Estimating $N^{\sigma_1}T^{\sigma_2}\Phi^{\sigma_3}\partial_{\theta}^2\psi$ on the ``near" region}
\addtocontents{toc}{\protect\setcounter{tocdepth}{3}}
Before estimating all general derivatives at level \textbf{III} (Theorem \ref{thm:kNear}), as is carried out in Section \ref{sec:GenNear}, we note that (\ref{SchemCom1Ellip}) implies the estimate
 \begin{align}
        &\int_{\Sigma_{t^*}}\sum_{|\sigma|= k-2}\xi(r)(\partial_{t^*}^{\sigma_1}\partial_r^{\sigma_2}\partial_{\phi}^{\sigma_3}\partial_{\theta}^{2}\psi)^2r^2\mathrm{d}r\mathrm{d}\omega\nonumber\\
        \leq&C\sum_{\substack{|\alpha|\leq k-1,\\ |\beta|\leq k-2}}\bigg[E[\Gamma^{\alpha}\psi](t^*)+\int_{\Sigma_{t^*}}\Big(\Gamma^{\beta}(\Box_g\psi+\alpha\psi)\Big)^2 r^2\mathrm{d}r\mathrm{d}\omega\bigg],\label{Schem2Theta}
    \end{align}
for derivatives of second order in $\partial_{\theta}$. Here, $\xi(r)$ (as above) is supported on the ``near" region, where $N$ has a non-vanishing $r$-component. As such, it only remains to control derivatives $N^{\sigma_1}T^{\sigma_2}\Phi^{\sigma_3}\partial_{\theta}^{\sigma_4}\psi$ for $\sigma_4>2$. 

\addtocontents{toc}{\protect\setcounter{tocdepth}{2}}
\subsubsection*{III(B) Estimating $D^{\sigma}\psi$ on the ``near" region}
\addtocontents{toc}{\protect\setcounter{tocdepth}{3}}
Let us demonstrate how to estimate $\partial^3_{\theta}\psi$, $\partial^4_{\theta}\psi$.\par

Rearranging (\ref{TElliptic}) so that only terms involving $\partial_{\theta}$ derivatives remain on the left-hand side, one has
\begin{align}
    \frac{g^{\theta\theta}}{\sqrt{|\det g|}}\partial_{\theta}(\sqrt{|\det g|})\partial_{\theta}\psi+\partial_{\theta}(g^{\theta\theta})\partial_{\theta}\psi+\slashed\Delta\psi=\Box_g\psi+\alpha\psi+\ldots,\label{ApplyLap}
\end{align}
where $\ldots$ represent terms controlled on the ``near" region by the first-order energy and estimate (\ref{Schem2Theta}). Applying $\slashed\Delta$ to (\ref{ApplyLap}) gives
\begin{align}
    \slashed\Delta\bigg(\frac{g^{\theta\theta}}{\sqrt{|\det g|}}\partial_{\theta}(\sqrt{|\det g|})\partial_{\theta}\psi+\partial_{\theta}(g^{\theta\theta})\partial_{\theta}\psi+\slashed\Delta\psi\bigg)=\slashed\Delta(\Box_g\psi+\alpha\psi)+\ldots.\label{ApplyLap2}
\end{align}
The $\ldots$ on the right-hand side of (\ref{ApplyLap2}) involve terms with derivatives of at most second order in $\theta$, and so are controlled by (\ref{Schem2Theta}). The first term on the right-hand side of (\ref{ApplyLap2}) will eventually be controlled via an estimate for $\mathcal{F}$. Multiplying (\ref{ApplyLap2}) by $\slashed\Delta\psi$, applying the Cauchy-Schwarz inequality to the right-hand side and integrating the $\slashed\Delta^2\psi\slashed\Delta\psi$ term on the left-hand side by parts in the angular directions gives control of the $\partial_{\theta}^3\psi$ derivative. The other terms on the left-hand side of (\ref{ApplyLap2}) can be absorbed via $\varepsilon$-Cauchy-Schwarz. This gives the estimate
\begin{align}
    &\int_{\Sigma_{t^*}}\xi(r)|\slashed\nabla^3\psi|^2 r^2\mathrm{d}r\mathrm{d}\omega\nonumber\\
  \leq& C\sum_{\substack{|\alpha|\leq 3,\\ |\beta|\leq 2}}\bigg[E[\Gamma^{\alpha}\psi](t^*)+\int_{\Sigma_{t^*}}\bigg(\Big(\Gamma^{\beta}(\Box_g\psi+\alpha\psi)\Big)^2+\Big(\slashed\Delta(\Box_g\psi+\alpha\psi)\Big)^2\bigg) r^2\mathrm{d}r\mathrm{d}\omega\bigg]\nonumber,
\end{align}
for some $C>0$.

Repeating with the higher order multiplier $\slashed\Delta^2\psi$ yields control of $\partial^4_{\theta}\psi$:
\begin{align*}
    &\int_{\Sigma_{t^*}}\xi(r)|\slashed\nabla^4\psi|^2r^2\mathrm{d}r\mathrm{d}\omega\nonumber\\
    \leq& C\sum_{\substack{|\alpha|\leq 3,\\ |\beta|\leq 2}}\bigg[E[\Gamma^{\alpha}\psi](t^*)+\int_{\Sigma_{t^*}}\bigg(\Big(\Gamma^{\beta}(\Box_g\psi+\alpha\psi)\Big)^2+\Big(\slashed\Delta(\Box_g\psi+\alpha\psi)\Big)^2\bigg) r^2\mathrm{d}r\mathrm{d}\omega\bigg]\nonumber,
\end{align*}
for some $C>0$.

By successively applying $\slashed\Delta$ and higher-order $\slashed\Delta\psi$ multipliers, one controls all $\partial_{\theta}^{\sigma}$ derivatives. Commuting with $T$, $N$ and $\Phi$ then gives (Theorem \ref{thm:kNear})
\begin{align}
        &\sum_{|\sigma|=k}\int_{\Sigma_{t^*}}\xi(r)r^{2\sigma_2}|\slashed\nabla^{\sigma_1}\partial_{t^*}^{\sigma_2}\partial_r^{\sigma_3}\psi|^2 r^{2}\mathrm{d}r\mathrm{d}\omega\nonumber\\
        \leq &C\sum_{\substack{|\alpha|\leq k-1,\\ |\beta|\leq k- 2,\\|\gamma|\leq\big\lceil\frac{k-2}{2}\big\rceil}}\bigg[E[\Gamma^{\alpha}\psi](t^*)+\int_{\Sigma_{t^*}}\bigg(\Big(\Gamma^{\beta}(\Box_g\psi+\alpha\psi)\Big)^2+\Big(\slashed\Delta^{\gamma}(\Box_g\psi+\alpha\psi)\Big)^2\bigg) r^2\mathrm{d}r\mathrm{d}\omega\bigg]\label{SchemNearK}
    \end{align}
for some $C>0$, $k\geq 2$.

\addtocontents{toc}{\protect\setcounter{tocdepth}{2}}
\subsubsection*{IV Estimating $D^{\sigma}\psi$ on the entire exterior}
\addtocontents{toc}{\protect\setcounter{tocdepth}{3}}
Finally, combining estimates (\ref{SchemKFar}) and (\ref{SchemNearK}) gives control of general derivatives on the entire exterior, as appears in Section \ref{sec:Level2} (Theorem \ref{thm:KElliptic}):
\begin{align}
        &\sum_{|\sigma|=k}\int_{\Sigma_{t^*}}r^{2\sigma_2}|\slashed\nabla^{\sigma_1}\partial_{t^*}^{\sigma_2}\partial_r^{\sigma_3}\psi|^2 r^{2}\mathrm{d}r\mathrm{d}\omega\nonumber\\
        \leq &C\sum_{\substack{|\alpha|\leq k-1,\\ |\beta|\leq k- 2,\\|\gamma|\leq\big\lceil\frac{k-2}{2}\big\rceil}}\bigg[E[\Gamma^{\alpha}\psi](t^*)+\int_{\Sigma_{t^*}}\bigg(\Big(\Gamma^{\beta}(\Box_g\psi+\alpha\psi)\Big)^2+\Big(\Box_g^{\gamma}(\Box_g\psi+\alpha\psi)\Big)^2\bigg) r^2\mathrm{d}r\mathrm{d}\omega\bigg]\nonumber,
    \end{align}
    for some $C>0$, $k\geq 2$.

\subsubsection{The nonlinear problem}
Let us now turn to the full nonlinear problem. We once again consider a finite-in-time approximation of the full nonlinear problem, analogous to (\ref{ODEProblem}) with a non-zero nonlinearity $\mathcal{F}$ on the right-hand side of the equation. This forms part of an iteration scheme (see Section \ref{sec:FiniteProblems}) whose limit will provide the desired result for the full problem. Local well-posedness for the finite-in-time nonlinear problems will not be proven here. However, it follows from an adaptation of the linear well-posedness result \cite{WP}.\par

The key ingredient of the proof of Theorem \ref{InformalMe} is an energy estimate, in the spirit of (\ref{LayDownTheLaw}), but now with an additional term arising from the nonlinearity $\mathcal{F}$:
\begin{align}
     E[\psi](\tau)\leq F_{\mathcal{H}^+\cap[\tau,t^*_{\text{final}}]}+\int_{\tau}^{t^*_{\text{final}}}\bigg(C_{\kappa}E[\psi](t^{*}{'})+\int_{\Sigma_{t^*}}|\mathcal{F}(\psi,\partial\psi)\cdot N\psi|\mathrm{vol}_{\Sigma_{t^*}}\bigg)\mathrm{d}t^{*}{'}.\label{SchematicNonlinear}
\end{align}
In order to proceed as before, using the assumed exponential decay and a Gr\"onwall inequality to derive exponential decay of $E[\psi](\tau)$, one must be able to bound this new term appropriately. Given that no null condition is required, this is easily insured, provided the general Condition \ref{NonlinAssump} is satisfied. In particular, one must ensure that an appropriately $r$-weighted spatial $L^2$ norm of $\mathcal{F}$ and its derivatives is well-controlled by the energies (\ref{FBound}). For simplicity it is assumed (assumption (\ref{Quadratic})) that $\mathcal{F}$ is quadratic in the unit derivatives $D\psi$ (\ref{UnitDerivatives}). It is shown in Proposition \ref{prop:AbstractEnergyEst} that this class of $\mathcal{F}$ indeed satisfies Condition \ref{NonlinAssump}. The restriction on $\mathcal{F}$ could be weakened to allow for additional growth in $r$ by commuting with angular momentum operators to deduce stronger radial decay, as discussed earlier. However, the class satisfying (\ref{Quadratic}) can be readily adapted without additional decay to the key application of interest, the Einstein vacuum equation in an appropriate gauge. This is discussed further in Section \ref{sec:F}. Condition \ref{NonlinAssump} easily allows for more general nonlinearities than those of the assumed form (\ref{Quadratic}), including higher order polynomials of $D\psi$, as well as additional linear terms.\par

Provided $\mathcal{F}$ satisfies appropriate assumptions (see Condition \ref{NonlinAssump}), one may then proceed as in the linear setting, commuting (\ref{SchematicNonlinear}) with $T$ and the redshift vector field $N$ (Section \ref{sec:EgEst}) and applying a Gr\"onwall inequality to derive exponential decay of the energies $E[T^{\alpha}N^{\beta}\psi]$ to sufficiently high order (Section \ref{sec:FiniteEstimates}). As was the case in the linear setting, one must employ elliptic estimates and the equation to control the ``missing" angular and radial derivatives (as is done in Section \ref{sec:Elliptic}).

\subsection{Relation to the Einstein equation}
\label{sec:Motivation}
It remains to present the problem in its wider context. The decay behaviour of solutions of (\ref{Problematic}) is closely related to the question of stability of the Kerr-AdS spacetimes as solutions of the \textit{Einstein vacuum equation} (EVE) 
\begin{align}
\label{Einstein}
    R_{\mu\nu}-\frac{1}{2}Rg_{\mu\nu}+\Lambda g_{\mu\nu} = 0
\end{align}
with negative cosmological constant $\Lambda=-\frac{3}{\ell^2}$ and the \textit{black hole stability problem}, more generally. In an appropriate choice of coordinates, (\ref{Einstein}) can be written as a system of wave equations. As such, it is expected that Theorem \ref{InformalMe} will play a central role in deducing existence of a class of exponentially decaying gravitational perturbations of the Kerr-AdS black hole exterior. The technical details of applying Theorem \ref{InformalMe} to (\ref{Einstein}) will be carried out in subsequent work. In order to move from Theorem \ref{InformalMe} to an analogous statement in full gravity, one must deal with two concerns - firstly, the \textit{quasilinear} nature of (\ref{Einstein}); secondly, its \textit{gauge-invariance}. The first concern is straightforward to address - Theorem \ref{InformalMe} is written as a statement for \textit{semilinear} equations, purely for the purpose of simplicity. One may easily extend to quasilinear equations mimicking the structure of (\ref{Einstein}), as one need not capture a null condition in the nonlinearity. The insensitivity of the scattering construction to trapping means, moreover, that no additional difficulties arise from treating quasilinear nonlinearities. The key difficulty of working with (\ref{Einstein}) is rather the second concern - that of fixing a gauge. There are many possible choices of gauge in which to work with (\ref{Einstein}), \textit{harmonic} and \textit{double-null gauge} being two commonly used in the literature. Let us address each of these in turn, highlighting their advantages and disadvantages, and the degree to which they translate (\ref{Einstein}) into a system of equations treatable via the methods appearing in this work.\par

In \textit{harmonic gauge}, (\ref{Einstein}) takes the form of a system of quasilinear massive wave equations, as derived by Enciso and Kamran \cite{EncisoKamran} in their well-posedness result for (\ref{Einstein}) in the asymptotically AdS setting. There, one sees that the system carries little in the way of geometric structure and admits multiple different masses $\alpha$. Each of these masses lies in the range to which Theorem \ref{InformalMe} applies (see the discussion in Section \ref{sec:F}).\footnote{Furthermore, note that the Sobolev spaces used herein are mass independent, so treating multiple masses is straightforward.} A drawback of harmonic gauge particular to the scattering problem is that it is not well-adapted to characteristic hypersurfaces on which one would like to pose data.\par

On the other hand, a \textit{double-null gauge} is, by design, suited to characteristic hypersurfaces. Under this choice of gauge, (\ref{Einstein}) becomes a system of wave and transport equations for spacetime null curvature components. Although the system of equations in double-null gauge is considerably more cumbersome than in the harmonic gauge case, one has the aesthetic advantage of wave equations admitting only the conformal mass $\alpha=2$ in $(3+1)$-dimensions. In linearised gravity with double-null gauge, two gauge-independent curvature components, $\tilde{\alpha}^{[s]}$ for $s=\pm 2$, satisfy the Teukolsky equations\footnote{One may see \cite{Teukolsky}, for example, for work in this direction in the Kerr-AdS setting.} - wave equations of the form (\ref{Massive}) with masses $\alpha=2$.\par

Given that one expects, via an appropriate choice of gauge, that Theorem \ref{InformalMe} can be applied to (\ref{Einstein}) to obtain a class of exponentially decaying gravitational perturbations of Kerr-AdS spacetimes, let us now address the wider context of the black hole stability problem. Moschidis \cites{NullDust,EinsteinVlasov} showed instability of pure AdS in the context of (\ref{Einstein}) in spherical symmetry coupled to appropriate matter models. Previously, Biz\`on and Rostworowski \cite{Numerics} derived an instability mechanism based on resonant frequencies and found instability for the spherically symmetric scalar field using numerical methods. Moreover, Theorem \ref{InformalGustav} suggests that general perturbations of Kerr-AdS spacetimes decay insufficiently fast to allow one to prove stability. Thus the authors of \cite{KGDecay} conjecture instability\footnote{Note, however, that \textit{linear} stability of the Schwarzschild-AdS subfamily has been proven \cite{Olivier1,Olivier2,Olivier3}.} of the Kerr-AdS exterior as a solution of (\ref{Einstein}) (see also, however, the discussion in \cite{Physicists}). As such, the class of exponentially decaying perturbations suggested by Theorem \ref{InformalMe} are expectedly non-generic. 

\subsection{Outline of the paper}
The paper has the following structure. Section \ref{Prelim} introduces the Kerr-AdS spacetime metric and associated quantities, as well as some vector fields which will be key to proving Theorem \ref{InformalMe}. The precise class of wave equations (\ref{Massive}) considered and associated boundary conditions are then specified. Finally, the energy spaces in which the analysis is carried out are defined, and some important energy quantities recalled. Section \ref{sec:Estimates} presents several physical space estimates which will be key for the proof. In Section \ref{sec:Result}, the main result (formulated informally in Theorem \ref{InformalMe}) is presented rigorously as Theorem \ref{MainResult}. Section \ref{sec:Proof} is concerned with the proof of Theorem \ref{MainResult}.

\section{Preliminaries}
\label{Prelim}
Before presenting the main result, the set up and key quantities are noted here. Firstly, in Section \ref{sec:KAdS} the Kerr-AdS metric and relevant spacetime vector fields are introduced. The equation, associated boundary conditions and scattering problem formulation follow in Section \ref{sec:Eqn}. In particular, the class of nonlinearity $\mathcal{F}$ in (\ref{Massive}) is specified at this stage. Section \ref{sec:Energies} presents the energy spaces in which solutions of the scattering problem will be analysed. Finally, in Section \ref{sec:EMTensor}, the energy-momentum tensor, associated quantities and energy identity appear.

\subsection{The Kerr-AdS family of spacetimes}
\label{sec:KAdS}
\subsubsection{The spacetime manifold and metric}
For fixed negative cosmological constant $\Lambda=-\frac{3}{\ell^2}$, the \textit{Kerr--Anti-de Sitter} (Kerr-AdS) spacetimes are a two-parameter family of solutions $(\mathcal{M}',g')$ of the Einstein vacuum equation (\ref{Einstein}). Each fixed choice of mass $M>0$ and each angular momentum $a$ such that (\ref{ExtremalityCond}) holds, where $r_+$ (the largest real root of $\Delta_-$, see (\ref{MetricQuantities}) below) is the event horizon radius, give a stationary, axisymmetric, rotating black hole with metric
\begin{align}
    \label{KAdS*Metric}
    g=&-\frac{\Delta_{-}-\Delta_{\theta}a^2\sin^2\theta}{\Sigma}\mathrm{d}{t^*}^2+\frac{1}{\Sigma(1+\frac{r^2}{\ell^2})^2}\bigg(\Delta_+-a^2\sin^2\theta\bigg(\bigg(1+\frac{r^2}{\ell^2}\bigg)+\frac{\Sigma}{\ell^2}\bigg)\bigg)\mathrm{d}r^2+\frac{\Sigma}{\Delta_{\theta}}\mathrm{d}\theta^2\nonumber\\
    &+\frac{\Delta_{\theta}(r^2+a^2)^2-\Delta_{-}a^2\sin^2\theta}{\Xi^2\Sigma}\sin^2\theta\mathrm{d}\phi^2-2\frac{\Delta_{\theta}(r^2+a^2)^2-\Delta_{-}}{\Xi\Sigma}a\sin^2\theta\mathrm{d}\phi\mathrm{d}t^*\nonumber\\
    &+\frac{2}{(1+\frac{r^2}{\ell^2})}\bigg(\frac{2Mr}{\Sigma}-\frac{a^2}{\ell^2}\sin^2\theta\bigg)\mathrm{d}t^*\mathrm{d}r
    -\frac{2a\sin^2\theta}{(1+\frac{r^2}{\ell^2})}\bigg(1+\frac{2Mr}{\Xi\Sigma}\bigg)\mathrm{d}\phi\mathrm{d}r
\end{align}
in coordinates $(t^*,r,\theta,\phi)$ on the \textit{exterior region} $\mathcal{M}=\mathbb{R}\times[r_+,\infty)\times S^2$. Here,
\begin{align}
    \Delta_{\pm} &=(r^2+a^2)\bigg(1+\frac{r^2}{\ell^2}\bigg)\pm 2Mr \qquad &\Sigma &= r^2+a^2\cos^2\theta  \nonumber\\
    \Delta_{\theta} &= 1-\frac{a^2}{\ell^2}\cos^2\theta &\Xi &= 1-\frac{a^2}{\ell^2},\label{MetricQuantities}
\end{align}
One can check that \textit{Anti-de Sitter} space and the one-parameter \textit{Schwarzschild-Anti-de Sitter} family arise as special cases for the choices $a=M=0$, or $a=0$, $M>0$, respectively.\par

We shall denote by $S^2_{t^*r}$ the sphere at a given time $t^*$ and radius $r$, with induced metric
\begin{align*}
    \slashed g = \frac{\Sigma}{\Delta_{\theta}}\mathrm{d}\theta^2+\frac{\Delta_{\theta}(r^2+a^2)^2-\Delta_{-}a^2\sin^2\theta}{\Xi^2\Sigma}\sin^2\theta\mathrm{d}\phi^2.
\end{align*}
In coordinates, the corresponding gradient and Laplace operators take the form
\begin{align*}
    \slashed\nabla f = \slashed g^{AB}\slashed\nabla_{A}f\cdot\slashed\nabla_{B},\quad
    \slashed\Delta = \slashed g^{AB}\slashed\nabla_{A}\slashed\nabla_{B}.
\end{align*}
Here, $f$ is a sufficiently smooth function.\par

The spacelike hypersurfaces $\Sigma_{t^*}$ of constant $t^*$ foliate the exterior region $\{r\geq r_+\}$, depicted in Figure \ref{fig:Penrose}. We shall denote by $n_{{\Sigma}_{t^*}}$ their future-directed unit normal vector
\begin{align}
\label{Norman}
    n_{\Sigma_{t^*}}=\sqrt{-g^{t^*t^*}}\partial_{t^*}-\frac{g^{t^*r}}{\sqrt{-g^{t^*t^*}}}\partial_r-\frac{g^{t^*\phi}}{\sqrt{-g^{t^*t^*}}}\partial_{\phi}
\end{align}
and by $\mathrm{vol}_{\Sigma_{t^*}}$ their induced volume form
\begin{align*}
    \mathrm{vol}_{\Sigma_{t^*}} = \frac{\Sigma}{\Xi}\sin\theta\sqrt{-g^{t^*t^*}}\mathrm{d}r\mathrm{d}\theta\mathrm{d}\phi.
\end{align*}
We denote by $\mathrm{d}\omega$ the volume form on the spheres $S^2_{t^*,r}$
\begin{align*}
   \mathrm{d}\omega = \sin\theta\mathrm{d}\theta\mathrm{d}\phi. 
\end{align*}
We collect here the asymptotic (in $r$) behaviour of the inverse metric components $g^{\mu\nu}$
\begin{gather}
    \label{InverseAsymptotics}
    g^{t^*t^*}=\mathcal{O}\bigg(\frac{1}{r^2}\bigg),\quad g^{rr}=\mathcal{O}(r^2),\quad
    g^{\theta\theta} =\mathcal{O}\bigg(\frac{1}{r^2}\bigg), \quad g^{\phi\phi}=\mathcal{O}\bigg(\frac{1}{r^2}\bigg)\nonumber\\
    g^{t^*r}=\mathcal{O}\bigg(\frac{1}{r^3}\bigg),\quad g^{t^*\phi}= \mathcal{O}\bigg(\frac{1}{r^2}\bigg),\quad
    g^{r\phi}= \mathcal{O}\bigg(\frac{1}{r^2}\bigg)
\end{gather}
and determinant
\begin{align}
    \label{DetAsymptotics}
    \det g=\mathcal{O}(r^4)
\end{align}
which will be applied in later estimates where it is key that $r$-weights do not become too large at infinity.\par

One may alternatively employ \textit{Boyer-Lindquist} coordinates $(t,r,\theta,\tilde{\phi})$, defined implicitly by the relations
\begin{align*}
    \frac{\mathrm{d}t^*}{\mathrm{d}r}=\frac{2Mr}{\Delta_{-}(1+\frac{r^2}{\ell^2})}\qquad    \frac{\mathrm{d}\phi}{\mathrm{d}r}=\frac{a\Xi}{\Delta_{-}}\qquad\frac{\mathrm{d}t^*}{dt}=1\qquad\frac{\mathrm{d}\phi}{\mathrm{d}\tilde{\phi}}=1.
\end{align*}
Then the metric (\ref{KAdS*Metric}) takes its, perhaps, more familiar form (\ref{KAdSMetric}). However, since Boyer-Lindquist coordinates become irregular at the event horizon $\mathcal{H}^+$, we shall work in $(t^*,r,\theta,\phi)$ from this point onwards.

\subsubsection{Vector fields}
\label{sec:VFs}
One immediately obtains from (\ref{KAdS*Metric}) that the coordinate vector fields $T=\partial_{t^*}$, $\Phi=\partial_{\phi}$ are Killing. However, $T$ is not uniformly timelike in the black hole exterior. As such, we will use a uniformly timelike vector field
\begin{align}
\label{RedshiftVF}
    N = T+\xi(r)Y,\qquad    \xi(r) =
    \begin{cases}
        1\text{ on }r\leq r_0\\
        0\text{ on }r\geq r_1
    \end{cases},
    \qquad r_+<r_0<r_1<\infty
\end{align}
to obtain coercive energy estimates. Otherwise, control of derivatives transverse to the horizon degenerates. Throughout, $r_0$ is a fixed radius such that $T$ is uniformly timelike on $r\geq r_0$, $r_1>r_0$ is a sufficiently large (depending on $a$) fixed radius (see Lemma \ref{lem:FarRegion}) and $\xi(r)$ is a positive function, interpolating smoothly between its two cases. For simplicity, we choose $Y$ to be the future-directed unit normal (\ref{Norman}) to the spacelike hypersurfaces $\Sigma_{t^*}$.\par

Let us now define the set of \textit{commuting vector fields} 
\begin{align*}
    \Gamma \in \{N, T, \Phi\}.
\end{align*}
We shall denote by $\sigma=(\sigma_1,\sigma_2,\sigma_3)$ a multi-index with entries $\sigma_i$, $i\in\mathbb{N}$, and length $|\sigma|=\sum_i\sigma_i$. A general concatenation of the $\Gamma$-vector fields will be written as 
\begin{align*}
    \Gamma^{\sigma}=N^{\sigma_1}T^{\sigma_2}\Phi^{\sigma_3}.
\end{align*}
These will be applied to the equation (\ref{Massive}) in order to generate estimates for the solution $\psi$ to sufficiently high-order.\par

Moreover, define the sets of normalised spacetime derivatives
\begin{align}
\label{UnitDerivatives}
    D \in \bigg\{\frac{1}{r}\partial_{t^*}, r\partial_r, \frac{1}{r}\partial_{\theta}\psi, \frac{1}{r}\partial_{\phi}\psi\bigg\},\qquad\overline D \in \bigg\{\partial_{t^*}, r\partial_r, \frac{1}{r}\partial_{\theta}\psi, \frac{1}{r}\partial_{\phi}\psi\bigg\}
\end{align}
in terms of which the definition of the class of nonlinearity (Section \ref{sec:F}), the Sobolev embedding Theorem \ref{SliceSobolev} and Theorems \ref{MainResult}-\ref{BootstrapThm} will be stated. General concatenations of these derivatives will be denoted in a similar manner as for the commuting vector fields, with multi-index $\sigma=(\sigma_0,...,\sigma_3)$. For example,
\begin{align*}
    D^{\sigma}=\bigg(\frac{1}{r}\partial_{t^*}\bigg)^{\sigma_0}(r\partial_r)^{\sigma_1}\bigg(\frac{1}{r}\partial_{\theta}\bigg)^{\sigma_2}\bigg(\frac{1}{r}\partial_{\phi}\bigg)^{\sigma_3}.
\end{align*}
We denote the purely spatial normalised derivatives $r\partial_r$, $\frac{1}{r}\partial_{\theta}$ and $\frac{1}{r}\partial_{\phi}$ by $D_x$.

\subsection{The energies}
\label{sec:Energies}
We define the following weighted, non-degenerate energies on Kerr-AdS.\footnote{These are equivalent to the Cartesian ``twisted" energies appearing in \cite{EncisoKamran} when one restricts to Dirichlet boundary conditions.} We shall see that these arise from using the vector field $N$ as a multiplier. The gradient and its norm with respect to the induced metric on $S^2_{t^*,r}$ are denoted $\slashed\nabla$, $|\slashed\nabla\cdots\slashed\nabla\psi|^2=\slashed g^{A A'}\cdots\slashed g^{B B'}\slashed\nabla_A\cdots\slashed\nabla_B\psi\cdot\slashed\nabla_{A'}\cdots\slashed\nabla_{B'}\psi$ respectively.

\begin{definition}[Sobolev norms on spacelike slices]
Let $t^*$ be a fixed time. Then, for each $k$ $\in$ $\mathbb{N}$ we define the Sobolev space
\begin{align*}
    H^{k}_{KAdS}(\Sigma_{t^*})=\{\psi:[r_+,\infty)\times S^2_{t^*,r}\to\mathbb{R}\text{ }|\text{ }||\psi||_{H^k_{KAdS}(\Sigma_{t^*})}^2<\infty\}
\end{align*}
of scalar fields $\psi$ on $\Sigma_{t^*}$ associated with the energy
\begin{align}
\label{1stOrderEg}
    ||\psi||_{H^1_{KAdS}(\Sigma_{t^*})}^2 &= \int_{\Sigma_{t^*}}\bigg[\frac{1}{r^2}(\partial_{t^*}\psi)^2+r^2(\partial_r\psi)^2+|\slashed\nabla\psi|^2+\psi^2\bigg]r^2\mathrm{d}r\mathrm{d}\omega
\end{align}
and, for $k\geq 2$,
\begin{align}
    \label{kthEnergy}
    ||\psi||_{H^k_{KAdS}(\Sigma_{t^*})}^2 =& \int_{\Sigma_{t^*}}\bigg[\sum_{\substack{|\sigma|\leq k-1,\\|\rho|\leq k,\text{ }\rho_1\leq k-1}}\bigg(r^{2\sigma_3}|\slashed\nabla^{1+\sigma_1}\partial_{t^*}^{\sigma_2}\partial_r^{\sigma_3}\psi|^2+r^{2\rho_2}(\partial^{\rho_1}_{t^*}\partial^{\rho_2}_r\psi)^2\bigg)+\frac{1}{r^2}(\partial_{t^*}^k\psi)^2\bigg] r^2\mathrm{d}r\mathrm{d}\omega.
\end{align}
\end{definition}
We also define the following $L^p$-norms on spacelike slices of Kerr-AdS.
\begin{definition}[$L^p$-norms on spacelike slices]
    Let $t^*$ $\in$ $\mathbb{R}^+$ be a fixed time. Then for $p\in\{2,\infty\}$ we define the spaces
    \begin{align*}
        L^p_{KAdS}(\Sigma_{t^*})=\{\psi:[r_+,\infty)\times S^2_{t^*,r}\to\mathbb{R}\text{ }|\text{ }||\psi||_{L^p_{KAdS(\Sigma_{t^*})}}^2<\infty\}
    \end{align*}
    of scalar fields $\psi$ on $\Sigma_{t^*}$ associated with the norms
    \begin{align*}
        ||\psi||^2_{L^2_{KAdS}(\Sigma_{t^*})}=\int_{\Sigma_{t^*}}|\psi|^2 r^2\mathrm{d}r\mathrm{d}\omega,\qquad||\psi||_{L^{\infty}_{KAdS}(\Sigma_{t^*})}=\sup_{\Sigma_{t^*}}|\psi|.
    \end{align*}
\end{definition}
In terms of these Sobolev and $L^p$ spaces, we define the following classes of functions of time and space.
\begin{definition}[Function spaces of the solution]
\label{FnSpace}
    By $CH^k_{KAdS}$, we denote the function space
    \begin{align*}
        CH^k_{KAdS}=\bigg\{\psi:\mathcal{M}\to\mathbb{R}\text{ }\bigg|\text{ }\psi\in\bigcap_{q=0}^{k-1}C^q(\mathbb{R}_{t^*};H^{k-q}_{KAdS}(\Sigma_{t^*}))\text{ and } \frac{1}{r}\psi\in C^k(\mathbb{R}_{t^*};L^2_{KAdS}(\Sigma_{t^*}))\bigg\}
    \end{align*}
    of scalar fields $\psi$ on $\mathcal{M}$.
\end{definition}

\subsection{The energy-momentum tensor}
\label{sec:EMTensor}
We recall \cite{PosEng} the \textit{energy-momentum tensor} of a solution $\psi$ of (\ref{Massive})
\begin{align*}
    T_{\mu\nu}[\psi]&=\partial_{\mu}\psi\partial_{\nu}\psi-\frac{1}{2}g_{\mu\nu}(g^{\alpha\beta}\partial_{\alpha}\psi\partial_{\beta}\psi-\alpha\psi^2),
\end{align*}
the \textit{current} of $\psi$ along a vector field $X$
\begin{align}
\label{Current}
    J_{\mu}^X[\psi]=T_{\mu\nu}[\psi] X^{\nu},
\end{align}
and the \textit{bulk} of $\psi$ associated with $X$
\begin{align}
\label{Bulk}
    K^X[\psi]=T_{\mu\nu}[\psi]^{(X)}\pi^{\mu\nu}.
\end{align}
Here, $^{(X)}\pi_{\mu\nu}=\frac{1}{2}(\nabla_{\mu}X_{\nu}+\nabla_{\nu}X_{\mu})$ is the deformation tensor of $X$. Note that, if $X$ is globally uniformly timelike, the bulk satisfies
\begin{align*}
    |K^N[\psi]|&\leq C J_{\mu}^N[\psi]\cdot n^{\mu}_{\Sigma_{t^*}}
\end{align*}
for some $C>0$. Furthermore, recall the relation \cite{PosEng}
\begin{align}
\label{CurrentBulk}
    \nabla^{\mu}(J_{\mu}^X[\psi])=K^X[\psi].
\end{align}
It will become clear that integrating (\ref{CurrentBulk}) with $X=N$ and its commutations over an appropriate spacetime region generates the energies in Section \ref{sec:Energies}

\subsection{A class of nonlinear wave equations}
\label{sec:Eqn}
In this paper, we study the nonlinear wave equation\footnote{The restriction $\alpha<\frac{9}{4}$ is known as the \textit{Breitenlohner-Freedman} bound. This assumption ensures that (\ref{EqnOfChoice}) is well-posed, given appropriate initial and boundary conditions, in the general asymptotically-AdS setting \cite{WP}.}
\begin{align}
\label{EqnOfChoice}
    \Box_g\psi + \alpha\psi = \mathcal{F},\qquad\alpha<\frac{9}{4},
\end{align}
for $\psi$ $\in$ $CH^k_{KAdS}(\Sigma_{t^*})$ with fixed $k\geq 9$ where $\mathcal{F}$ is a smooth function of the spacetime coordinates, the solution $\psi$ and its normalised first-order derivatives $D\psi$ (\ref{UnitDerivatives}). In the following sections, we introduce appropriate boundary conditions before discussing the assumptions $\mathcal{F}$ satisfies and how this relates to the Einstein vacuum equation in harmonic gauge. We then state the scattering problem to be studied.

\subsubsection{Boundary conditions at future null infinity}
Given that the Kerr-AdS spacetime is not globally hyperbolic, equation (\ref{EqnOfChoice}) must be paired with an appropriate boundary condition along the timelike conformal boundary $\mathcal{I}$. In this paper, we restrict our attention to a particular choice in the reflective category, Dirichlet boundary conditions:\footnote{Note, the boundary condition is only really necessary in the range $\frac{5}{4}<\alpha<\frac{9}{4}$ where it excludes the Neumann branch of the solution. Otherwise, an assumption of finite energy suffices.}
\begin{align}
\label{Dirichlet}
    r^{\frac{3}{2}-s}\psi|_{\mathcal{I}} = 0,\quad s=\sqrt{\frac{9}{4}-\alpha},
\end{align}
where we write $r^{\frac{3}{2}-s}\psi|_{\mathcal{I}}$ to mean the limit
\begin{equation*}
    \lim_{r\to\infty}r^{\frac{3}{2}-s}\psi(t^*,r,\theta,\phi)
\end{equation*}
 for each fixed $(t^*,\theta,\phi)$.\par
 
The methods used herein easily extend to include Neumann or Robin boundary conditions. However, one must use the formalism of \textit{twisted} derivatives introduced by Warnick \cite{Claude}. Dirichlet, Neumann and Robin boundary conditions were treated in this manner in the linear setting by Holzegel and Warnick \cite{Twisted}.\par

It is worth noting that, since we do not commute with angular momentum operators (and therefore only show $r^{-\frac{1}{2}}$ decay of $\psi$, not $r^{-\frac{3}{2}-s}$), we do not explicitly prove here that $\psi$ attains this boundary condition. However, this is known from the well-posedness theory where angular momentum operators are utilised \cite{PosEng}.

\subsubsection{A general nonlinearity involving derivatives to first-order}
\label{sec:F}
In order to close the estimates appearing in this paper, we require a general condition on the nonlinearity $\mathcal{F}$. This is encoded in the following  estimates.
\begin{assumption}[The required estimates for $\mathcal{F}$]
    \label{NonlinAssump}
The right-hand side $\mathcal{F}$ of equation (\ref{EqnOfChoice}) must satisfy 
\begin{equation}
\label{FBound}
\begin{gathered}
    \sum_{|\sigma|\leq k-1}||N^{\sigma_1}T^{\sigma_2}\Phi^{\sigma_3}\mathcal{F}||_{L^2_{KAdS}(\Sigma_{t^*})}\leq C_k||\psi||_{H^k_{KAdS}(\Sigma_{t^*})}\bigg(||\psi||_{H^{k-1}_{KAdS}(\Sigma_{t^*})}\bigg)^{n},\\
    \sum_{|\gamma|\leq\lceil\frac{k-2}{2}\rceil}||\Box_g^{\gamma}\mathcal{F}||^2_{L^2_{KAdS}(\Sigma_{t^*})}\leq \tilde{C}_k||\psi||_{H^k_{KAdS}(\Sigma_{t^*})}\bigg(||\psi||_{H^{k-1}_{KAdS}(\Sigma_{t^*})}\bigg)^{p}
\end{gathered}
\end{equation}
for some $k\geq 9$, $C_k,\tilde{C}_k>0$ and $n,p\geq 0$.  
\end{assumption}
Condition \ref{NonlinAssump} allows for a very general class of $\mathcal{F}$. However, for simplicity, we shall restrict our attention to a nonlinearity which is quadratic in the first order normalised derivatives $D\psi$ (\ref{UnitDerivatives}) of the solution. In terms of the frame
\begin{align}
    e_0=\frac{1}{r}\partial_{t^*},\quad e_1=r\partial_r,\quad e_2=\frac{1}{r}\partial_{\theta},\quad e_3=\frac{1}{r}\partial_{\phi},\label{Frame}
\end{align}
we shall write
\begin{align}
\label{Quadratic}
    \mathcal{F}=F^{\mu\nu}(t^*,r,\theta,\phi)\cdot e_{\mu}\psi e_{\nu}\psi.
\end{align}
Here, $F:\mathcal{M}\to T\mathcal{M}\otimes T\mathcal{M}$ is a $(0,2)$-tensor with components $F_{\mu\nu}:\mathcal{M}\to\mathbb{R}$ which are smooth, bounded (with respect to the frame $e_i$) functions\footnote{This is modulo the standard degeneration of the frame $\frac{1}{r}\partial_{\theta}$, $\frac{1}{r}\partial_{\phi}$ on $S^2$.} with uniformly bounded $T$, $N$, $\Phi$, $e_1$ and $e_2$ derivatives up to order $k-1$. By Proposition \ref{prop:AbstractEnergyEst} below, the nonlinearity (\ref{Quadratic}) satisfies Condition \ref{NonlinAssump} with $n=1$, $p=3$.\par

It is worth noting that the restriction to nonlinearities of the form (\ref{Quadratic}) is not necessary to close the argument of this paper. One can readily extend to include quasilinear equations, as well as more general $\mathcal{F}$ satisfying the obvious analogue of Condition \ref{NonlinAssump}. This includes, for example, higher order polynomials in $D\psi$, as well as additional linear terms with suitable asymptotics. Furthermore, one can treat systems
\begin{align*}
    \Box_g\psi_j+\alpha_j\psi_j = \mathcal{F}_j,\quad\alpha_j<\frac{9}{4},\quad j=1,...,\ell
\end{align*}
of equations with the $\mathcal{F}_j$ satisfying the obvious analogue of Condition \ref{NonlinAssump}. Notably, one can also allow for $\mathcal{F}$ to have some additional growth in $r$ by bootstrapping stronger $r$-weighted norms of $\psi$ in the proof. This can be done (although we will not do this here) via commuting with angular momentum operators and performing an additional round of elliptic estimates.

\paragraph{The Einstein equations in harmonic gauge}

The generality of Condition \ref{NonlinAssump} will, in particular, allow for treatment of the Einstein vacuum equation in later work. To see that this is possible, in principle, let us consider the Einstein equation in harmonic gauge. In \cite{EncisoKamran}, the Einstein vacuum equation in harmonic gauge is written symbolically, in $(t,x,\theta,\phi)$ coordinates, in terms of a conformally rescaled metric $\overline{g}=x^2 g$ ($\{x=0\}$ corresponds to $\mathcal{I}$) as
\begin{align*}
    Q(g)=\overline{g}^{-1}\partial^2\overline{g}+a_0(\overline{g})\partial\overline{g}\partial\overline{g}+\frac{a_1(\overline{g})}{x}\partial\overline{g}+\frac{a_2(\overline{g})}{x^2}=0.
\end{align*}
Here, the $a_i(\overline{g})$ are smooth functions of $x$ and $\overline{g}$. Due to the blow-up of some terms in the above towards the conformal boundary, the authors perform a technical process of renormalisation to obtain a system of nonlinear wave equations which can be treated via an earlier result for scalar waves \cite{EncisoKamranWave}. The solution is written as a sum 
\begin{align*}
    g=\gamma+h, \quad h=x^{\frac{3}{2}}u
\end{align*}
of terms $\gamma$ and $h$ which contain the most (respectively least) singular behaviour at the boundary. The authors construct the approximate solution $\gamma$ of $Q(\gamma)=0$ algebraically from the boundary data, iteratively constructing $\gamma_{\ell}$ such that $Q(\gamma_{\ell})=\mathcal{O}(x^{\ell-1})$ and setting $\gamma=\gamma_{\ell}$ for $\ell$ sufficiently large. After fixing $\gamma$, they solve a tensorial equation
\begin{align}
\label{Tensorial}
    P_g u = -\frac{1}{x^{\frac{7}{2}}}Q(\gamma)+\frac{F(\overline g)u}{x}+\int_0^1 x^{\frac{3}{2}}B(u,x\partial u)\mathrm{d}\sigma
\end{align}
for $u$. Here, $F(\overline{g})$ is smooth and $B$ is quadratic, with smooth coefficients depending on $x^2\gamma+\sigma x^{\frac{7}{2}}u$. The principal part of $P_g$ is the wave operator $\Box_{\overline{g}}$ of the rescaled metric. Having fixed $\gamma=\gamma_{\ell}$ for large enough $\ell$, the highly singular weight in the first term is counteracted. This problem reduces to first ignoring the tensorial nature of equation (\ref{Tensorial}) and solving a system of scalar equations
\begin{align}
\label{EKScalarEqn}
    L_{g,\alpha_j} u_j = \mathcal{N}_j,\quad j=0,..,3.
\end{align}
Here, the left-hand side $L_{g,\alpha_j}u_j$ is related to a scalar wave equation with mass\footnote{It is worth noting that equations (\ref{EqnOfChoice}) with ``positive mass" $\alpha\leq 0$ are easier to treat than the more general equations treated herein. In particular, the corresponding energies are automatically coercive, meaning that the additional work appearing in Section \ref{sec:Coercivity} is not required if one restricts to these masses.} $\alpha_j\leq 0$ via
\begin{align}
\label{LWaveReln}
    \Box_g\psi_j+\alpha_j\psi_j=[1+\mathcal{O}(x^2)]x^{\frac{7}{2}}L_{g,\alpha_j}u_j,\quad\psi_j=x^{\frac{3}{2}}u_j
\end{align}
and the right-hand side $\mathcal{N}_j$ schematically looks like
\begin{align}
\label{EKNonlinearity}
    \mathcal{N}_j=x^{\frac{3}{2}}u_j^2+x^{\frac{5}{2}}u_j\partial u_j+x^{\frac{7}{2}}(\partial u_j)^2.
\end{align}
This gives, via relation (\ref{LWaveReln}), a system of scalar wave equations
\begin{align}
    \label{EKBox}\Box_g\psi_j+\alpha_j\psi_j=&x^2\psi_j^2+x^3\psi_j\partial\psi_j+x^4(\partial\psi_j)^2+\text{l.o.t.}
\end{align}
for $\psi_j$.\par
Local well-posedness of scalar wave equations of the form (\ref{EqnOfChoice}) with quadratic nonlinearities $\mathcal{F}$ akin to the quadratic (in $\psi_j$) term on the right-hand side of (\ref{EKBox}) was proven in an earlier paper of the authors \cite{EncisoKamranWave}. In particular, in that setting, equation (\ref{EqnOfChoice}) with
    \begin{align}
        \mathcal{F}=F^{\mu\nu}(t^*,x,\theta,\phi)\cdot \tilde{e}_{\mu}\psi \tilde{e}_{\nu}\psi,\quad F^{\mu\nu}=x^q\tilde{F}^{\mu\nu}(t^*,x,\theta,\phi),\quad q\geq 2+\sqrt{\frac{9}{4}-\alpha},\label{EKF}
    \end{align}
was considered. Here $\tilde{F}^{\mu\nu}:\mathcal{M}\to\mathbb{R}$ are smooth, bounded functions and $\tilde{e}_i$ is the frame analogous to (\ref{Frame}) in $(t^*,x,\theta,\phi)$ coordinates. Since the masses $\alpha_j$ of the system of equations (\ref{EKScalarEqn}) satisfy $\alpha_j\leq 0$, the relevant nonlinearity $\mathcal{F}$ of the form (\ref{EKF}) has weight $x^q$ with $q\geq\frac{7}{2}$, as appears in the final term on the right-hand side of (\ref{EKNonlinearity}).\par

We note that the class (\ref{EKF}) of nonlinearity considered in \cite{EncisoKamranWave} is strictly smaller than that (\ref{Quadratic}) treated herein, due to the requirement of weights which decay in $r$. Most importantly, one can show that Condition \ref{NonlinAssump} applies to all terms appearing on the right-hand side of (\ref{EKBox}).

\subsubsection{The scattering problem for nonlinear waves}
All that remains is to supplement (\ref{EqnOfChoice}), (\ref{Dirichlet}) with appropriate scattering data $h_{\mathcal{H}^+}$ on the event horizon $\mathcal{H}^+$. It will become clear in subsequent sections that this scattering construction requires that $h_{\mathcal{H^+}}$ decay at a sufficiently fast exponential rate in $t^*$. That being said, the scattering problem we will study in this paper is
\begin{align}
\label{FullProblem}
    \begin{cases}
        \Box_g\psi+\alpha\psi = \mathcal{F}\\
        \psi|_{\mathcal{H}^+}=h_{\mathcal{H}^+},\quad r^{\frac{3}{2}-s}\psi|_{\mathcal{I}}=0,\quad s=\sqrt{\frac{9}{4}-\alpha}
    \end{cases}
\end{align}
where we seek a solution $\psi$ $\in$ $CH^k_{KAdS}$ for $k\geq 9$. From Proposition \ref{prop:AbstractEnergyEst} onwards, we will restrict our attention to $\mathcal{F}$ satisfying assumption (\ref{Quadratic}).

\section{Key estimates}
\label{sec:Estimates}
Here, we collect a number of estimates which will be key in proving the main result. Firstly, in Section \ref{sec:Coercivity}, coercivity of the energies is derived. This is followed in Section \ref{sec:Elliptic} by elliptic estimates for spatial derivatives. In Section \ref{sec:Sobolev}, these elliptic estimates are applied to derive an $L^{\infty}$ Sobolev embedding on the spacelike slices $\Sigma_{t^*}$. Finally, in sections \ref{sec:EgEst} and \ref{sec:EstTheNonlin}, the previous estimates are appealed to in proving the $k$\textsuperscript{th}-order energy estimates which are central to the proof of Theorem \ref{InformalMe}. 

\subsection{Coercivity of the energy}
\label{sec:Coercivity}
In order to prove Theorem \ref{InformalMe}, one must be able to prove coercive estimates. By direct calculation, one finds that the (degenerate at $\mathcal{H}^+$) $T$-energy
\begin{align}
\label{Tenergy}
    &\int_{\Sigma_{t^*}}J^T_{\mu}[\psi]\cdot n^{\mu}_{\Sigma_{t^*}}\cdot\mathrm{vol}_{\Sigma_{t^*}}\nonumber\\
    =&\frac{1}{2}\int_{\Sigma_{t^*}}\bigg[-g^{t^*t^*}(\partial_{t^*}\psi)^2+\frac{\Delta_-}{\Sigma}(\partial_r\psi)^2+|\slashed\nabla\psi|^2+2g^{r\phi}\partial_r\psi\partial_{\phi}\psi-\frac{\alpha}{\ell^2}\psi^2\bigg]\frac{\Sigma}{\Xi}\mathrm{d}r\mathrm{d}\omega
\end{align}
carries a zeroth-order term of bad sign.\footnote{One must also deal with the $\partial_r\psi\partial_{\phi}\psi$ term, however, this can be absorbed by other terms via an application of Cauchy-Schwarz away from the horizon.} It will become clear in the proof of Lemma \ref{lem:FarRegion} that ensuring positivity of this energy requires stipulating the Breitenlohner-Freedman bound $\alpha<\frac{9}{4}$. An analogous zeroth-order term also appears in the non-degenerate $N$-energy
\begin{align}
\label{Nrgy}
     \int_{\Sigma_{t^*}}J^N_{\mu}[\psi]\cdot n^{\mu}_{\Sigma_{t^*}}\cdot\mathrm{vol}_{\Sigma_{t^*}},
\end{align}
with which we will work throughout this paper, where now the coefficient is a different function $-C_{\alpha}$ (with $C_{\alpha}\equiv\frac{\alpha}{\ell^2}$ for $r>r_1$). Trivially, for some $C>0$ one has the estimate
\begin{align*}
    &\int_{\Sigma_{\tau}}J^N_{\mu}[\psi]\cdot n^{\mu}_{\Sigma_{\tau}}\cdot\mathrm{vol}_{\Sigma_{\tau}}+C\bigg[\int_{\Sigma_{\tau}\cap\{r\geq r_1\}}\frac{\alpha}{\ell^2}\psi^2 \frac{\Sigma}{\Xi}\mathrm{d}r\mathrm{d}\omega+\int_{\Sigma_{\tau}\cap\{r<r_1\}}C_{\alpha}\psi^2\frac{\Sigma}{\Xi}\mathrm{d}r\mathrm{d}\omega\bigg]\\
    \geq&||\psi||_{H^1_{KAdS}(\Sigma_{\tau})}^2,
\end{align*}
however, we will show in Lemma \ref{lem:FarRegion} that the term on the ``far region" $r\geq r_1$ can be dropped. In Lemma \ref{lem:NearRegion}, it will be shown that the term on the ``near region" $r<r_1$ can be exchanged for a spacetime term on a compact in $r$ region.\par

We prove the following coercivity statement for the $N$-energy.

\begin{proposition}[A coercive estimate for the non-degenerate $N$-energy]
\label{prop:NCoercive}
Let $\psi$ be a solution of (\ref{FullProblem}) in $CH^1_{KAdS}$ with first time derivative in $C([\tau,\tau']; L^2(\Sigma_{t^*}))$. Then the $N$-energy (\ref{Nrgy}) of $\psi$ satisfies the coercive estimate
\begin{align}
    &\int_{\Sigma_{\tau}}J_{\mu}^N[\psi]\cdot n_{\Sigma_{t^*}}^{\mu}\cdot\mathrm{vol}_{\Sigma_{t^*}}\nonumber\\
    &+\tilde{C}_{\alpha}\bigg[\int_{\Sigma_{\tau'}\cap\{r<r_1+\delta\}}\psi^2\frac{\Sigma}{\Xi}\mathrm{d}r\mathrm{d}\omega+\int_{\mathcal{D}_{[\tau,\tau']}\cap\{r<r_1+\delta\}}\Big(\psi^2+(\partial_{t^*}\psi)^2\Big)\frac{\Sigma}{\Xi}\mathrm{d}r\mathrm{d}\omega\mathrm{d}t^*\bigg]\nonumber\\
    \geq&||\psi||^2_{H^1_{KAdS}(\Sigma_{\tau})}\label{CoerciveNEst}
\end{align}
for an appropriately chosen constant $\tilde{C}_{\alpha}$ depending on the mass $\alpha$.
\end{proposition}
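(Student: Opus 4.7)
The plan is to decompose $\Sigma_\tau$ into a \emph{far region} $\{r\geq r_1\}$, where $T$ is uniformly timelike so $N\equiv T$, and a \emph{near region} $\{r<r_1+\delta\}$, where the redshift contribution $\xi(r)Y$ is active. On each piece we bound the $H^1_{KAdS}$ norm by the $N$-energy density plus compactly supported corrections.

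On the far region, the bulk density of $J^N_\mu n^\mu$ equals that of (\ref{Tenergy}). Every term carries a favourable sign except the potential $-\frac{\alpha}{\ell^2}\psi^2$ (when $\alpha>0$); the off-diagonal $g^{r\phi}\partial_r\psi\,\partial_\phi\psi$ is disposed of by Cauchy--Schwarz together with the decay $g^{r\phi}=\mathcal{O}(r^{-2})$ from (\ref{InverseAsymptotics}). The essential analytic input is a weighted Hardy-type inequality: using the Dirichlet condition (\ref{Dirichlet}) at $\mathcal{I}$ and integrating by parts against an appropriate $r$-weight, one produces
\begin{align*}
    \int_{\Sigma_\tau\cap\{r\geq r_1\}}\!\!\!\bigg[\frac{\Delta_-}{\Sigma}(\partial_r\psi)^2+|\slashed\nabla\psi|^2\bigg]\frac{\Sigma}{\Xi}\mathrm{d}r\mathrm{d}\omega
    \geq\Big(\tfrac{9}{4}+\varepsilon\Big)\int_{\Sigma_\tau\cap\{r\geq r_1\}}\frac{\psi^2}{\ell^2}\frac{\Sigma}{\Xi}\mathrm{d}r\mathrm{d}\omega-(\text{boundary at }r=r_1)
\end{align*}
for some $\varepsilon>0$. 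Since $\alpha<\tfrac{9}{4}$ strictly, the coefficient $\tfrac{9}{4}+\varepsilon$ dominates $\alpha$, leaving a strictly positive $\psi^2$ residue that dominates the zeroth-order contribution to $\|\psi\|^2_{H^1_{KAdS}}$ on this region. The boundary term at $r=r_1$ is of the compactly supported form and is bookkept into the near-region correction.

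On the near region, $r$ now ranges over a compact set, so no Hardy-type absorption into the gradient terms is possible. However, the $\xi(r)Y$ contribution renders the derivative part of the $N$-energy's quadratic form non-degenerate and uniformly coercive in all first derivatives on $\{r_+\leq r<r_1+\delta\}$, so only the $\int\psi^2$ portion of the $H^1$ norm requires further work. For this I would exchange the spacelike $L^2$-norm for a future boundary term plus a spacetime term via the fundamental theorem of calculus in $t^*$,
\begin{align*}
    \int_{\Sigma_\tau\cap\{r<r_1+\delta\}}\psi^2\frac{\Sigma}{\Xi}\mathrm{d}r\mathrm{d}\omega
    =\int_{\Sigma_{\tau'}\cap\{r<r_1+\delta\}}\psi^2\frac{\Sigma}{\Xi}\mathrm{d}r\mathrm{d}\omega
    -2\int_\tau^{\tau'}\!\int_{\Sigma_{t^*}\cap\{r<r_1+\delta\}}\psi\,\partial_{t^*}\psi\,\frac{\Sigma}{\Xi}\mathrm{d}r\mathrm{d}\omega\,\mathrm{d}t^*,
\end{align*}
and then invoke the elementary bound $2|\psi\,\partial_{t^*}\psi|\leq\psi^2+(\partial_{t^*}\psi)^2$ to produce exactly the two correction terms appearing in (\ref{CoerciveNEst}).

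Summing the two regional contributions and choosing $\tilde{C}_\alpha$ large enough to absorb the $r=r_1$ boundary term from the far-region Hardy argument together with all combinatorial constants from the near-region exchange yields (\ref{CoerciveNEst}). I expect the principal obstacle to be the far-region Hardy inequality: it must be sharp enough to both neutralise $-\frac{\alpha}{\ell^2}\psi^2$ \emph{and} leave a strictly positive $\psi^2$ residue. This is precisely where the strict Breitenlohner--Freedman bound $\alpha<\tfrac{9}{4}$ is indispensable; at criticality or beyond, no such residue can be extracted and coercivity of the $N$-energy in its standard form simply fails.
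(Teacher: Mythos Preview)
Your approach matches the paper's: a far-region Hardy inequality (exploiting the strict Breitenlohner--Freedman bound) combined with the near-region fundamental-theorem-of-calculus exchange in $t^*$ that you describe. The only deviations are cosmetic: the paper localises with a smooth cutoff $\chi(r)$ rather than a sharp one, so the Hardy integration by parts produces a bulk $\partial_r\chi$ term of favourable sign instead of a surface term at $r=r_1$; and the Hardy constant is exactly $\tfrac{9}{4}$, the positive $\psi^2$ residue arising from $\alpha<\tfrac{9}{4}$ rather than from any $\varepsilon$-improvement of the inequality itself.
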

Note that the implication
\begin{align*}
    ||\psi||^2_{H^1_{KAdS}}\geq C\int_{\Sigma_{\tau}}J_{\mu}^N[\psi]\cdot n_{\Sigma_{t^*}}^{\mu}\cdot\mathrm{vol}_{\Sigma_{t^*}}
\end{align*}
in the other direction is trivially true for some positive constant $C$.\par

Proposition \ref{prop:NCoercive} will be applied to the finite-in-time problems (\ref{FiniteProblem}) in Section \ref{sec:Proof}. In that setting, the integral on $\Sigma_{\tau'}$ appearing in (\ref{CoerciveNEst}) vanishes and so Proposition \ref{prop:NCoercive} shows that the $N$-current $J^N[\psi]$ controls the $H^1_{KAdS}$ norm of $\psi$ up to a spacetime term:
\begin{align*}
   \int_{\tau}^{\tau'}||\mathbf{1}_{\{r<r_1+\delta\}}\psi||^2_{H^1_{KAdS}(\Sigma_{t^*})}\mathrm{d}t^*,
\end{align*}
supported away from infinity. This term can then be dealt with via a Gr\"onwall inequality.\par

Proposition \ref{prop:NCoercive} follows from two estimates: one on the ``far" region $r\geq r_1$ where $N=T$, and one on the ``near" region. The former is a Hardy inequality, adapted from \cite[Section 5.1]{PosEng}, the proof of which demonstrates the importance of the aforementioned Breitenlohner-Freedman bound $\alpha<\frac{9}{4}$. This result is as follows.

\begin{lemma}[A ``far" region Hardy inequality]
\label{lem:FarRegion}
Let $\chi(r)$ be a smooth, positive cut-off function satisfying
    \begin{align*}
        \chi(r)=\begin{cases}
            0\text{ on }r<r_1,\\
            1\text{ on }r\geq r_1+\delta,
        \end{cases}
    \end{align*}
for some $\delta>0$, and increasing monotonically on $[r_1,r_1+\delta]$, where $r_1$ is sufficiently large with respect to $a$. Then for a solution $\psi$ of (\ref{FullProblem}) in $H^1_{KAdS}(\Sigma_{t^*})$, one has that
\begin{align*}
   \int_{\Sigma_{t^*}}\chi(r)\bigg[\frac{\Delta_-}{\Sigma}(\partial_r\psi)^2-\frac{\alpha}{\ell^2}\psi^2\bigg]\frac{\Sigma}{\Xi}\mathrm{d}r\mathrm{d}\omega\geq 0
\end{align*}
\end{lemma}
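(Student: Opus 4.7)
The plan is to reduce the estimate to a one-dimensional weighted Hardy inequality in $r$, fiberwise in $(\theta,\phi)$. Since $\Delta_-$ depends only on $r$ while $\Sigma = r^2 + a^2\cos^2\theta \leq r^2 + a^2$, it suffices to prove the worst-case 1D statement obtained by replacing $\cos^2\theta$ with $1$ in the coefficient of $\psi^2$. The natural substitution is $\psi = r^{s-3/2}\tilde\psi$, consistent with the Dirichlet condition $r^{3/2-s}\psi|_{\mathcal{I}} = 0$, so that $\tilde\psi \to 0$ at infinity. Writing $F = r^{s-3/2}$ and expanding
\begin{equation*}
    (\partial_r\psi)^2 = (F')^2\tilde\psi^2 + FF'\,\partial_r(\tilde\psi^2) + F^2(\partial_r\tilde\psi)^2,
\end{equation*}
the third term furnishes a manifestly non-negative reservoir $\chi(r)(\Delta_-/\Xi) F^2(\partial_r\tilde\psi)^2$.

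Integrating by parts in $r$ the cross term against $\chi A$, $A = \Delta_-/\Xi$, produces: a boundary contribution at $r_1$ that vanishes by $\chi(r_1) = 0$; a boundary contribution at infinity that vanishes by the Dirichlet decay combined with $H^1_{KAdS}$-integrability (justified by approximating $\psi$ by compactly supported $r$-functions, in the spirit of the AdS Hardy arguments of \cite{PosEng}); a $\chi'$-term $-\chi' A F F'\,\tilde\psi^2 \geq 0$ (supported on $[r_1, r_1+\delta]$, with good sign since $\chi' \geq 0$, $A, F > 0$, and $F' < 0$ because $s < 3/2$); and a remaining coefficient of $\chi\tilde\psi^2$ equal to $-A'FF' - AFF'' - BF^2$, where $B = \alpha\Sigma/(\ell^2\Xi)$. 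Using $\alpha = 9/4 - s^2$, the leading-in-$r$ contribution is
\begin{equation*}
    \tfrac{r^{2s-1}}{\ell^2\Xi}\Big[4(s-\tfrac32) + (s-\tfrac32)(s-\tfrac52) + \alpha\Big] = \tfrac{r^{2s-1}}{\ell^2\Xi}\Big[(s-\tfrac32)(s+\tfrac32) + \alpha\Big] = 0,
\end{equation*}
which is the algebraic cancellation enforced by the Breitenlohner--Freedman bound.

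The remaining subleading contributions to the $\tilde\psi^2$ coefficient arise at order $r^{2s-3}$ and below, with $a^2$- and $M$-dependent prefactors whose sign, governed by $(s-3/2)(s-1/2)$, is not uniformly favourable across the BF range. These I would absorb by pairing the one-dimensional Hardy inequality $\int_{r_1}^\infty r^{2s+1}(\partial_r\tilde\psi)^2\,dr \geq s^2\int_{r_1}^\infty r^{2s-1}\tilde\psi^2\,dr$ (applied to the reservoir, which behaves like $r^{2s+1}/\ell^2$ at large $r$) with the pointwise bound $r^{2s-3} \leq r_1^{-2}\,r^{2s-1}$ for $r \geq r_1$. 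This yields a gain of order $r_1^{-2}/(9/4 - \alpha)$, so choosing $r_1$ sufficiently large with respect to $a$ (with $M$, $\ell$ fixed) ensures the reservoir dominates all subleading terms.

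The principal obstacle I anticipate is the subleading sign analysis: bookkeeping the $r^{2s-3}$ coefficient through the $a^2$- and $M$-dependent corrections to $\Delta_-$, and separately through the $a^2\cos^2\theta$ correction to $\Sigma$, is tedious; and because this coefficient can change sign as $\alpha$ varies over the BF interval, the Hardy absorption is essential rather than decorative. The quantitative threshold ``$r_1$ sufficiently large with respect to $a$'' in the hypothesis is precisely that at which the gained $r_1^{-2}$ factor beats these subleading magnitudes.
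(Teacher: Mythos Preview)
Your approach is correct and genuinely different from the paper's. The paper proceeds by a direct antiderivative trick: it bounds $\Sigma\leq r^2+a^2$, writes $r^2+a^2=\tfrac13\partial_r\bigl(r^3-r_+^3+3a^2(r-r_+)\bigr)$, integrates by parts, and applies Cauchy--Schwarz in a single step to produce the constant $\tfrac94$; it then verifies pointwise that $\Delta_->\ell^{-2}(r^2+a^2)^{-1}[r^3-r_+^3+3a^2(r-r_+)]^2$ on the support of $\chi$ for $r_1$ large. No substitution is made and the Breitenlohner--Freedman threshold appears as the sharp Cauchy--Schwarz constant. Your route --- the twisted substitution $\psi=r^{s-3/2}\tilde\psi$, leading-order cancellation forced by the identity $(s-\tfrac32)(s+\tfrac32)+\alpha=0$, followed by Hardy absorption of the $r^{2s-3}$ remainder --- is the approach associated with the twisted-derivative formalism of Warnick and Holzegel--Warnick. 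It is more conceptual (it explains \emph{why} $\alpha<\tfrac94$ is the threshold: it is the exponent at which the twist kills the leading potential) and generalises more readily to other boundary conditions, at the cost of more bookkeeping. Two small points: you implicitly assume $s<\tfrac32$ when asserting $F'<0$, which restricts to $\alpha>0$ --- harmless, since for $\alpha\leq0$ the integrand is pointwise non-negative and the lemma is trivial; and your Hardy step needs the $\chi$-weighted version $\int\chi r^{2s-1}\tilde\psi^2\leq s^{-2}\int\chi r^{2s+1}(\partial_r\tilde\psi)^2$, which follows by the same integration-by-parts-plus-Cauchy--Schwarz argument once one observes that the $\chi'$ contribution has a favourable sign.
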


On the ``near" region, the zeroth order term of the $N$-current $J^{N}[\psi]$ is estimated via a spacetime term in order to derive the following coercivity statement.

\begin{lemma}[A ``near" region estimate for the non-degenerate $N$-energy]
\label{lem:NearRegion}
Let $\psi$ be a solution of (\ref{FullProblem}) in $CH^1_{KAdS}$ with first time derivative in $C([\tau,\tau']; L^2(\Sigma_{t^*}))$. Then
\begin{align*}
    \int_{\Sigma_{\tau}\cap\{r<r_1+\delta\}}\psi^2\frac{\Sigma}{\Xi}\mathrm{d}r\mathrm{d}\omega\leq\int_{\Sigma_{\tau'}\cap\{r<r_1+\delta\}}\psi^2\frac{\Sigma}{\Xi}\mathrm{d}r\mathrm{d}\omega+C\int_{\mathcal{D}_{[\tau,\tau']}\cap\{r<r_1+\delta\}}\bigg[\psi^2+(\partial_{t^*}\psi)^2\bigg]\frac{\Sigma}{\Xi}\mathrm{d}r\mathrm{d}\omega\mathrm{d}t^*,
\end{align*}
for some $C>0$.
\end{lemma}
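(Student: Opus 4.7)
The plan is to reduce this bound to an application of the fundamental theorem of calculus in $t^*$ followed by Young's inequality. The crucial observation is that the weight $\frac{\Sigma}{\Xi}$ appearing in the integrand is a function of $(r,\theta)$ only, independent of $t^*$, and that the truncation $\{r<r_1+\delta\}$ is cylindrical in $t^*$. Notably, the equation (\ref{FullProblem}) itself does not enter the argument; only the stated regularity of $\psi$ is used.

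First I would invoke the regularity hypotheses: $\psi\in CH^1_{KAdS}$ yields $\psi\in C(\mathbb{R}_{t^*};H^1_{KAdS}(\Sigma_{t^*}))$, and together with $\partial_{t^*}\psi\in C([\tau,\tau'];L^2(\Sigma_{t^*}))$ this is sufficient to deduce, for almost every $(r,\theta,\phi)$ in the truncated near region, the pointwise identity
\begin{equation*}
\psi^2(\tau,r,\theta,\phi)=\psi^2(\tau',r,\theta,\phi)-2\int_{\tau}^{\tau'}\psi\,\partial_{t^*}\psi\,dt^*.
\end{equation*}
Multiplying by the positive, $t^*$-independent weight $\frac{\Sigma}{\Xi}$ and integrating over $\{r_+\leq r<r_1+\delta\}\times S^2$, then exchanging the order of integration on the right-hand side via Fubini (justified by the $L^2$-integrability of $\psi$ and $\partial_{t^*}\psi$ on each slice and by the continuity in $t^*$), produces the equality
\begin{equation*}
\int_{\Sigma_{\tau}\cap\{r<r_1+\delta\}}\psi^2\frac{\Sigma}{\Xi}dr\,d\omega=\int_{\Sigma_{\tau'}\cap\{r<r_1+\delta\}}\psi^2\frac{\Sigma}{\Xi}dr\,d\omega-2\int_{\mathcal{D}_{[\tau,\tau']}\cap\{r<r_1+\delta\}}\psi\,\partial_{t^*}\psi\,\frac{\Sigma}{\Xi}dr\,d\omega\,dt^*.
\end{equation*}

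The final step is to bound the cross-term pointwise by Young's inequality,
\begin{equation*}
2|\psi\,\partial_{t^*}\psi|\leq \psi^2+(\partial_{t^*}\psi)^2,
\end{equation*}
which delivers the stated estimate with $C=1$. There is no real obstacle in this argument; the only minor subtlety is ensuring that the regularity assumed in the hypotheses is enough to make the fundamental theorem of calculus and Fubini rigorous, but the continuity in $t^*$ of both $\psi$ and $\partial_{t^*}\psi$ with values in $L^2$ of a compact-in-$r$ spatial region is more than sufficient.
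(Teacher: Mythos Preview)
Your proposal is correct and follows essentially the same approach as the paper: apply the fundamental theorem of calculus in $t^*$ to rewrite the slice integral at $\tau$ in terms of the slice integral at $\tau'$ minus a spacetime integral of $2\psi\,\partial_{t^*}\psi$, then estimate the cross term by the elementary inequality $2|\psi\,\partial_{t^*}\psi|\le\psi^2+(\partial_{t^*}\psi)^2$. The paper's version differs only cosmetically (it calls the last step Cauchy--Schwarz and obtains a slightly different constant), and your remarks on the regularity needed to justify the calculus are a welcome addition.
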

We begin by proving Lemma \ref{lem:FarRegion}.

\begin{proof}[Proof of Lemma \ref{lem:FarRegion}]
We adapt the proof of coercivity of the $T$-energy from  \cite[Lemma 4.1]{PosEng} as follows.\par

Firstly, we have that
\begin{align}
\label{FirstAlpha}
    \frac{\alpha}{\ell^2\Xi}\int_{\Sigma_{t^*}}\chi(r)\psi^2\Sigma\mathrm{d}r\mathrm{d}\omega\leq\frac{\alpha}{\ell^2\Xi}\int_{\Sigma_{t^*}}\chi(r)\psi^2(r^2+a^2)\mathrm{d}r\mathrm{d}\omega.
\end{align}

Integrating the right-hand side of (\ref{FirstAlpha}) by parts and applying the Cauchy-Schwarz inequality, one obtains
\begin{align}
    &\int_{\Sigma_{t^*}}\chi(r)\psi^2(r^2+a^2)\mathrm{d}r\mathrm{d}\omega\\
    =& \int_{\Sigma_{t^*}}\frac{1}{3}\chi(r)\psi^2 \partial_r\Big(r^3-r_+^3+3a^2(r-r_+)\Big)\mathrm{d}r\mathrm{d}\omega\nonumber\\
    =&-\int_{\Sigma_{t^*}}\frac{2}{3}\Big(r^3-r_+^3+3a^2(r-r_+)\Big)\bigg[\chi(r)\psi\partial_r\psi+\frac{1}{2}\partial_r(\chi(r))\psi^2\bigg]\mathrm{d}r\mathrm{d}\omega\nonumber\\
    \leq&\bigg(\int_{\Sigma_{t^*}}\frac{4}{9}\chi(r)\psi^2 (r^2+a^2)\mathrm{d}r\mathrm{d}\omega\bigg)^{\frac{1}{2}}\bigg(\int_{\Sigma_{t^*}}(\partial_r\psi)^2\chi(r)\frac{[r^3-r_+^3+3a^2(r-r_+)]^2}{r^2+a^2}\mathrm{d}r\mathrm{d}\omega\bigg)^{\frac{1}{2}}\nonumber\\
    &-\int_{\Sigma_{t^*}}\frac{1}{3}\Big(r^3-r_+^3+3a^2(r-r_+)\Big)\partial_r(\chi(r))\psi^2\mathrm{d}r\mathrm{d}\omega\label{NegativeCutOff}\\
    \leq&\bigg(\int_{\Sigma_{t^*}}\frac{4}{9}\chi(r)\psi^2 (r^2+a^2)\mathrm{d}r\mathrm{d}\omega\bigg)^{\frac{1}{2}}\bigg(\int_{\Sigma_{t^*}}(\partial_r\psi)^2\chi(r)\frac{[r^3-r_+^3+3a^2(r-r_+)]^2}{r^2+a^2}\mathrm{d}r\mathrm{d}\omega\bigg)^{\frac{1}{2}},\nonumber
\end{align}
where the final inequality follows from the fact that the second term of (\ref{NegativeCutOff}) is negative.\par

Thus,
\begin{align*}
    \int_{\Sigma_{t^*}}\frac{9}{4}\chi(r)\psi^2(r^2+a^2)\mathrm{d}r\mathrm{d}\omega &\leq\int_{\Sigma_{t^*}}(\partial_r\psi)^2\chi(r)\frac{[r^3-r_+^3+3a^2(r-r_+)]^2}{r^2+a^2}\mathrm{d}r\mathrm{d}\omega.
\end{align*}
Now, provided that the cut-off radius $r_1$ is chosen sufficiently large (depending on $a$), we have that
\begin{align*}
    \bigg[\Delta_- -\frac{[r^3-r_+^3+3a^2(r-r_+)]^2}{(r^2+a^2)\ell^2}\bigg]\bigg|_{\chi(r)\neq 0}>0
\end{align*}
implying
\begin{align*}
    \int_{\Sigma_{t^*}}\chi(r)\bigg[\frac{\Delta_-}{\Sigma}(\partial_r\psi)^2-\frac{\alpha}{\ell^2}\psi^2\bigg]\frac{\Sigma}{\Xi}\mathrm{d}r\mathrm{d}\omega\geq 0\quad\iff\quad\alpha<\frac{9}{4},
\end{align*}
as required.
\end{proof}
Next, we prove Lemma \ref{lem:NearRegion} which covers the remaining ``near" region, $r<r_1+\delta$.
\begin{proof}[Proof of Lemma \ref{lem:NearRegion}]
    On the ``near" region $r<r_1+\delta$, we estimate the zeroth-order term of (\ref{Nrgy}) by a spacetime term via the Fundamental Theorem of Calculus:
\begin{align}
    &\int_{\Sigma_{\tau}\cap\{r<r_1+\delta\}}\psi^2\frac{\Sigma}{\Xi}\mathrm{d}r\mathrm{d}\omega\nonumber\\
    =&\int_{\Sigma_{\tau'}\cap\{r<r_1+\delta\}}\psi^2\frac{\Sigma}{\Xi}\mathrm{d}r\mathrm{d}\omega-\int_{\tau}^{\tau'}\partial_{t^*}\bigg[\int_{\Sigma_{t^*}\cap\{r<r_1+\delta\}}\psi^2\frac{\Sigma}{\Xi}\mathrm{d}r\mathrm{d}\omega\bigg]\mathrm{d}t^*\nonumber\\
    =&\int_{\Sigma_{\tau'}\cap\{r<r_1+\delta\}}\psi^2\frac{\Sigma}{\Xi}\mathrm{d}r\mathrm{d}\omega-\int_{\mathcal{D}_{[\tau,\tau']}\cap\{r<r_1+\delta\}}\partial_{t^*}(\psi^2)\frac{\Sigma}{\Xi}\mathrm{d}r\mathrm{d}\omega\mathrm{d}t^*\nonumber\\
    =&\int_{\Sigma_{\tau'}\cap\{r<r_1+\delta\}}\psi^2\frac{\Sigma}{\Xi}\mathrm{d}r\mathrm{d}\omega-\int_{\mathcal{D}_{[\tau,\tau']}\cap\{r<r_1+\delta\}}2\psi\cdot\partial_{t^*}\psi\frac{\Sigma}{\Xi}\mathrm{d}r\mathrm{d}\omega\mathrm{d}t^*\nonumber\\
    \leq&\int_{\Sigma_{\tau'}\cap\{r<r_1+\delta\}}\psi^2\frac{\Sigma}{\Xi}\mathrm{d}r\mathrm{d}\omega+\int_{\mathcal{D}_{[\tau,\tau']}\cap\{r<r_1+\delta\}}\bigg[2\psi^2+\frac{1}{2}(\partial_{t^*}\psi)^2\bigg]\frac{\Sigma}{\Xi}\mathrm{d}r\mathrm{d}\omega\mathrm{d}t^*\label{CSz}\\
    \leq&\int_{\Sigma_{\tau'}\cap\{r<r_1+\delta\}}\psi^2\frac{\Sigma}{\Xi}\mathrm{d}r\mathrm{d}\omega+C\int_{\mathcal{D}_{[\tau,\tau']}\cap\{r<r_1+\delta\}}\bigg[\psi^2+(\partial_{t^*}\psi)^2\bigg]\frac{\Sigma}{\Xi}\mathrm{d}r\mathrm{d}\omega\mathrm{d}t^*,\nonumber
\end{align}
for some $C>0$. Here, (\ref{CSz}) follows from the Cauchy-Schwarz inequality.
\end{proof}
Proposition \ref{prop:NCoercive} follows as a straight forward corollary of Lemmas \ref{lem:FarRegion} and \ref{lem:NearRegion}.
\begin{proof}[Proof of Proposition \ref{prop:NCoercive}]
    By Lemma \ref{lem:FarRegion}, the $N$-energy is coercive on the ``far" region $r\geq r_1+\delta$. Then, by adding an appropriate multiple
    \begin{align*}
        \tilde{C}_{\alpha}\bigg[\int_{\Sigma_{\tau'}\cap\{r<r_1+\delta\}}\psi^2\frac{\Sigma}{\Xi}\mathrm{d}r\mathrm{d}\omega+\int_{\mathcal{D}_{[\tau,\tau']}\cap\{r<r_1+\delta\}}\bigg[\psi^2+(\partial_{t^*}\psi)^2\bigg]\frac{\Sigma}{\Xi}\mathrm{d}r\mathrm{d}\omega\mathrm{d}t^*\bigg]
    \end{align*}
    of the right-hand side of the estimate appearing in Lemma \ref{lem:NearRegion} (choose, say, $\tilde{C}_{\alpha}=2C_{\alpha}$, where $C_{\alpha}$ is the coefficient of the zeroth-order term in (\ref{Nrgy})), one obtains the desired result.
\end{proof}
By commuting (\ref{CoerciveNEst}) with $\Gamma^{\sigma}$ and applying the elliptic estimates appearing in the next section, one obtains analogous higher order coercive estimates for $\psi$ sufficiently regular.

\subsection{$L^2$ estimates for derivatives of order $>2$ in space}
\label{sec:Elliptic}
The estimates appearing in this section will allow us to control higher-order angular and radial derivatives of the solution $\psi$. This allows one to avoid bulk terms generated by commuting with the angular momentum operators, as one ordinarily would with no issue in the spherically symmetric setting. We will prove a four-level hierarchy of estimates - the first for general derivatives on the ``far" region, the second for derivatives of the form $\Gamma^{\sigma}D_x^2\psi$ (where only two derivatives are in general directions) on the entire exterior, the third for general derivatives on the ``near" region, and the fourth for general derivatives on the entire exterior.\par

We begin with preliminary second order elliptic estimates for angular derivatives, first on the spheres $S^2_{t^*,r}$ and then on the spacelike slices $\Sigma_{t^*}$. The first is given on the Kerr-AdS spheres $S^2_{t^*,r}$ at fixed time and radial distance, but holds for more general topological $2$-spheres.

\begin{lemma}[An elliptic estimate on the topological $2$-spheres $S^2_{t^*,r}$ {\cite[Prop 4.4.3]{DHR}}]
\label{Lemma:EllipticSphere}
Let $\psi$ be a sufficiently smooth function on $S^2_{t^*,r}$ for $t^*$, $r\geq r^+$ fixed. Then
    \begin{align}
    \label{EllipticLaplace}
        \int_{S^2_{t^*,r}}(\slashed\Delta\psi)^2\mathrm{d}\omega\geq&c\int_{S^2_{t^*,r}}\bigg(|\slashed\nabla^2\psi|^2+2K|\slashed\nabla \psi|^2\bigg)\mathrm{d}\omega,
    \end{align}
for some $c>0$, where $K$ is the Gauss curvature of $S^2_{t^*,r}$.
\end{lemma}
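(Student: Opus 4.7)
The plan is to derive (\ref{EllipticLaplace}) from the Bochner--Weitzenböck identity applied to the closed two-dimensional Riemannian surface $(S^2_{t^*,r},\slashed g)$. For any sufficiently smooth scalar $\psi$, this identity reads
\begin{equation*}
\tfrac{1}{2}\slashed\Delta|\slashed\nabla\psi|^2 = |\slashed\nabla^2\psi|^2 + \slashed g^{AB}\slashed\nabla_A\slashed\Delta\psi\cdot\slashed\nabla_B\psi + \slashed{\mathrm{Ric}}(\slashed\nabla\psi,\slashed\nabla\psi),
\end{equation*}
and in two dimensions $\slashed{\mathrm{Ric}} = K\slashed g$, so the Ricci contribution collapses to $K|\slashed\nabla\psi|^2$. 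This is the key feature that allows the usual Weitzenböck lower bound to be re-expressed purely in terms of the scalar Gauss curvature $K$, rather than in terms of the full Ricci tensor.

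First I would integrate the pointwise identity over the closed surface $S^2_{t^*,r}$. The total Laplacian $\slashed\Delta|\slashed\nabla\psi|^2$ has vanishing integral on a closed manifold, and integration by parts converts the cross term into $-\int_{S^2_{t^*,r}}(\slashed\Delta\psi)^2\,\mathrm{d}\omega$. Rearranging produces the sharp integral identity
\begin{equation*}
\int_{S^2_{t^*,r}}(\slashed\Delta\psi)^2\,\mathrm{d}\omega = \int_{S^2_{t^*,r}}|\slashed\nabla^2\psi|^2\,\mathrm{d}\omega + \int_{S^2_{t^*,r}}K|\slashed\nabla\psi|^2\,\mathrm{d}\omega.
\end{equation*}
Next I would upgrade this identity to the stated inequality. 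The target (\ref{EllipticLaplace}) carries $2K$ rather than $K$, so I would retain only half of the Hessian contribution in order to absorb the extra factor of two. Choosing $c=\tfrac{1}{2}$ and using $\int K|\slashed\nabla\psi|^2\,\mathrm{d}\omega\geq 0$, the identity yields
\begin{equation*}
\int(\slashed\Delta\psi)^2\,\mathrm{d}\omega \;\geq\; \tfrac{1}{2}\int|\slashed\nabla^2\psi|^2\,\mathrm{d}\omega + \int K|\slashed\nabla\psi|^2\,\mathrm{d}\omega \;\geq\; \tfrac{1}{2}\int\bigl(|\slashed\nabla^2\psi|^2 + 2K|\slashed\nabla\psi|^2\bigr)\mathrm{d}\omega.
\end{equation*}

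The main (minor) obstacle is therefore the verification that the Gauss curvature $K$ of the induced metric $\slashed g$ from (\ref{KAdS*Metric}) on $S^2_{t^*,r}$ is nonnegative. This is a direct computation from the explicit form of $\slashed g$: for large $r$ the induced metric is a small perturbation of the round sphere of radius $r$, giving $K\sim 1/r^2>0$, and throughout the (sub)extremal parameter range (\ref{ExtremalityCond}) one can check by an explicit (if slightly tedious) calculation that positivity persists uniformly in $(\theta,\phi)$ down to the event horizon. The remainder of the argument is purely topological and applies on any closed Riemannian $2$-surface with $K\geq 0$, so no further Kerr--AdS-specific input is required.
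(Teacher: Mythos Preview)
Your Bochner--Weitzenb\"ock approach is the standard one and matches the argument in the cited reference \cite{DHR}; the paper itself gives no proof and simply quotes the result. The integrated identity
\[
\int_{S^2_{t^*,r}}(\slashed\Delta\psi)^2\,\mathrm{d}A_{\slashed g}=\int_{S^2_{t^*,r}}|\slashed\nabla^2\psi|^2\,\mathrm{d}A_{\slashed g}+\int_{S^2_{t^*,r}}K|\slashed\nabla\psi|^2\,\mathrm{d}A_{\slashed g}
\]
is exactly right, and passing from $\mathrm{d}A_{\slashed g}$ to $\mathrm{d}\omega$ costs only bounded factors absorbed into $c$.

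There is, however, a genuine error in your final paragraph. You claim to need $K\geq 0$ and then set out to verify it from the explicit form of $\slashed g$. Neither step is correct. First, re-read your own chain of inequalities: the first ``$\geq$'' merely discards the nonnegative quantity $\tfrac{1}{2}\int|\slashed\nabla^2\psi|^2$, and the second ``$\geq$'' is in fact an \emph{equality}. Nowhere is the sign of $K$ used; the estimate with $c=\tfrac{1}{2}$ holds for arbitrary $K$. Second, the claim that $K\geq 0$ throughout the subextremal range is false in general: for sufficiently large angular momentum the Gauss curvature of Kerr-type horizon spheres becomes negative near the poles, and the paper explicitly remarks immediately after the lemma that ``the $K$ term in (\ref{EllipticLaplace}) can be negative.'' The whole point of stating the lemma in this form is that the $K|\slashed\nabla\psi|^2$ contribution, whatever its sign, is a lower-order term to be absorbed by the first-order energy in the subsequent Proposition~\ref{prop:SpacelikeElliptic}. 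Delete the last paragraph and the parenthetical ``using $\int K|\slashed\nabla\psi|^2\,\mathrm{d}\omega\geq 0$'' and your proof is complete.
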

Note that the $K$ term in (\ref{EllipticLaplace}) can be negative, however, when it comes to applying Lemma \ref{Lemma:EllipticSphere} to prove Proposition \ref{prop:SpacelikeElliptic} it will be treated as a lower-order term. As such, the sign does not matter.\par

Under assumption of sufficient regularity of $\psi$ on the entire spacelike hypersurface $\Sigma_{t^*}$, one immediately obtains the following result as a consequence of Lemma \ref{Lemma:EllipticSphere}.

\begin{corollary}[An elliptic estimate for second-order angular derivatives on spacelike slices]
\label{Cor:IntegratedEllipticSphere}
Let $\psi$ be a function in $H^2_{KAdS}(\Sigma_{t^*})$. Then
\begin{align}
\label{IntegratedEllipticSphere}
    \int_{\Sigma_{t^*}}(\slashed\Delta\psi)^2 r^2\mathrm{d}r\mathrm{d}\omega\geq&c\int_{\Sigma_{t^*}}\bigg(|\slashed\nabla^2\psi|^2+2K|\slashed\nabla \psi|^2\bigg)r^2\mathrm{d}r\mathrm{d}\omega,
\end{align}
for some $c>0$, where $K$ is the Gauss curvature of $S^2_{t^*,r}$.
\end{corollary}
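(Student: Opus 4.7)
The plan is to reduce the claim to a slice-wise application of Lemma \ref{Lemma:EllipticSphere}. Since the spacelike hypersurface $\Sigma_{t^*}$ is foliated by the topological $2$-spheres $S^2_{t^*,r}$ parameterised by $r \in [r_+,\infty)$, and since the reference measure appearing in the integrals of the statement is the product measure $r^2\, \mathrm{d}r\,\mathrm{d}\omega$, the strategy is to apply Lemma \ref{Lemma:EllipticSphere} at each fixed $r$, multiply by the positive weight $r^2$, and then integrate in $r$ from $r_+$ to $\infty$. Fubini's theorem then converts the iterated integral into the single integral over $\Sigma_{t^*}$ that appears in the claimed estimate.

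Concretely, first I would fix $t^*$ and view $\psi(\cdot,r,\cdot,\cdot)$, for $r \in [r_+,\infty)$, as a family of functions on the spheres $S^2_{t^*,r}$. The assumption $\psi \in H^2_{KAdS}(\Sigma_{t^*})$ ensures, via the definition of the norm (\ref{kthEnergy}) and Fubini, that for almost every $r$ the restriction $\psi(\cdot,r,\cdot,\cdot)$ lies in $H^2(S^2_{t^*,r})$, which is enough regularity to invoke Lemma \ref{Lemma:EllipticSphere} on that sphere. Applying the lemma, one obtains, for a.e.\ $r$,
\begin{equation*}
    \int_{S^2_{t^*,r}}(\slashed\Delta\psi)^2\,\mathrm{d}\omega \;\geq\; c\int_{S^2_{t^*,r}}\bigl(|\slashed\nabla^2\psi|^2 + 2K|\slashed\nabla\psi|^2\bigr)\,\mathrm{d}\omega,
\end{equation*}
with the same constant $c > 0$ as in Lemma \ref{Lemma:EllipticSphere} (the constant is independent of $r$ because it depends only on the topological type of the sphere, not on the specific induced metric).

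Next, multiplying this inequality by $r^2 > 0$ preserves its direction, and integrating in $r$ over $[r_+,\infty)$ yields the desired estimate after an application of Fubini's theorem to commute the $r$-integration with the $\omega$-integration on both sides. The integrability of the right-hand side is guaranteed by $\psi \in H^2_{KAdS}(\Sigma_{t^*})$ (which controls $|\slashed\nabla^2\psi|^2$ with the weight $r^2$ up to the $K$-term), while the left-hand side is then trivially integrable as a consequence. Note that the Gauss curvature $K$ of $S^2_{t^*,r}$ depends on $r$ but is a smooth function, so no uniform lower bound on $K$ is needed for this step; any possible negative sign is harmless here since the inequality is stated with $K$ on the right-hand side (and, as the remark after Lemma \ref{Lemma:EllipticSphere} notes, this term is absorbed later as lower-order).

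There is essentially no obstacle in this argument: it is a direct slice-wise application of Lemma \ref{Lemma:EllipticSphere} followed by Fubini, and the only point requiring even minor care is verifying that the assumed $H^2_{KAdS}(\Sigma_{t^*})$ regularity is genuinely sufficient for the slice-wise estimate to make sense a.e., which is standard.
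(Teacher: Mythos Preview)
Your proposal is correct and matches the paper's own treatment: the paper simply states that the corollary ``immediately follows'' from Lemma~\ref{Lemma:EllipticSphere} under sufficient regularity, and your slice-wise application followed by multiplication by $r^2$ and integration in $r$ is precisely the intended argument. One small refinement: your justification that the constant $c$ is $r$-independent ``because it depends only on the topological type of the sphere'' is not quite the right reason---elliptic constants generally do depend on the metric---but here the estimate of Lemma~\ref{Lemma:EllipticSphere} arises from a Bochner-type integration-by-parts identity on $(S^2_{t^*,r},\slashed g)$, which produces a universal constant, so the conclusion stands.
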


Given Corollary \ref{Cor:IntegratedEllipticSphere}, we next recall some elliptic estimates on the ``far" region.

\subsubsection{Elliptic estimates for general derivatives on the ``far" region (\textbf{I})}
\label{sec:FarElliptic}
We first recall the standard (see, for example, \cite{WP}) second-order elliptic estimate one can prove on the ``far" region (outside of the ergoregion) on which $T$ is everywhere timelike. This requires appealing to Corollary \ref{Cor:IntegratedEllipticSphere} and commuting once with $T$.

\begin{proposition}[A ``far" region second-order elliptic estimate]
\label{prop:Far2ndOrder}
    Let $\psi$ be a function in $H^2_{KAdS}(\Sigma_{t^*})$ satisfying Dirichlet boundary conditions (\ref{Dirichlet}) and $\tilde{\xi}(r)$ be a smooth, positive cut-off function satisfying
    \begin{align*}
    \tilde{\xi}(r)=
        \begin{cases}
            0\text{ on }r<r_0,\\
            1\text{ on }r\geq r_0+\tilde{\delta}
        \end{cases}
    \end{align*}
    for some $\tilde{\delta}>0$ such that $r_0+\tilde{\delta}<r_1$, and increasing monotonically on $[r_0,r_0+\tilde{\delta}]$. Here, $r_0,$ $r_1$ are as in Section \ref{sec:VFs}. Then we have that
    \begin{align}
        &\int_{\Sigma_{t^*}}\tilde{\xi}(r)\bigg(|\slashed\nabla^2\psi|^2+r^2|\slashed\nabla\partial_r\psi|^2+r^4(\partial_r^2\psi)^2\bigg)r^2\mathrm{d}r\mathrm{d}\omega\nonumber\\
        \leq& C\bigg(||\psi||^2_{H^1_{KAdS}(\Sigma_{t^*})}+||T\psi||^2_{H^1_{KAdS}(\Sigma_{t^*})}+\int_{\Sigma_{t^*}}(\Box_g\psi+\alpha\psi)^2r^2\mathrm{d}r\mathrm{d}\omega\bigg),\label{TSpacelikeElliptic}
    \end{align}
    for some $C>0$.
\end{proposition}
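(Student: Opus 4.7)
The plan is to exploit that on the support of $\tilde\xi$, the Killing field $T$ is uniformly timelike, so that writing the wave operator in coordinates and separating off all second-order terms involving $\partial_{t^*}$ produces a uniformly elliptic spatial operator on the left. Concretely, I would expand $\Box_g\psi + \alpha\psi$ using the coordinate form of the d'Alembertian and regroup as
$$\slashed\Delta\psi + g^{rr}\partial_r^2\psi + 2g^{\phi r}\partial_\phi\partial_r\psi = (\Box_g\psi + \alpha\psi) + R[\psi],$$
where $R[\psi]$ collects everything involving $\partial_{t^*}\psi$, $\partial_{t^*}^2\psi$, $\partial_{t^*}\partial_r\psi$, $\partial_{t^*}\partial_\phi\psi$, first-order spatial derivatives, and $\alpha\psi$. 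On $\{\tilde\xi > 0\} \subset \{r \geq r_0\}$ one has $g^{rr} = \Delta_-/\Sigma > 0$ with $g^{rr} = \mathcal{O}(r^2)$, while $|g^{\phi r}| = \mathcal{O}(r^{-2})$ by (\ref{InverseAsymptotics}), so the operator on the left is elliptic with a degenerating lower-order perturbation.

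The main step is to square this identity and integrate against $\tilde\xi\, r^2\, \mathrm{d}r\,\mathrm{d}\omega$. The right-hand side is bounded by a multiple of $\int \tilde\xi(\Box_g\psi + \alpha\psi)^2 r^2\, \mathrm{d}r\, \mathrm{d}\omega + ||\psi||^2_{H^1_{KAdS}(\Sigma_{t^*})} + ||T\psi||^2_{H^1_{KAdS}(\Sigma_{t^*})}$ after Cauchy-Schwarz, reading off $r$-weights from (\ref{InverseAsymptotics}); each occurrence of $\partial_{t^*}^2\psi$ or $\partial_{t^*}\partial_x\psi$ inside $R[\psi]$ is just a spatial derivative of $T\psi$ and so controlled by $||T\psi||^2_{H^1_{KAdS}(\Sigma_{t^*})}$ via (\ref{1stOrderEg}). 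On the left, expanding the square yields the diagonal pieces $(\slashed\Delta\psi)^2$ and $(g^{rr})^2(\partial_r^2\psi)^2 \sim r^4 (\partial_r^2\psi)^2$, plus cross terms. Corollary \ref{Cor:IntegratedEllipticSphere} upgrades the first diagonal piece to $|\slashed\nabla^2\psi|^2$ up to a Gauss-curvature remainder that is absorbed into $||\psi||^2_{H^1_{KAdS}(\Sigma_{t^*})}$.

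The critical cross term $2\tilde\xi g^{rr}\,\slashed\Delta\psi\,\partial_r^2\psi\,r^2$ is integrated by parts once in the angular variables and once in $r$ to produce, up to commutators and boundary contributions, a positive multiple of $\int\tilde\xi g^{rr}|\slashed\nabla\partial_r\psi|^2 r^2\, \mathrm{d}r\, \mathrm{d}\omega$, giving exactly the $r^2|\slashed\nabla\partial_r\psi|^2$ weight claimed in (\ref{TSpacelikeElliptic}). The boundary at $r = r_0$ is killed by $\tilde\xi$, while the boundary at infinity vanishes using the Dirichlet condition (\ref{Dirichlet}) together with $\det g = \mathcal{O}(r^4)$ to check that the weighted product decays. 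The remaining cross terms arising from $2g^{\phi r}\partial_\phi\partial_r\psi$ carry an extra $r^{-2}$ relative to $g^{rr}$ and are absorbed into the diagonal terms via $\varepsilon$-Cauchy-Schwarz.

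The main obstacle is the bookkeeping of $r$-weights through the integration by parts: since the induced sphere metric $\slashed g^{AB}$ depends on $r$, the commutator $[\slashed\nabla,\partial_r]$ contributes extra terms involving $r$-derivatives of $\slashed g$ paired against $\slashed\nabla\partial_r\psi$ and $\slashed\nabla\psi$. One must verify that every such commutator, together with all contributions from differentiating $\tilde\xi$, $g^{rr}$, and the volume factor $r^2$, fits inside $||\psi||^2_{H^1_{KAdS}(\Sigma_{t^*})} + ||T\psi||^2_{H^1_{KAdS}(\Sigma_{t^*})}$ without secretly requiring a weighted spatial $H^2$ norm we have not yet controlled. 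Once this bookkeeping is in place, a final $\varepsilon$-Cauchy-Schwarz absorbs the small terms back into the left-hand side and the estimate (\ref{TSpacelikeElliptic}) closes.
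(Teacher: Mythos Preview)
Your proposal is correct and follows essentially the same approach as the paper: rewrite the wave operator so that the spatial part $\slashed\Delta\psi + g^{rr}\partial_r^2\psi + 2g^{\phi r}\partial_\phi\partial_r\psi$ (which is uniformly elliptic on $\{r\geq r_0\}$ where $T$ is timelike) is isolated on the left, with the right-hand side controlled by $\|\psi\|_{H^1_{KAdS}}$, $\|T\psi\|_{H^1_{KAdS}}$, and $\|\Box_g\psi+\alpha\psi\|_{L^2}$; then extract the second-order spatial norms via integration by parts and Corollary~\ref{Cor:IntegratedEllipticSphere}. The only cosmetic difference is that the paper phrases the argument as a two-step procedure --- first integrate by parts to control $|\slashed\nabla^2\psi|^2$ and $r^2|\slashed\nabla\partial_r\psi|^2$, then read off $r^4(\partial_r^2\psi)^2$ algebraically from the equation once the other terms are known --- whereas you square the identity and recover all three quantities simultaneously from the diagonal and cross terms; these are interchangeable implementations of the same standard elliptic estimate.
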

Proposition \ref{prop:Far2ndOrder} is a result of the fact that one can write the wave equation (\ref{Massive}) as
\begin{align}
    &\frac{1}{\sqrt{|\det g|}}\partial_i(g^{ij}\sqrt{|\det g|}\partial_j\psi)\nonumber\\
    =&\Box\psi-g^{t^*t^*}\partial_{t^*}^2\psi-2g^{t^* i}\partial_{t*}\partial_i\psi-\frac{1}{\sqrt{|\det g|}}\partial_r(g^{t^*r}\sqrt{|\det g|})\partial_t\psi,\label{EllipticDecomp}
\end{align}
for $i=r,\theta,\phi$. The left-hand side of (\ref{EllipticDecomp}) is an elliptic operator on the region where $g_{\Sigma_{t^*}}$ (the induced metric on the slice $\Sigma_{t^*}$) is positive definite. This is precisely where one has
\begin{align*}
    \det g_{\Sigma_{t^*}}=\frac{\Xi^2}{\Sigma^3}\bigg[\frac{\Delta_-}{\sin^2\theta}-a^2\Delta_{\theta}\bigg]>0,
\end{align*}
the same region on which $T$ is timelike:
\begin{align*}
    g_{t^*t^*}=-\frac{\sin^2\theta}{\Sigma}\bigg[\frac{\Delta_-}{\sin^2\theta}-a^2\Delta_{\theta}\bigg]<0.
\end{align*}
Integrating (\ref{EllipticDecomp} by parts gives control of $\slashed\nabla^2\psi$, $\slashed\nabla\partial_r\psi$. The remaining $\partial_r^2\psi$ can then be controlled directly by (\ref{EllipticDecomp}), given that one controls all other terms.\par

By successively applying $\Box_g$ to (\ref{EllipticDecomp}) and repeating these steps, one obtains the following higher-order result.
\begin{proposition}[A ``far" region $k$\textsuperscript{th}-order elliptic estimate]
\label{prop:FarKthOrder}
     Let $\psi$ be a function in $H^k_{KAdS}(\Sigma_{t^*})$ for $k\geq 2$ satisfying Dirichlet boundary conditions (\ref{Dirichlet}) and $\tilde\xi(r)$ be as in Proposition \ref{prop:Far2ndOrder}. Then we have that
     \begin{align*}
         &\sum_{|\sigma|=k}\int_{\Sigma_{t^*}}\tilde{\xi}(r)r^{2\sigma_2}|\slashed\nabla^{\sigma_1}\partial_{t^*}^{\sigma_2}\partial_r^{\sigma_3}\psi|^2 r^{2}\mathrm{d}r\mathrm{d}\omega\nonumber\\
        \leq &C\sum_{\substack{|\alpha|\leq k-1,\\ |\beta|\leq k- 2,\\|\gamma|\leq\lceil\frac{k-2}{2}\rceil}}\bigg[||T^{\alpha}\psi||^2_{H^1_{KAdS}(\Sigma_{t^*})}+\int_{\Sigma_{t^*}}\bigg(\Big(T^{\beta}(\Box_g\psi+\alpha\psi)\Big)^2+\Big(\Box_g^{\gamma}(\Box_g\psi+\alpha\psi)\Big)^2\bigg) r^2\mathrm{d}r\mathrm{d}\omega\bigg],
     \end{align*}
     for some $C>0$.
\end{proposition}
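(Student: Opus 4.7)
I would argue by induction on $k$, the base case $k=2$ being precisely Proposition \ref{prop:Far2ndOrder}. The underlying structural fact is the one already used there: on the support of $\tilde{\xi}$ (where $T$ is uniformly timelike) the purely spatial operator on the left-hand side of (\ref{EllipticDecomp}), call it $L$, is uniformly elliptic, and (\ref{EllipticDecomp}) rewrites the wave equation as
\begin{equation*}
    L\psi \;=\; (\Box_g\psi + \alpha\psi) \;-\; \alpha\psi \;-\; g^{t^*t^*}T^2\psi \;-\; 2g^{t^* i}T\partial_i\psi \;-\; \tfrac{1}{\sqrt{|\det g|}}\partial_r\!\big(g^{t^*r}\sqrt{|\det g|}\big)T\psi,
\end{equation*}
exhibiting $\psi$ as a solution of an elliptic equation on the far region whose source is $\mathcal{F}$ plus terms carrying at least one $T$-derivative.

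The inductive step would then come from two complementary mechanisms. Since $T=\partial_{t^*}$ is Killing with constant coefficients, it commutes with both $L$ and $\Box_g$; so for every $0\leq j\leq k-2$ the field $T^j\psi$ satisfies an elliptic equation of exactly the same form with source $T^j\mathcal{F}-\alpha T^j\psi-g^{t^*t^*}T^{j+2}\psi-\ldots$, and applying Proposition \ref{prop:Far2ndOrder} to $T^j\psi$ controls every multi-index at total order $k$ with at most two purely-spatial derivatives. For multi-indices with three or more purely-spatial derivatives I would apply $\Box_g$ iteratively to (\ref{EllipticDecomp}): using $\Box_g\psi = \mathcal{F}-\alpha\psi$ repeatedly one obtains $\Box_g^{m+1}\psi = \Box_g^{m}\mathcal{F}-\alpha\Box_g^{m}\psi$, and each further application of $\Box_g$ to (\ref{EllipticDecomp}) trades, modulo commutators with $L$ and terms of strictly lower total order, two extra $\Box_g$-derivatives of the source for two additional purely spatial derivatives of $\psi$ controlled by Proposition \ref{prop:Far2ndOrder} applied to $\Box_g^{m}\psi$.

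After at most $\lceil(k-2)/2\rceil$ applications of $\Box_g$ combined with at most $k-1$ commutations with $T$, every multi-index $\sigma$ with $|\sigma|=k$ is accounted for. The highest-derivative source contributions reduce to precisely $T^{\beta}(\Box_g\psi+\alpha\psi)$ for $|\beta|\leq k-2$ and $\Box_g^{\gamma}(\Box_g\psi+\alpha\psi)$ for $|\gamma|\leq\lceil(k-2)/2\rceil$, matching the right-hand side of the statement, while all other remainders are $H^1_{KAdS}$-norms of $T^{\alpha}\psi$ with $|\alpha|\leq k-1$ covered by the inductive hypothesis.

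The main obstacle is pure bookkeeping. At each iteration, the commutators $[L,\Box_g]$ produce remainder terms whose coefficients involve derivatives of the metric in $r,\theta$; one must verify that these (i) are of strictly lower total order, so that they fall under the inductive hypothesis and reorganise into the $T^{\alpha}\psi$ contributions, and (ii) respect the $r^{2\sigma_2}$ weight structure on the left-hand side. The far-region cutoff $\tilde{\xi}$ keeps all coefficients uniformly bounded and polynomial in $r$, so no delicate weighted analysis is needed beyond careful index tracking; nonetheless, certifying that the precise source structure collapses into only the three terms $\|T^\alpha\psi\|^2_{H^1_{KAdS}}$, $\|T^\beta(\Box_g\psi+\alpha\psi)\|^2_{L^2_{KAdS}}$, $\|\Box_g^\gamma(\Box_g\psi+\alpha\psi)\|^2_{L^2_{KAdS}}$ with the correct index ranges is the delicate part of the argument.
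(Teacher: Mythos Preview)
Your proposal is correct and follows essentially the same approach as the paper: the paper's proof is the single sentence ``By successively applying $\Box_g$ to (\ref{EllipticDecomp}) and repeating these steps, one obtains the following higher-order result,'' together with the remark in the introduction that one ``successively appl[ies] $\Box_g$ to (\ref{TElliptic}) and commut[es] sufficiently many times with $T$.'' Your induction on $k$, using $T$-commutation (trivial since $T$ is Killing) to handle multi-indices with few spatial derivatives and iterated application of $\Box_g$ to gain two spatial derivatives at a time, is exactly this scheme spelled out, and your identification of the commutator bookkeeping as the only real work matches the paper's omission of details.
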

We can now proceed to estimate derivatives $\Gamma^{\sigma}D_x^2\psi$ on the entire exterior, which will subsequently allow us to construct estimates for general derivatives on the ``near"  region. These will eventually be combined with Proposition \ref{prop:FarKthOrder} to control general derivatives on the entire exterior region.

\subsubsection{Estimating spatial derivatives of at most second order in general directions on the entire exterior region (\textbf{II})}
\label{sec:Level1}
We first derive an elliptic estimate for second order spatial derivatives on the spacelike slices $\Sigma_{t^*}$. This requires appealing to Corollary \ref{Cor:IntegratedEllipticSphere}, commuting once with $T$ and commuting once with the redshift vector field $N$ in order to counteract the degenerate (at the horizon) $g^{rr}$ weights appearing in the estimate involving $T$ alone. One commutation with $\Phi$ is also employed, in order to deal treat a mixed angular-radial term of potentially bad sign. This result will then be generalised to a statement for higher order spatial derivatives of the form $\Gamma^{\sigma}D_x^2\psi$ via successive commutation with the commuting vector fields $\Gamma$.

\begin{proposition}[A non-degenerate second-order elliptic estimate on spacelike slices]
\label{prop:SpacelikeElliptic}
Let $\psi$ be a function in $H^2_{KAdS}(\Sigma_{t^*})$ satisfying Dirichlet boundary conditions (\ref{Dirichlet}). Then we have that
    \begin{align}
    \label{SpacelikeElliptic}
        &\int_{\Sigma_{t^*}}\bigg(|\slashed\nabla^2\psi|^2+r^2|\slashed\nabla\partial_r\psi|^2+r^4(\partial_r^2\psi)^2\bigg)r^{2}\mathrm{d}r\mathrm{d}\omega\nonumber\\
        \leq&C\bigg(||\psi||^2_{H^1_{KAdS}(\Sigma_{t^*})}+||T\psi||^2_{H^1_{KAdS}(\Sigma_{t^*})}+||N\psi||^2_{H^1_{KAdS}(\Sigma_{t^*})}+||\Phi\psi||^2_{H^1_{KAdS}(\Sigma_{t^*})}\nonumber\\
        &\qquad+\int_{\Sigma_{t^*}}(\Box_g\psi+\alpha\psi)^2 r^{2}\mathrm{d}r\mathrm{d}\omega\bigg)
    \end{align}
    for some $C>0$.
\end{proposition}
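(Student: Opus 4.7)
The plan is to rewrite the wave equation (\ref{EqnOfChoice}) so that its principal spatial part, an elliptic operator, sits on the left, then use a $\slashed\Delta\psi$ multiplier to obtain control of the angular and mixed angular-radial derivatives, and finally recover the pure $\partial_r^2\psi$ contribution by splitting into the near and far regions. First, expanding $\Box_g$ in coordinates gives
\[
\slashed\Delta\psi + g^{rr}\partial_r^2\psi = \Box_g\psi + \alpha\psi + (\text{second-order terms involving }\partial_{t^*}\text{ or }\partial_\phi) + \text{l.o.t.},
\]
where the second-order terms on the right are of schematic form $g^{t^*t^*}\partial_{t^*}^2\psi$, $g^{t^*r}\partial_{t^*}\partial_r\psi$, $g^{t^*\phi}\partial_{t^*}\partial_\phi\psi$ and $g^{r\phi}\partial_r\partial_\phi\psi$. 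Because $T$ and $\Phi$ are Killing, after $L^2$ squaring these terms are controlled by $\Vert T\psi\Vert^2_{H^1_{KAdS}(\Sigma_{t^*})}+\Vert\Phi\psi\Vert^2_{H^1_{KAdS}(\Sigma_{t^*})}$, with lower-order pieces absorbed into $\Vert\psi\Vert^2_{H^1_{KAdS}(\Sigma_{t^*})}$. To counteract the degeneration of $g^{rr}$ at $\mathcal{H}^+$, use $N=T+\xi(r)Y$ to trade the $T^2\psi$ contribution on the right for a combination involving $N^2\psi$; since $N$ carries a non-vanishing $\partial_r$ component near the horizon, this transfers a non-degenerate $\partial_r^2\psi$ contribution back to the left, producing
\[
\slashed\Delta\psi + (g^{rr}+\xi(r))\partial_r^2\psi = \Box_g\psi + \alpha\psi + \mathcal{O}(r^{-2})\,N^2\psi + \text{l.o.t.},
\]
in which the coefficient $g^{rr}+\xi(r)$ is now bounded below by a positive constant uniformly on $\Sigma_{t^*}$.

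Next, multiply this identity by $\slashed\Delta\psi$ and integrate against $r^2\,\mathrm{d}r\,\mathrm{d}\omega$ over $\Sigma_{t^*}$. The $(\slashed\Delta\psi)^2$ term appears directly, while integrating the cross term $(g^{rr}+\xi)\partial_r^2\psi\cdot\slashed\Delta\psi$ by parts in the angular directions yields $(g^{rr}+\xi)\,|\slashed\nabla\partial_r\psi|^2\,r^2$, modulo commutator terms which are of lower order (absorbable into $\Vert\psi\Vert^2_{H^1_{KAdS}(\Sigma_{t^*})}+\Vert\Phi\psi\Vert^2_{H^1_{KAdS}(\Sigma_{t^*})}$) and a boundary integral at $r\to\infty$ which vanishes under the Dirichlet condition (\ref{Dirichlet}). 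Apply Corollary \ref{Cor:IntegratedEllipticSphere} to upgrade $(\slashed\Delta\psi)^2$ to $|\slashed\nabla^2\psi|^2$, with the Gauss-curvature correction absorbed once more into $\Vert\psi\Vert^2_{H^1_{KAdS}(\Sigma_{t^*})}$. On the right, pair against $\slashed\Delta\psi$ via $\varepsilon$-Cauchy--Schwarz, reabsorbing $(\slashed\Delta\psi)^2$ and $r^2|\slashed\nabla\partial_r\psi|^2$-type contributions into the left; the $\mathcal{O}(r^{-2})N^2\psi$ term is then paid for by $\Vert N\psi\Vert^2_{H^1_{KAdS}(\Sigma_{t^*})}$ thanks to the favourable $r^{-2}$ prefactor, and the $\int(\Box_g\psi+\alpha\psi)^2 r^2\,\mathrm{d}r\,\mathrm{d}\omega$ term appears explicitly as required.

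At this point, $|\slashed\nabla^2\psi|^2$ and $r^2|\slashed\nabla\partial_r\psi|^2$ are controlled with the weighted measure $r^2\,\mathrm{d}r\,\mathrm{d}\omega$ by the right-hand side of (\ref{SpacelikeElliptic}). To recover the final $r^4(\partial_r^2\psi)^2$ bound, split $\Sigma_{t^*}$ as in the statement of Proposition \ref{prop:Far2ndOrder}: on the far region $\{r\geq r_0+\tilde\delta\}$, Proposition \ref{prop:Far2ndOrder} directly supplies the desired $r^4(\partial_r^2\psi)^2$ contribution; on the complementary near region, solve the rewritten elliptic identity pointwise for $\partial_r^2\psi$ (legitimate since $g^{rr}+\xi(r)$ is bounded below there), square, and integrate, using the quantities already controlled on the right-hand side. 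Since $r$ is bounded on the near region, the $r^4$ weight is harmless. The main obstacle will be the careful bookkeeping of $r$-weights and commutators: $N$ has $r$-dependent coefficients so $[N,\partial_r]$ and $[N,\slashed\Delta]$ generate error terms that must be shown to be of strictly lower order and absorbable into the energies already appearing on the right-hand side of (\ref{SpacelikeElliptic}); one must likewise verify that the integration-by-parts boundary term at $\mathcal{I}$ vanishes, which relies on the decay rate of $\psi$ at infinity imposed via the Dirichlet condition in the Breitenlohner--Freedman range.
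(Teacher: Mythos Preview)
Your proof is correct and follows essentially the same route as the paper's: rewrite the equation using $N$ to get a non-degenerate coefficient on $\partial_r^2\psi$, multiply by $\slashed\Delta\psi$, integrate by parts and invoke Corollary~\ref{Cor:IntegratedEllipticSphere}, then recover $\partial_r^2\psi$ by a near/far split. Two small points to tighten: the cross-term integration by parts is in $r$ \emph{and} the angular directions (the paper moves $\partial_r^2\psi\cdot\slashed\Delta\psi$ to $\partial_r\psi\cdot\slashed\Delta\partial_r\psi$ via an $r$-integration first, which also produces a horizon boundary term at $r=r_+$ that must be absorbed), and for $\partial_r^2\psi$ on the near region the paper more directly notes that $\|N\psi\|^2_{H^1_{KAdS}}$ already contains $\int r^4(\partial_r N\psi)^2\,\mathrm{d}r\,\mathrm{d}\omega$, which controls $\partial_r^2\psi$ there since $N$ has a non-degenerate $\partial_r$ component.
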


\begin{proof}
     Let us consider the massive wave operator
\begin{align*}
    \Box_g\psi +\alpha\psi
\end{align*}
    appearing on the left-hand side of (\ref{EqnOfChoice}) in regular $(t^*,r,\theta,\phi)$-coordinates:
    \begin{align}
        \Box_g\psi+\alpha\psi=& g^{t^*t^*}\partial_{t^*}^2\psi+g^{rr}\partial_r^2\psi+2g^{t^*\phi}\partial_{t^*}\partial_{\phi}\psi+2g^{t^*r}\partial_{t^*}\partial_r\psi\nonumber\\
        &+\partial_r(g^{rr})\partial_r\psi+\frac{g^{rr}}{\sqrt{|\det g|}}\partial_r(\sqrt{|\det g|})\partial_r\psi+\frac{g^{\theta\theta}}{\sqrt{|\det g|}}\partial_{\theta}(\sqrt{|\det g|})\partial_{\theta}\psi\nonumber\\
        &+\partial_{\theta}(g^{\theta\theta})\partial_{\theta}\psi
        +2g^{\phi r}\partial_{\phi}\partial_r\psi+\partial_r(g^{t^*r})\partial_{t^*}\psi+\frac{g^{t^*r}}{\sqrt{|\det g|}}\partial_r(\sqrt{|\det g|})\partial_{t^*}\psi\nonumber\\
        &+\frac{g^{r\phi}}{\sqrt{|\det g|}}\partial_r(\sqrt{|\det g|})\partial_{\phi}\psi+\partial_r(g^{r\phi})\partial_{\phi}\psi+\slashed\Delta\psi+\alpha\psi.\label{ExpandBox}
    \end{align}
By rewriting the $\partial_{t^*}^2\psi$ term in (\ref{ExpandBox}) in terms of $N^2\psi$, rearranging, multiplying by $\slashed\Delta\psi$ and integrating over a spacelike slice $\Sigma_{t^*}$, one obtains
    \begin{align}
    \label{MultElliptic}
        \bigg|\int_{\Sigma_{t^*}}\bigg[&\frac{1}{r^2}N^2\psi+\bigg(\frac{1}{r^2}+\xi\bigg)\partial_{t^*}\partial_{\phi}\psi
        +\bigg(\frac{1}{r^3}+\xi\bigg)\partial_{t^*}\partial_r\psi+\bigg(\frac{1}{r^4}+\xi\bigg)\partial_{t^*}\psi+(r+\xi)\partial_r\psi\nonumber\\
        &+\frac{1}{r^2}\partial_{\theta}\psi+\bigg(\frac{1}{r^3}+\xi\bigg)\partial_{\phi}\psi+\Box_g\psi+\alpha\psi\bigg]\slashed\Delta\psi\cdot r^{2}\mathrm{d}r\mathrm{d}\omega\bigg|\nonumber\\
        \geq C\bigg|\int_{\Sigma_{t^*}}\bigg[&\Big(g^{rr}+\xi^2(g^{t^* r})^2\Big)\partial_r^2\psi+\xi^2(g^{t^*\phi})^2\partial_{\phi}^2\psi+2\Big(g^{\phi r}+\xi^2 g^{t^*r}g^{t^*\phi}\Big)\partial_{\phi}\partial_r\psi\nonumber\\
        &+\alpha\psi+\slashed\Delta\psi\bigg]\slashed\Delta\psi\cdot r^{2}\mathrm{d}r\mathrm{d}\omega\bigg|,
    \end{align}
    for some $C>0$. Note that by using the redshift vector field, the degenerately weighted ($g^{rr}|_{r^+}=0$) second-order $r$-derivative has been supplemented with a non-degenerate near-horizon term.\par
   
We estimate all terms on the left-hand side of (\ref{MultElliptic}), beginning with the first:
\begin{align*}
    \bigg|\int_{\Sigma_{t^*}}\frac{1}{r^2}N^2\psi\cdot\slashed\Delta\psi \cdot r^{2}\mathrm{d}r\mathrm{d}\omega\bigg|
    \leq&\bigg(\frac{1}{\varepsilon}\int_{\Sigma_{t^*}}(N^2\psi)^2\mathrm{d}r\mathrm{d}\omega\bigg)^{\frac{1}{2}}\bigg(\varepsilon\int_{\Sigma_{t^*}}(\slashed\Delta\psi)^2\mathrm{d}r\mathrm{d}\omega\bigg)^{\frac{1}{2}}\\
    \leq&\frac{1}{2\varepsilon}\int_{\Sigma_{t^*}}(N^2\psi)^2 \mathrm{d}r\mathrm{d}\omega+\frac{\varepsilon}{2}\int_{\Sigma_{t^*}}(\slashed\Delta\psi)^2\mathrm{d}r\mathrm{d}\omega\\
    \leq&\frac{1}{2\varepsilon}||N\psi||^2_{H^1_{KAdS}(\Sigma_{t^*})}+\frac{\varepsilon}{2}\int_{\Sigma_{t^*}}(\slashed\Delta\psi)^2\mathrm{d}r\mathrm{d}\omega.
\end{align*}
Here, $\varepsilon>0$ is chosen sufficiently small to allow for the $\varepsilon$ weighted second term to be absorbed by the $(\slashed\Delta\psi)^2$ term on the right-hand side of (\ref{MultElliptic}).\par

The terms on the left-hand side of (\ref{MultElliptic}) involving second-order $t^*$-derivatives and first-order derivatives respectively are dealt with similarly, yielding 
\begin{align*}
    &\bigg|\int_{\Sigma_{t^*}}\bigg[\bigg(\frac{1}{r^2}+\xi\bigg)\partial_{t^*}\partial_{\phi}\psi\cdot\slashed\Delta\psi+\bigg(\frac{1}{r^3}+\xi\bigg)\partial_{t^*}\partial_r\psi\cdot\slashed\Delta\psi\bigg]r^{2}\mathrm{d}r\mathrm{d}\omega\bigg|\\
    \leq&\frac{1}{2\varepsilon}\int_{\Sigma_{t^*}}\bigg[|\slashed\nabla\partial_{t^*}\psi|^2 +r^2(\partial_r\partial_{t^*}\psi)^2\bigg]r^2\mathrm{d}r\mathrm{d}\omega+\varepsilon\int_{\Sigma_{t^*}}(\slashed\Delta\psi)^2 r^{2}\mathrm{d}r\mathrm{d}\omega\\
    \leq&\frac{1}{2\varepsilon}||T\psi||^2_{H^1_{KAdS}(\Sigma_{t^*})}+\varepsilon\int_{\Sigma_{t^*}}(\slashed\Delta\psi)^2 r^{2}\mathrm{d}r\mathrm{d}\omega
\end{align*}
and
\begin{align*}
    &\bigg|\int_{\Sigma_{t^*}}\bigg[\bigg(\frac{1}{r^4}+\xi\bigg)\partial_{t^*}\psi\cdot\slashed\Delta\psi+(r+\xi)\partial_r\psi\cdot\slashed\Delta\psi+\frac{1}{r^2}\partial_{\theta}\psi\cdot\slashed\Delta\psi+\bigg(\frac{1}{r^3}+\xi\bigg)\partial_{\phi}\psi\cdot\slashed\Delta\psi\bigg]r^{2}\mathrm{d}r\mathrm{d}\omega\bigg|\\
    \leq&\frac{1}{2\varepsilon}\int_{\Sigma_{t^*}}\bigg[\frac{1}{r^2}(\partial_{t^*}\psi)^2+r^2(\partial_r\psi)^2+|\slashed\nabla\psi|^2\bigg]r^2\mathrm{d}r\mathrm{d}\omega+2\varepsilon\int_{\Sigma_{t^*}}(\slashed\Delta\psi)^2 r^{2}\mathrm{d}r\mathrm{d}\omega\\
    \leq&\frac{1}{2\varepsilon}||\psi||^2_{H^1_{KAdS}(\Sigma_{t^*})}+2\varepsilon\int_{\Sigma_{t^*}}(\slashed\Delta\psi)^2 r^{2}\mathrm{d}r\mathrm{d}\omega.
\end{align*}
Analogously,
\begin{align*}
    &\bigg|\int_{\Sigma_{t^*}}\bigg[(\Box_g\psi+\alpha\psi)\slashed\Delta\psi\cdot r^2\mathrm{d}r\mathrm{d}\omega\bigg|
    \leq\frac{1}{2\varepsilon}\int_{\Sigma_{t^*}}(\Box_g\psi+\alpha\psi)^2 r^2\mathrm{d}r\mathrm{d}\omega+\frac{\varepsilon}{2}\int_{\Sigma_{t^*}}(\slashed\Delta\psi)^2 r^{2}\mathrm{d}r\mathrm{d}\omega.
\end{align*}
Next, we determine what the right-hand side of (\ref{MultElliptic}) controls. Integrating the $r$-derivative terms by parts and dropping boundary terms at infinity as a result of the Dirichlet boundary conditions (\ref{Dirichlet}) gives
\begin{align}
    &\bigg|\int_{\Sigma_{t^*}}\Big(g^{rr}+\xi^2(g^{t^* r})^2\Big)\partial_r^2\psi\cdot\slashed\Delta\psi\cdot r^{2}\mathrm{d}r\mathrm{d}\omega\bigg|\nonumber\\
     =&\bigg|\int_{\Sigma_{t^*}}\Big(g^{rr}+\xi^2(g^{t^* r})^2\Big)\bigg[\partial_r(\partial_r\psi\cdot\slashed\Delta\psi)-\partial_r\psi\cdot\partial_r\slashed\Delta\psi\bigg]r^{2}\mathrm{d}r\mathrm{d}\omega\bigg|\nonumber\\
     =&\bigg|\int_{\Sigma_{t^*}}\bigg[-\partial_r\bigg(\Big(g^{rr}+\xi^2(g^{t^* r})^2\Big)r^2\bigg)\partial_r\psi\cdot\slashed\Delta\psi\nonumber\\
     &\qquad\quad-\Big(g^{rr}+\xi^2(g^{t^* r})^2\Big)\bigg(\partial_r\psi\cdot\slashed\Delta\partial_r\psi+\partial_r(g^{\theta\theta})\partial_r\psi\cdot\partial_{\theta}^2\psi+\partial_r(g^{\phi\phi})\partial_r\psi\cdot\partial_{\phi}^2\psi\bigg)r^{2}\bigg]\mathrm{d}r\mathrm{d}\omega\nonumber\\
     &+\int_{S^2_{t^*,r_+}}\xi^2(g^{t^* r})^2\partial_r\psi\cdot\slashed\Delta\psi\cdot r^2\mathrm{d}\omega\bigg|\nonumber\\
     \geq&C\bigg|\int_{\Sigma_{t^*}}\bigg[r^4|\slashed\nabla\partial_r\psi|^2+\bigg|\slashed\nabla\bigg(\Big(g^{rr}+\xi^2(g^{t^* r})^2\Big)r^2\bigg)\partial_r\psi\cdot\slashed\nabla\partial_r\psi\bigg|-r^3|\partial_r\psi\cdot\slashed\Delta\psi|-\frac{1}{2\varepsilon}|\slashed\nabla\psi|^2r^2\bigg]\mathrm{d}r\mathrm{d}\omega\bigg|\label{DropBdyTerm}\\
     \geq&C\bigg(\int_{\Sigma_{t^*}}\bigg[r^4|\slashed\nabla\partial_r\psi|^2-\frac{\varepsilon}{2}r^2(\slashed\Delta\psi)^2\bigg]\mathrm{d}r\mathrm{d}\omega-\frac{1}{2\varepsilon}||\psi||^2_{H^1_{KAdS}(\Sigma_{t^*})}\bigg)\label{CSTime}
\end{align}
for some $C>0$. Here, as before, $\varepsilon>0$ is chosen sufficiently small. Inequality (\ref{DropBdyTerm}) follows from applying the asymptotic behaviour of the inverse metric components (\ref{InverseAsymptotics}) and absorbing the small contribution of the boundary term on $r_+$ by other terms. Line (\ref{CSTime}) follows from the Cauchy-Schwarz inequality. The second term on the right-hand side of (\ref{CSTime}) can be absorbed on the left-hand side of (\ref{MultElliptic}), whilst the final term can be absorbed by the $(\slashed\Delta\psi)^2$ term on the right-hand side of (\ref{MultElliptic}).\par

The mass term appearing on the right-hand side of (\ref{MultElliptic}) satisfies
\begin{align*}
    \bigg|\int_{\Sigma_{t^*}}\alpha\psi\cdot\slashed\Delta\psi\cdot r^{2}\mathrm{d}r\mathrm{d}\omega\bigg|
    \leq&\frac{1}{2\varepsilon}\int_{\Sigma_{t^*}}(\alpha\psi)^2 r^{2}\mathrm{d}r\mathrm{d}\omega+\frac{\varepsilon}{2}\int_{\Sigma_{t^*}}(\slashed\Delta\psi)^2 r^2\mathrm{d}r\mathrm{d}\omega\\
    \leq& \frac{\alpha^2}{2\varepsilon}||\psi||^2_{H^1_{KAdS}(\Sigma_{t^*})}+\frac{\varepsilon}{2}\int_{\Sigma_{t^*}}(\slashed\Delta\psi)^2 r^2\mathrm{d}r\mathrm{d}\omega,
\end{align*}
and so can be absorbed by other terms.\par

This leaves one terms of potentially bad sign on the right-hand side of (\ref{MultElliptic}). This satisfies
\begin{align}
    &\bigg|\int_{\Sigma_{t^*}}2\Big(g^{\phi r}+\xi^2 g^{t^*r}g^{t^*\phi}\Big)\partial_{\phi}\partial_r\psi\cdot\slashed\Delta\psi\cdot r^{2}\mathrm{d}r\mathrm{d}\omega\bigg|\nonumber\\
    \leq&C\bigg[\int_{\Sigma_{t^*}}\frac{\varepsilon}{2}(\slashed\Delta\psi)^2 r^2\mathrm{d}r\mathrm{d}\omega +\frac{1}{2\varepsilon}||\Phi\psi||^2_{H^1_{KAdS}(\Sigma_{t^*})}\bigg],\label{PhiThing}
\end{align}
for some $C>0$. The first term on the right-hand side of (\ref{PhiThing}) can be absorbed by the $(\slashed\Delta\psi)^2$ term on the right-hand side of (\ref{MultElliptic}), whilst the second can be moved to the left-hand side of (\ref{MultElliptic}).\par

For the Laplacian term on the right-hand side of (\ref{MultElliptic}), we appeal to Corollary \ref{Cor:IntegratedEllipticSphere}. The Gauss curvature term of bad sign on the right-hand side of (\ref{IntegratedEllipticSphere}) is of first-order, so easily controlled via an application of the Cauchy-Schwarz inequality by $||\psi||_{H^1_{KAdS(\Sigma_{t^*})}}$ and absorbed by the left-hand side of (\ref{MultElliptic}). The remaining $\partial_{\phi}^2\psi$ term on the right-hand side of (\ref{MultElliptic}) is of good sign. Finally, the second-order $r$-derivative of $\psi$ is controlled on the near region by $||N\psi||_{H^1_{KAdS}(\Sigma_{t^*})}$ and on the far region via the equation, yielding estimate (\ref{SpacelikeElliptic}).\par
\end{proof}

In order to generalise the estimate of Proposition \ref{prop:SpacelikeElliptic} to higher order derivatives $\Gamma^{\sigma}D_x^2\psi$, one must commute with $N$ sufficiently many times and apply $\slashed\Delta^m$ as a multiplier for $m>0$ sufficiently large. By doing so, one recovers the following result.

\begin{proposition}[A non-degenerate $k$\textsuperscript{th}-order elliptic estimate for derivatives where only $2$ of $k$ are arbitrary]
\label{prop:2ofk}
 Let $\psi$ be a function in $H^k_{KAdS}(\Sigma_{t^*})$ for $k\geq 2$ satisfying Dirichlet boundary conditions (\ref{Dirichlet}). Then,
    \begin{align}
        &\int_{\Sigma_{t^*}}\sum_{|\sigma|= k-2}\bigg[|\slashed\nabla^2\Gamma^{\sigma}\psi|^2+r^2|\slashed\nabla\Gamma^{\sigma}\partial_r\psi|^2+r^4(\Gamma^{\sigma}\partial_r^2\psi)^2\bigg]r^2\mathrm{d}r\mathrm{d}\omega\nonumber\\
        \leq&C\sum_{\substack{|\alpha|\leq k-1,\\ |\beta|\leq k-2}}\bigg[||\Gamma^{\alpha}\psi||_{H^1_{KAdS}(\Sigma_{t^*})}^2+\int_{\Sigma_{t^*}}\Big(\Gamma^{\beta}(\Box_g\psi+\alpha\psi)\Big)^2 r^2\mathrm{d}r\mathrm{d}\omega\bigg]\label{EstOf2ofk}
    \end{align}
    for some $C>0$.
\end{proposition}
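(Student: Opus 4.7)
The approach is induction on $k$. The base case $k=2$ corresponds to the single multi-index $\sigma=0$ and is precisely Proposition \ref{prop:SpacelikeElliptic}. For the inductive step, assume (\ref{EstOf2ofk}) holds at orders $2,\ldots,k-1$ and fix a multi-index $\sigma$ with $|\sigma|=k-2$. The plan is to apply Proposition \ref{prop:SpacelikeElliptic} to $\Gamma^{\sigma}\psi$ in place of $\psi$; the four $||\cdot||^2_{H^1_{KAdS}}$ terms on its right-hand side then contribute quantities of the form $||\Gamma^{\alpha'}\psi||^2_{H^1_{KAdS}(\Sigma_{t^*})}$ with $|\alpha'|\leq k-1$, already present on the right-hand side of (\ref{EstOf2ofk}). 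Note that $\Gamma^{\sigma}\psi$ inherits Dirichlet boundary conditions since $T$ and $\Phi$ preserve $r$ while $N=T$ for $r\geq r_1$. The sole subtlety is the inhomogeneity $\Box_g\Gamma^{\sigma}\psi+\alpha\Gamma^{\sigma}\psi$.

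Decompose $\Box_g\Gamma^{\sigma}\psi = \Gamma^{\sigma}\Box_g\psi + [\Box_g,\Gamma^{\sigma}]\psi$. The first summand yields exactly $\Gamma^{\sigma}(\Box_g\psi+\alpha\psi)$, already on the right-hand side of (\ref{EstOf2ofk}). For the commutator, exploit that $T,\Phi$ are Killing to reduce to $[\Box_g,\Gamma^{\sigma}] = T^{\sigma_2}\Phi^{\sigma_3}[\Box_g,N^{\sigma_1}]$, and iterate the identity $[\Box_g,N^{\sigma_1}] = N[\Box_g,N^{\sigma_1-1}]+[\Box_g,N]N^{\sigma_1-1}$ to obtain
\begin{align*}
[\Box_g,N^{\sigma_1}]\psi = \sum_{j=0}^{\sigma_1-1}N^j[\Box_g,\xi Y]N^{\sigma_1-1-j}\psi,
\end{align*}
where $N-T=\xi(r)Y$ and $[\Box_g,T]=0$ have been used. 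Each occurrence of $[\Box_g,\xi Y]$ is a second-order differential operator whose coefficients are smooth and compactly supported in $r$. Commuting the outer $N^j$ past the two general spatial derivatives inside $[\Box_g,\xi Y]$ produces only lower-order $\Gamma$-derivatives multiplied by smooth, compactly $r$-supported coefficients, yielding a bound of the form
\begin{align*}
\int_{\Sigma_{t^*}}\bigl([\Box_g,\Gamma^{\sigma}]\psi\bigr)^2 r^2\mathrm{d}r\mathrm{d}\omega \leq C\sum_{|\tau|\leq k-3}\int_{\Sigma_{t^*}}\xi(r)\Bigl(|\slashed\nabla^2\Gamma^{\tau}\psi|^2+r^2|\slashed\nabla\partial_r\Gamma^{\tau}\psi|^2+r^4(\partial_r^2\Gamma^{\tau}\psi)^2\Bigr)r^2\mathrm{d}r\mathrm{d}\omega + C\sum_{|\alpha|\leq k-2}||\Gamma^{\alpha}\psi||^2_{H^1_{KAdS}(\Sigma_{t^*})}.
\end{align*}
The top-order integrand on the right is exactly the quantity controlled on the left-hand side of (\ref{EstOf2ofk}) at order $k-1$, so by the inductive hypothesis it is majorised by the right-hand side of (\ref{EstOf2ofk}) at order $k-1$, and hence at order $k$.

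The principal obstacle is the bookkeeping in the commutator expansion: one must verify that every term produced by moving $N^j$ past the two general derivatives inside $[\Box_g,\xi Y]$ really involves at most two derivatives outside the family $\{N,T,\Phi\}$, so that the inductive hypothesis (whose left-hand side is precisely an $L^2$ estimate for such quantities, with $\Gamma$-derivatives of total order at most $k-3$) closes the argument. The compact $r$-support of the commutator coefficients ensures that no dangerous large-$r$ weights propagate through this step and that the $\xi(r)$ factor introduced above is innocuous. Summing over all multi-indices with $|\sigma|=k-2$ then yields (\ref{EstOf2ofk}) at order $k$, completing the induction.
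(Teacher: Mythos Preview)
Your proof is correct and follows essentially the same approach as the paper: both arguments proceed by induction on $k$, reduce via the black-box application of Proposition \ref{prop:SpacelikeElliptic} to $\Gamma^{\sigma}\psi$, exploit that $T,\Phi$ commute trivially with $\Box_g$, and observe that the $N$-commutator $[\Box_g,N^{\sigma_1}]$ is a differential operator of at most second order in general derivatives, of strictly lower order in $N$, and with coefficients supported away from infinity. Your commutator expansion $[\Box_g,N^{\sigma_1}]=\sum_{j}N^j[\Box_g,\xi Y]N^{\sigma_1-1-j}$ makes explicit what the paper summarises by the computation (\ref{NCommutator}) and the remark that ``further commutations with $N$ generate additional commutator terms which are also of at most second order in general derivatives and of lower order in $N$''.
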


\begin{proof}
In order to control third order derivatives of the form $\Gamma D_x^2\psi$, we successively commute (\ref{ExpandBox}) with $T$, $N$ and $\Phi$. The process of proving the second order estimate can then be repeated, applying higher-order multipliers $\slashed\Delta^{\sigma}$ as required to successfully integrate by parts. Commuting with $T$ and $\Phi$ generates no additional terms, so the $TD_x^2\psi$, $\Phi D_x^2\psi$ estimates are trivial. Commutator terms are generated in the $ND_x^2\psi$ estimate, however, these are of at most second order (in general derivatives) and vanishing near infinity so controlled by the estimate of Proposition \ref{prop:SpacelikeElliptic}:
   \begin{align}
       N\Box_g\psi =& \Box_g N\psi + N\bigg(\frac{1}{\sqrt{|\det g|}}\bigg)\partial_{\mu}(g^{\mu\nu}\sqrt{|\det g|}\partial_{\nu}\psi)+\frac{1}{\sqrt{|\det g|}}N\Big(\partial_{\mu}(g^{\mu\nu}\sqrt{|\det g|})\Big)\partial_{\nu}\psi\nonumber\\
       &+\frac{1}{\sqrt{|\det g|}} N(g^{\mu\nu}\sqrt{|\det g|})\partial_{\mu}\partial_{\nu}\psi.\label{NCommutator}
   \end{align}
Further commutations with $N$ generate additional commutator terms which are also of at most second order in general derivatives and of lower order in $N$. These can always be controlled by estimates with fewer commutations.
\end{proof}
Given Propsition \ref{prop:2ofk}, we can next estimate general derivtives on the ``near" region.

\subsubsection{Estimating general derivatives on the ``near" region (\textbf{III})}
\label{sec:GenNear}
We would like to prove the following estimate, which gives control of general $k$\textsuperscript{th}-order derivatives of $\psi$ on the ``near" region. This, combined with the analogous ``far" region result (Proposition \ref{prop:FarKthOrder}) will allow us to control general $k$\textsuperscript{th}-order derivatives of $\psi$ on the entire exterior region, non-degenerately.
\begin{theorem}[A non-degenerate $k$\textsuperscript{th}-order ``near" region estimate]
\label{thm:kNear}
     Let $\psi$ be a function in $H^k_{KAdS}(\Sigma_{t^*})$ for $k\geq 2$ satisfying Dirichlet boundary conditions (\ref{Dirichlet}) and let $\xi(r)$ be as in the definition of $N$ (\ref{RedshiftVF}). Then,
    \begin{align}
        &\sum_{\substack{|\sigma|=k,\\\sigma_2\leq k-1}}\int_{\Sigma_{t^*}}\xi(r)r^{2\sigma_2}|\slashed\nabla^{\sigma_1}\partial_{t^*}^{\sigma_2}\partial_r^{\sigma_3}\psi|^2 r^{2}\mathrm{d}r\mathrm{d}\omega\nonumber\\
        \leq &C\sum_{\substack{|\alpha|\leq k-1,\\ |\beta|\leq k- 2,\\|\gamma|\leq\big\lceil\frac{k-2}{2}\big\rceil}}\bigg[||\Gamma^{\alpha}\psi||^2_{H^1_{KAdS}(\Sigma_{t^*})}+\int_{\Sigma_{t^*}}\bigg(\Big(\Gamma^{\beta}(\Box_g\psi+\alpha\psi)\Big)^2+\Big(\slashed\Delta^{\gamma}(\Box_g\psi+\alpha\psi)\Big)^2\bigg) r^2\mathrm{d}r\mathrm{d}\omega\bigg]\nonumber
    \end{align}
    for some $C>0$.
\end{theorem}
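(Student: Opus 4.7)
The plan follows the outline given in Sections III(A)-(B) of the introduction, reducing Theorem \ref{thm:kNear} to Propositions \ref{prop:2ofk} and \ref{prop:Far2ndOrder} through a recursive angular estimate combined with commutation.

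\textbf{Step 1: The case of at most two $\partial_\theta$ derivatives.} On the support of the cut-off $\xi(r)$, which is contained in the region where the redshift vector field $N$ has a non-vanishing $\partial_r$-component, the derivatives $\partial_{t^*}^{\sigma_1}\partial_r^{\sigma_2}\partial_\phi^{\sigma_3}\partial_\theta^{\sigma_4}\psi$ with $\sigma_4\leq 2$ and $|\sigma|=k$ are pointwise bounded by $|\Gamma^\alpha D_x^2 \psi|$ for $|\alpha|=k-2$. The required estimate on this class of derivatives then follows immediately from Proposition \ref{prop:2ofk}, with no further work.

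\textbf{Step 2: Higher angular derivatives via iterated $\slashed\Delta$ multipliers.} For $\sigma_4\geq 3$, I rearrange the expanded wave operator (\ref{ExpandBox}) so that only $\partial_\theta$-containing terms and $\slashed\Delta\psi$ remain on the left:
\begin{align*}
\frac{g^{\theta\theta}}{\sqrt{|\det g|}}\partial_\theta(\sqrt{|\det g|})\partial_\theta\psi + \partial_\theta(g^{\theta\theta})\partial_\theta\psi + \slashed\Delta\psi = \Box_g\psi + \alpha\psi + \mathcal{E},
\end{align*}
where $\mathcal{E}$ collects all other terms from (\ref{ExpandBox}); these involve at most two derivatives in general directions and are therefore already controlled on $\mathrm{supp}(\xi)$ by Step 1 and the first-order energy. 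Applying $\slashed\Delta$ to this identity, multiplying by $\slashed\Delta\psi$, integrating by parts in the angular directions on the sphere and applying $\varepsilon$-Cauchy-Schwarz to the right-hand side yields control of $\int \xi(r)|\slashed\nabla^3\psi|^2 r^2\,\mathrm{d}r\,\mathrm{d}\omega$. Iterating with $\slashed\Delta^{m-1}$ as multiplier for each $m\geq 3$, absorbing Gauss-curvature lower-order terms arising from Corollary \ref{Cor:IntegratedEllipticSphere}, yields control of $\int \xi(r)|\slashed\nabla^m\psi|^2 r^2\,\mathrm{d}r\,\mathrm{d}\omega$, with the $(\slashed\Delta^\gamma(\Box_g\psi+\alpha\psi))^2$ integrand on the right-hand side accumulating up to $|\gamma|=\lceil (m-2)/2\rceil$.

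\textbf{Step 3: Mixed derivatives via commutation with $\Gamma$.} To produce the remaining mixed derivatives, I successively commute the rearranged identity of Step 2 with the commuting vector fields $T$, $\Phi$, $N$ and repeat the multiplier argument. Since $T$ and $\Phi$ are Killing and commute with $\Box_g$, commutations with these fields generate only the obvious derivatives of the coefficients in (\ref{ExpandBox}), all of which are lower-order. Commutation with $N$ produces additional terms of the form (\ref{NCommutator}); these are strictly lower-order in the number of $N$-derivatives and involve at most two derivatives in general directions, so are controlled by the inductive hypothesis together with Proposition \ref{prop:2ofk}. Combining Steps 1-3 over all multi-indices $\sigma$ with $|\sigma|=k$ yields the stated inequality.

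The main obstacle is keeping the induction bookkeeping consistent: at each application of $\slashed\Delta^{m-1}$ to (\ref{ExpandBox}) and each commutation with $\Gamma$, the error terms moved to the right-hand side must be strictly lower-order in angular derivative count on $\mathrm{supp}(\xi)$, so that Cauchy-Schwarz absorption by the principal $|\slashed\nabla^{m+1}\psi|^2$ term actually closes. This requires checking that the pieces of (\ref{ExpandBox}) absorbed into $\mathcal{E}$ remain at most order $m$ in angular derivatives after $\slashed\Delta^{m-1}$ is applied, which is the case because those pieces contain at most two general derivatives overall and are thus covered by Step 1. The accumulation of $\lceil(k-2)/2\rceil$ powers of $\slashed\Delta$ on the nonlinear source $\Box_g\psi+\alpha\psi$ matches the order of multiplier required to reach angular derivative count $k$, which is exactly what appears on the right-hand side of the theorem.
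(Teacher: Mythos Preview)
Your proposal is correct and follows essentially the same approach as the paper: reduce to Proposition \ref{prop:2ofk} for derivatives with at most two $\partial_\theta$'s (the paper isolates this as Proposition \ref{prop:2theta}), then rearrange (\ref{ExpandBox}) to isolate the $\partial_\theta$-terms, apply powers of $\slashed\Delta$ and use $\slashed\Delta^p\psi$ multipliers with integration by parts on the spheres to climb in angular order, and finally commute with $\Gamma$ for the mixed derivatives. One minor bookkeeping point: to reach $|\slashed\nabla^m\psi|^2$ for large $m$ you need both to apply successively higher powers of $\slashed\Delta$ to the rearranged identity \emph{and} to raise the multiplier order, not just the latter; the paper's demonstration applies $\slashed\Delta$ once and then uses multipliers $\slashed\Delta^p\psi$ for $p=1,2$ to reach $m=3,4$, with the general case obtained by iterating both operations.
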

In order to prove Theorem \ref{thm:kNear}, we require the following result which is immediately implied by Proposition \ref{prop:2ofk}.
\begin{proposition}[A non-degenerate $k$\textsuperscript{th}-order ``near" region estimate for derivatives where at most 2 of $k$ are $\partial_{\theta}$]
    \label{prop:2theta}
     Let $\psi$ be a function in $H^k_{KAdS}(\Sigma_{t^*})$ for $k\geq 2$ satisfying Dirichlet boundary conditions (\ref{Dirichlet}) and let $\xi(r)$ be as in the definition of $N$ (\ref{RedshiftVF}). Then,
    \begin{align}
        &\int_{\Sigma_{t^*}}\sum_{\substack{|\sigma|= k-2,\\\sigma_4\leq 2}}\xi(r)(\partial_{t^*}^{\sigma_1}\partial_r^{\sigma_2}\partial_{\phi}^{\sigma_3}\partial_{\theta}^{\sigma_4}\psi)^2r^2\mathrm{d}r\mathrm{d}\omega\nonumber\\
        \leq&C\sum_{\substack{|\alpha|\leq k-1,\\ |\beta|\leq k-2}}\bigg[||\Gamma^{\alpha}\psi||_{H^1_{KAdS}(\Sigma_{t^*})}^2+\int_{\Sigma_{t^*}}\Big(\Gamma^{\beta}(\Box_g\psi+\alpha\psi)\Big)^2 r^2\mathrm{d}r\mathrm{d}\omega\bigg]
    \end{align}
    for some $C>0$.
\end{proposition}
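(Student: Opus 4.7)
The plan is to derive this estimate as a near-region corollary of Proposition \ref{prop:2ofk}, exploiting the fact that on the support $\{\xi(r)=1\}$ of the cut-off in (\ref{RedshiftVF}) the redshift vector field $N$ has a non-vanishing $\partial_r$ component:
\begin{align*}
    N = \big(1+\sqrt{-g^{t^*t^*}}\big)\partial_{t^*}-\frac{g^{t^*r}}{\sqrt{-g^{t^*t^*}}}\partial_r-\frac{g^{t^*\phi}}{\sqrt{-g^{t^*t^*}}}\partial_\phi.
\end{align*}
Since $g^{t^*r}$ is uniformly bounded away from zero on this compact-in-$r$ set (this is precisely what guarantees regularity of the coordinates $(t^*,r,\theta,\phi)$ at $\mathcal{H}^+$), one may algebraically invert the above relation to express $\partial_r = a_0(r,\theta)\,N + a_1(r,\theta)\,T + a_2(r,\theta)\,\Phi$ on $\{\xi=1\}$, where the coefficients $a_i$ are smooth and uniformly bounded.

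First, I would substitute this identity iteratively into each term $\partial_{t^*}^{\sigma_1}\partial_r^{\sigma_2}\partial_\phi^{\sigma_3}\partial_\theta^{\sigma_4}\psi$ appearing on the left-hand side. After repeated application of the Leibniz rule and using that $T=\partial_{t^*}$ and $\Phi=\partial_\phi$ commute with each other and with $\partial_\theta$, the integrand rewrites, on the support of $\xi$, as a finite linear combination of terms of the form $b(r,\theta)\,\Gamma^{\tau}\partial_\theta^{\sigma_4}\psi$ with smooth bounded coefficient $b$ and $|\tau|\leq\sigma_1+\sigma_2+\sigma_3$, plus lower-order commutator contributions in which the coefficients $a_i$ have been differentiated and the total differential order on $\psi$ has strictly decreased. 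Since $\sigma_4\leq 2$, each factor $\partial_\theta^{\sigma_4}\psi$ is pointwise bounded by $|\psi|+|\slashed\nabla\psi|+|\slashed\nabla^2\psi|$.

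The second step is then to apply Proposition \ref{prop:2ofk}: after Cauchy--Schwarz, the top-order contributions are controlled by $\sum_{|\tau|\leq k-2}\int_{\Sigma_{t^*}}|\slashed\nabla^2\Gamma^{\tau}\psi|^2 r^2\mathrm{d}r\mathrm{d}\omega$, which is bounded by the right-hand side of (\ref{EstOf2ofk}). All lower-order commutator contributions involve at most $k-1$ general derivatives of $\psi$ and are directly absorbed by $\sum_{|\alpha|\leq k-1}||\Gamma^{\alpha}\psi||_{H^1_{KAdS}(\Sigma_{t^*})}^2$, which also appears on the right-hand side of (\ref{EstOf2ofk}). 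The principal, and really only, technical obstacle is the combinatorial bookkeeping of these commutator terms; because the entire argument is restricted to the compact-in-$r$ set $\{\xi=1\}$, every coefficient function generated by the inversion and by commuting $\Gamma$-vector fields past $\partial_\theta^{\sigma_4}$ is uniformly bounded, so no $r$-weights grow and every commutator term falls strictly below the top order in $\psi$ and is absorbed by the right-hand side of the claimed inequality.
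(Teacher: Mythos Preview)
Your approach is correct and is precisely what the paper intends: it states only that Proposition~\ref{prop:2theta} is ``immediately implied by Proposition~\ref{prop:2ofk}'', and your inversion of $\partial_r$ in terms of $N$, $T$, $\Phi$ on the near region is exactly the mechanism behind that implication. One small caveat worth tightening: the coefficients $a_i$ in your inversion $\partial_r=a_0N+a_1T+a_2\Phi$ are uniformly bounded only on $\{\xi=1\}$, since the $\partial_r$-component of $N=T+\xi Y$ degenerates as $\xi\to 0$ near $r=r_1$; on the transition strip $r_0<r<r_1$ (compact and away from the horizon) you should instead invoke the far-region estimate of Proposition~\ref{prop:FarKthOrder}, whose cut-off $\tilde\xi$ is arranged to equal $1$ there.
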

We can now prove Theorem \ref{thm:kNear}. We demonstrate how to obtain control of the third and fourth-order $\partial_{\theta}$ derivatives. The general higher-order statement follows by iterating the same steps sufficiently many times and commuting with $\Gamma$.

\begin{proof}
Rearranging (\ref{ExpandBox}) so that only terms involving $\partial_{\theta}$ derivatives of $\psi$ appear on the right-hand side gives
    \begin{align}
        &\Box_g\psi+\alpha\psi-g^{t^*t^*}\partial_{t^*}^2\psi-g^{rr}\partial_r^2\psi-2g^{t^*\phi}\partial_{t^*}\partial_{\phi}\psi-2g^{t^*r}\partial_{t^*}\partial_r\psi-\partial_r(g^{rr})\partial_r\psi\nonumber\\
        &-\frac{g^{rr}}{\sqrt{|\det g|}}\partial_r(\sqrt{|\det g|})\partial_r\psi
        -2g^{\phi r}\partial_{\phi}\partial_r\psi-\partial_r(g^{t^*r})\partial_{t^*}\psi-\frac{g^{t^*r}}{\sqrt{|\det g|}}\partial_r(\sqrt{|\det g|})\partial_{t^*}\psi\nonumber\\
        &-\frac{g^{r\phi}}{\sqrt{|\det g|}}\partial_r(\sqrt{|\det g|})\partial_{\phi}\psi-\partial_r(g^{r\phi})\partial_{\phi}\psi-\alpha\psi\nonumber\\
        =&\frac{g^{\theta\theta}}{\sqrt{|\det g|}}\partial_{\theta}(\sqrt{|\det g|})\partial_{\theta}\psi+\partial_{\theta}(g^{\theta\theta})\partial_{\theta}\psi+\slashed\Delta\psi.\label{OnlyTheta}
    \end{align}
Applying the $\slashed\Delta$ operator to both sides of expression (\ref{OnlyTheta}), multiplying by $\xi(r)\slashed\Delta^p\psi$ for $p=1,2$ and integrating over $\Sigma_{t^*}$ then yields
\begin{align}
    &\bigg|\int_{\Sigma_{t^*}}\sum_{\substack{|\sigma|\leq 4,\\\sigma_4\leq 2}}\bigg[\slashed\Delta(\Box_g\psi+\alpha\psi)+\partial_{t^*}^{\sigma_1}\partial_r^{\sigma_2}\partial_{\phi}^{\sigma_3}\partial_{\theta}^{\sigma_4}\psi\bigg]\xi(r)\slashed\Delta^p\psi\cdot r^2\mathrm{d}r\mathrm{d}\omega\bigg|\nonumber\\
    \geq& C\bigg|\int_{\Sigma_{t^*}}\bigg[\frac{g^{\theta\theta}}{\sqrt{|\det g|}}\partial_{\theta}(\sqrt{|\det g|})\slashed\Delta\partial_{\theta}\psi+\partial_{\theta}(g^{\theta\theta})\slashed\Delta\partial_{\theta}\psi+\slashed\Delta^2\psi\bigg]\xi(r)\slashed\Delta^p\psi\cdot r^2\mathrm{d}r\mathrm{d}\omega\bigg|,\label{UsePrevNear}
\end{align}
for some $C>0$.\par 

Set $p=1$. By Cauchy-Schwarz and Proposition \ref{prop:2theta}, this implies
\begin{align}
    &\sum_{\substack{|\alpha|\leq 3,\\ |\beta|\leq 2}}\bigg[||\Gamma^{\alpha}\psi||_{H^1_{KAdS}(\Sigma_{t^*})}^2+\int_{\Sigma_{t^*}}\bigg(\Big(\Gamma^{\beta}(\Box_g\psi+\alpha\psi)\Big)^2+\Big(\slashed\Delta(\Box_g\psi+\alpha\psi)\Big)^2\bigg) r^2\mathrm{d}r\mathrm{d}\omega\bigg]\nonumber\\
    \geq& C\int_{\Sigma_{t^*}}\bigg[\frac{g^{\theta\theta}}{\sqrt{|\det g|}}\partial_{\theta}(\sqrt{|\det g|})\slashed\Delta\partial_{\theta}\psi+\partial_{\theta}(g^{\theta\theta})\slashed\Delta\partial_{\theta}\psi+\slashed\Delta^2\psi\bigg]\xi(r)\slashed\Delta\psi\cdot r^2\mathrm{d}r\mathrm{d}\omega,\label{IntTheta}
\end{align}
for some $C>0$. Control of third-order angular derivatives (in particular, the third-order $\partial_{\theta}$ derivative) arises from integrating the final term on the right-hand side of (\ref{IntTheta}) by parts once in the angular directions:
\begin{align}
    \bigg|\int_{\Sigma_{t^*}}\xi(r)\cdot\slashed\Delta^2\psi\cdot\slashed\Delta\psi\cdot r^2\mathrm{d}r\mathrm{d}\omega\bigg|\geq\int_{\Sigma_{t^*}}\xi(r)|\slashed\nabla^3\psi|^2r^2\mathrm{d}r\mathrm{d}\omega.\label{3Theta}
\end{align}
The remaining terms of potentially bad sign are controlled via Cauchy-Schwarz,
\begin{align}
    &\bigg|\int_{\Sigma_{t^*}}\bigg[\frac{g^{\theta\theta}}{\sqrt{|\det g|}}\partial_{\theta}(\sqrt{|\det g|})\slashed\Delta\partial_{\theta}\psi\cdot\slashed\Delta\psi+\partial_{\theta}(g^{\theta\theta})\slashed\Delta\partial_{\theta}\psi\cdot\slashed\Delta\psi\bigg]r^2\mathrm{d}r\mathrm{d}\omega\bigg|\nonumber\\
    &\leq C\int_{\Sigma_{t^*}}\bigg[\varepsilon|\slashed\nabla^3\psi|^2+\frac{1}{\varepsilon}(\slashed\Delta\psi)^2\bigg]r^2\mathrm{d}r\mathrm{d}\omega,\label{ThetaAbsorb}
\end{align}
for some $C>0$, where $\varepsilon>0$ is chosen to be sufficiently small. The second term on the right-hand side of (\ref{ThetaAbsorb}) can be absorbed on the left-hand side of (\ref{IntTheta}).\par

Combining (\ref{3Theta}), (\ref{ThetaAbsorb}) and (\ref{IntTheta}) (with $p=1$) gives
\begin{align}
  &\sum_{\substack{|\alpha|\leq 3,\\ |\beta|\leq 2}}\bigg[||\Gamma^{\alpha}\psi||_{H^1_{KAdS}(\Sigma_{t^*})}^2+\int_{\Sigma_{t^*}}\bigg(\Big(\Gamma^{\beta}(\Box_g\psi+\alpha\psi)\Big)^2+\Big(\slashed\Delta(\Box_g\psi+\alpha\psi)\Big)^2\bigg) r^2\mathrm{d}r\mathrm{d}\omega\bigg]\nonumber\\
    \geq& C\int_{\Sigma_{t^*}}\xi(r)|\slashed\nabla^3\psi|^2 r^2\mathrm{d}r\mathrm{d}\omega,\label{3rdOrderNear} 
\end{align}
for some $C>0$.\par

Now, set $p=2$. Applying Cauchy-Schwarz with a sufficiently small $\varepsilon>0$ weight to the left-hand side of (\ref{UsePrevNear}) and applying Proposition \ref{prop:2theta} gives
\begin{align}
    &\sum_{\substack{|\alpha|\leq 3,\\ |\beta|\leq 2}}\bigg[||\Gamma^{\alpha}\psi||_{H^1_{KAdS}(\Sigma_{t^*})}^2+\int_{\Sigma_{t^*}}\bigg(\Big(\Gamma^{\beta}(\Box_g\psi+\alpha\psi)\Big)^2+\Big(\slashed\Delta(\Box_g\psi+\alpha\psi)\Big)^2\bigg) r^2\mathrm{d}r\mathrm{d}\omega\bigg]\nonumber\\
    \geq& C\bigg|\int_{\Sigma_{t^*}}\bigg[\frac{g^{\theta\theta}}{\sqrt{|\det g|}}\partial_{\theta}(\sqrt{|\det g|})\slashed\Delta\partial_{\theta}\psi+\partial_{\theta}(g^{\theta\theta})\slashed\Delta\partial_{\theta}\psi+\slashed\Delta^2\psi\nonumber\\
    &\qquad\qquad-\varepsilon(\slashed\Delta^2\psi)^2\bigg]\xi(r)\slashed\Delta^2\psi\cdot r^2\mathrm{d}r\mathrm{d}\omega\bigg|\label{LastTheta}
\end{align}
The first two terms on the right-hand side of (\ref{LastTheta}) are controlled by (\ref{3rdOrderNear}). The third term controls $\slashed\nabla^4\psi$ via an application of Corollary \ref{Cor:IntegratedEllipticSphere}, after absorbing the Gauss curvature term of bad sign as before. This gives 
\begin{align*}
    &\sum_{\substack{|\alpha|\leq 3,\\ |\beta|\leq 2}}\bigg[||\Gamma^{\alpha}\psi||_{H^1_{KAdS}(\Sigma_{t^*})}^2+\int_{\Sigma_{t^*}}\bigg(\Big(\Gamma^{\beta}(\Box_g\psi+\alpha\psi)\Big)^2+\Big(\slashed\Delta(\Box_g\psi+\alpha\psi)\Big)^2\bigg) r^2\mathrm{d}r\mathrm{d}\omega\bigg]\nonumber\\
    \geq& C\int_{\Sigma_{t^*}}\xi(r)\bigg[|\slashed\nabla^3\psi|^2+|\slashed\nabla^4\psi|^2\bigg]r^2\mathrm{d}r\mathrm{d}\omega,
\end{align*}
for some $C>0$.
\end{proof}

\subsubsection{Estimating general derivatives on the entire exterior region (\textbf{IV})}
\label{sec:Level2}
Given Proposition \ref{prop:FarKthOrder} and Theorem \ref{thm:kNear}, we now have the following estimate for general $k$\textsuperscript{th}-order spatial derivatives on the entire slice $\Sigma_{t^*}$.
\begin{theorem}[A non-degenerate $k$\textsuperscript{th}-order $L^2$ estimate on spacelike slices]
\label{thm:KElliptic}
   Let $\psi$ be a function in $H^k_{KAdS}(\Sigma_{t^*})$ for $k\geq 2$ satisfying Dirichlet boundary conditions (\ref{Dirichlet}). Then,
    \begin{align}
        &\sum_{\substack{|\sigma|=k,\\\sigma_2\leq k-1}}\int_{\Sigma_{t^*}}r^{2\sigma_2}|\slashed\nabla^{\sigma_1}\partial_{t^*}^{\sigma_2}\partial_r^{\sigma_3}\psi|^2 r^{2}\mathrm{d}r\mathrm{d}\omega\nonumber\\
        \leq &C\sum_{\substack{|\alpha|\leq k-1,\\ |\beta|\leq k- 2,\\|\gamma|\leq\big\lceil\frac{k-2}{2}\big\rceil}}\bigg[||\Gamma^{\alpha}\psi||^2_{H^1_{KAdS}(\Sigma_{t^*})}+\int_{\Sigma_{t^*}}\bigg(\Big(\Gamma^{\beta}(\Box_g\psi+\alpha\psi)\Big)^2+\Big(\Box_g^{\gamma}(\Box_g\psi+\alpha\psi)\Big)^2\bigg) r^2\mathrm{d}r\mathrm{d}\omega\bigg]\label{EstOfThm3.10}
    \end{align}
    for some $C>0$.
\end{theorem}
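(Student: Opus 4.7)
The plan is to derive Theorem~\ref{thm:KElliptic} by assembling the ``far'' region estimate of Proposition~\ref{prop:FarKthOrder} and the ``near'' region estimate of Theorem~\ref{thm:kNear} via a partition-of-unity argument using the cut-offs $\xi(r)$ and $\tilde{\xi}(r)$ already in circulation.

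The first step is to note that the two cut-offs satisfy $\xi(r) + \tilde{\xi}(r) \geq c > 0$ uniformly on $[r_{+},\infty)$: on $[r_{+}, r_{0}]$ one has $\xi \equiv 1$, on $[r_{0}+\tilde{\delta}, \infty)$ one has $\tilde{\xi} \equiv 1$, and on the overlap interval $[r_{0}, r_{0}+\tilde{\delta}] \subset [r_{0}, r_{1})$ the function $\xi$ is strictly positive because it has not yet reached its zero value at $r_{1}$. Consequently the integrand on the left-hand side of Theorem~\ref{thm:KElliptic} may be dominated by $c^{-1}$ times the sum of its $\xi$-weighted and $\tilde{\xi}$-weighted versions, to which I then apply the near and far region bounds respectively.

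On the $\tilde{\xi}$-weighted piece, Proposition~\ref{prop:FarKthOrder} applies directly: every term on its right-hand side embeds into the right-hand side of Theorem~\ref{thm:KElliptic}, since $T$-derivatives form a subfamily of $\Gamma$-derivatives and the $\Box_{g}^{\gamma}(\Box_{g}\psi+\alpha\psi)$ contribution appears verbatim in both. On the $\xi$-weighted piece, Theorem~\ref{thm:kNear} supplies a bound featuring $\slashed\Delta^{\gamma}(\Box_{g}\psi+\alpha\psi)$ rather than the desired $\Box_{g}^{\gamma}(\Box_{g}\psi+\alpha\psi)$, so the substantive step is to convert between the two. I would write $\slashed\Delta = \Box_{g} - A$, where $A$ collects all the non-spherical second-order and lower-order terms in the expansion (\ref{ExpandBox}) of $\Box_{g}$; crucially, on the support of $\xi$ the derivative $\partial_{r}$ can be re-expressed as a smooth combination of $N$, $T$, $\Phi$ (since $N = T + \xi(r)Y$ there with $Y$ carrying a non-degenerate $\partial_{r}$ component), so $A$ becomes a differential operator of order at most two in the commuting fields $\Gamma$ with bounded coefficients. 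Iterating $\slashed\Delta = \Box_{g} - A$ a total of $\gamma$ times, and commuting $\Box_{g}$ past each occurrence of $\Gamma$ via the analogue of identity (\ref{NCommutator}) to generate only lower-order error terms, produces a pointwise bound of schematic form
\[
|\slashed\Delta^{\gamma}(\Box_{g}\psi+\alpha\psi)|^{2} \lesssim |\Box_{g}^{\gamma}(\Box_{g}\psi+\alpha\psi)|^{2} + \text{(controllable error terms)},
\]
in which each error term is a derivative of $\psi$ of order at most $k$ and may be absorbed into either $\|\Gamma^{\alpha}\psi\|^{2}_{H^{1}_{KAdS}}$ with $|\alpha| \leq k-1$ or $\Gamma^{\beta}(\Box_{g}\psi+\alpha\psi)$ with $|\beta| \leq k-2$, possibly after using the equation $\Box_{g}\psi + \alpha\psi = \mathcal{F}$ to trade a spare factor of $\Box_{g}\psi+\alpha\psi$ for two spatial derivatives of $\psi$.

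The main obstacle I expect is precisely this last conversion: tracking every commutator term produced by $[\Gamma^{\bullet}, \Box_{g}^{\bullet}]$ and by the iteration of $\slashed\Delta = \Box_{g} - A$ to confirm that all of them land within the prescribed index ranges $|\alpha| \leq k-1$, $|\beta| \leq k-2$, $|\gamma| \leq \lceil (k-2)/2\rceil$ appearing on the right-hand side of Theorem~\ref{thm:KElliptic}. Once this derivative-count bookkeeping has been carried out, the remainder of the proof is a routine partition-of-unity combination of the two previously established estimates.
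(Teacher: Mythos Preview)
Your proposal is correct and follows the same approach as the paper. The paper states Theorem~\ref{thm:KElliptic} immediately after Proposition~\ref{prop:FarKthOrder} and Theorem~\ref{thm:kNear} with no proof beyond the sentence ``Given Proposition~\ref{prop:FarKthOrder} and Theorem~\ref{thm:kNear}, we now have the following estimate\ldots'', so the partition-of-unity combination you describe is exactly what is intended, and your identification of the $\slashed\Delta^{\gamma}\to\Box_g^{\gamma}$ conversion on the near region as the only substantive bookkeeping step (together with a workable strategy for it) fills in a detail the paper leaves entirely implicit.

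One small caveat: your parenthetical about ``using the equation $\Box_g\psi+\alpha\psi=\mathcal{F}$'' is slightly misplaced, since Theorem~\ref{thm:KElliptic} is stated for arbitrary $\psi\in H^k_{KAdS}(\Sigma_{t^*})$ rather than for solutions; the quantity $\Box_g\psi+\alpha\psi$ already sits on the right-hand side of the estimate as data, so no substitution by $\mathcal{F}$ is needed or available at this stage.
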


\subsection{Sobolev embedding in spacelike slices}
\label{sec:Sobolev}
We recall the Sobolev embedding result for a general complete Riemannian manifold with positive injectivity radius and Ricci curvature bounded below \cite{Riemannian}, stating it for the spheres $S^2_{t^*,r}\subset\mathcal{M}$. Theorem \ref{SliceSobolev}, an embedding on spacelike slices $\Sigma_{t^*}$ will follow as a corollary.

\begin{theorem}[$L^{\infty}$ Sobolev embedding on spheres $S^2_{t^*,r}$ {\cite[Thm 3.4]{Riemannian}}]
\label{SphericalSobolev}
    Fix $(t^*,r)$ and consider the Riemannian $2$-submanifold $S^2_{t^*,r}$ of $\mathcal{M}$ equipped with the induced metric. Let $u:S^2_{t^*,r}\to\mathbb{R}$ be a function in the Sobolev space \begin{align*}
        H^2_{S^2_{t^*,r}}=\bigg\{f:S^2_{t^*,r}\to\mathbb{R}\text{ }\bigg|\text{ }\int_{S^2_{t^*,r}}\sum_{j=0}^2|\slashed{\nabla}^j f|^2 r^2\mathrm{d}\omega<\infty\bigg\}.
    \end{align*}
    Then one can bound $u$ uniformly,
    \begin{align*}
        \sup_{S^2_{t^*,r}}|u|^2 &\leq \tilde{c}\int_{S^2_{t^*,r}}\sum_{j=0}^2|\slashed{\nabla}^j u|^2 r^2\mathrm{d}\omega,    
    \end{align*}
    for some $\tilde{c}>0$.
\end{theorem}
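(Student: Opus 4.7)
The statement is the classical $H^2\hookrightarrow L^\infty$ Sobolev embedding on a compact Riemannian $2$-manifold, applied to $(S^2_{t^*,r},\slashed g)$. The plan is to invoke \cite[Thm 3.4]{Riemannian}, which provides this embedding for complete Riemannian manifolds with positive injectivity radius and Ricci curvature bounded below, and then to convert the conclusion into the form stated here, tracking $r$-weights carefully to obtain a constant $\tilde c$ independent of $(t^*,r)$.

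The first step is to verify the geometric hypotheses, uniformly in $r$. I would introduce the rescaled metric $\bar g:=r^{-2}\slashed g$. From the explicit components
\[
\slashed g_{\theta\theta}=\frac{\Sigma}{\Delta_\theta},\qquad \slashed g_{\phi\phi}=\frac{\Delta_\theta(r^2+a^2)^2-\Delta_{-}a^2\sin^2\theta}{\Xi^2\Sigma}\sin^2\theta,
\]
one checks that $\bar g$ extends smoothly to $r=\infty$ and that the family $\{\bar g\}_{r\in[r_+,\infty]}$ is uniformly $C^\infty$-bounded, with all components comparable (uniformly in $(r,\theta,\phi)$) to those of a fixed reference metric on $S^2$ such as the round $\slashed g_0$. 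Consequently the injectivity radius of $\slashed g$ is of order $r\geq r_+$ and the Ricci curvature (scale-invariant under constant conformal rescaling in dimension $2$) is bounded below by a constant depending only on $a$, $\ell$, $r_+$. This yields
\[
\sup_{S^2_{t^*,r}}|u|^2\leq C\,\|u\|^2_{H^2(\slashed g)} = C\int_{S^2_{t^*,r}}\bigl(|u|^2+|\slashed\nabla u|^2_{\slashed g}+|\slashed\nabla^2 u|^2_{\slashed g}\bigr)\,dV_{\slashed g}
\]
with $C$ uniform in $r$.

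The second step is to translate $dV_{\slashed g}$ to $r^2\,d\omega$. A direct computation of $\sqrt{\det\slashed g}$ from the components above gives $\sqrt{\det\slashed g}\sim r^2\sin\theta$ with implicit constants depending only on $a,\ell,r_+$, so $dV_{\slashed g}$ and $r^2\,d\omega$ are uniformly equivalent. Since the integrands $|\slashed\nabla^j u|^2_{\slashed g}$ appear on both sides unchanged, this equivalence alone converts the previous display into the stated inequality, with $\tilde c$ depending only on $a$, $\ell$, $r_+$.

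The main obstacle is the uniformity in $r$ as $r\to\infty$, which I expect to reduce entirely to the observation that the rescaled family $\{\bar g\}$ is uniformly controlled. For a self-contained route avoiding the cited reference, one may alternatively expand $u=\sum_{L}a_L\psi_L$ in an $L^2(dV_{\slashed g})$-orthonormal eigenbasis of $\Delta_{\slashed g}$, noting by Weyl's law for $\bar g$ that the $L$-th eigenvalue of $\Delta_{\slashed g}$ is of order $L^2/r^2$; then a short calculation identifies the RHS of the theorem (up to uniform constants) with $\sum_L a_L^2\,\sigma_L^2$, where $\sigma_L^2\sim r^2+L^2+L^4/r^2$, and a Cauchy--Schwarz yields $\sup|u|^2\leq(\sum a_L^2\sigma_L^2)\cdot \sup_x\sum_L\sigma_L^{-2}|\psi_L(x)|^2$. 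The second factor is bounded uniformly in $r$ by splitting the sum at $L\sim r$ (using $\sigma_L^2\gtrsim r^2$ for $L\leq r$ and $\sigma_L^2\gtrsim L^4/r^2$ for $L\geq r$) together with the addition formula $\sum_{\text{mult}(L)}|\psi_L(x)|^2\lesssim L$, delivering the same uniform constant.
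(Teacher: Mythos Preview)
The paper does not give its own proof of this theorem: it is simply quoted from the cited reference \cite[Thm~3.4]{Riemannian}, with the one-line remark preceding the statement that the result holds ``for a general complete Riemannian manifold with positive injectivity radius and Ricci curvature bounded below.'' Your proposal is therefore not in conflict with the paper's approach---you invoke the same reference---but you go considerably further than the paper does. In particular, you correctly identify and address the one nontrivial point the paper leaves implicit: that the constant $\tilde c$ must be uniform in $r$ for the theorem to be useful in the subsequent proof of Theorem~\ref{SliceSobolev} (where $\tilde c$ is pulled outside an $r$-integral). Your rescaling argument via $\bar g=r^{-2}\slashed g$ and the observation that the family $\{\bar g\}_{r\in[r_+,\infty]}$ is uniformly controlled is the right way to secure this, and the volume-form comparison is straightforward. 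The alternative spectral argument you sketch is a genuine self-contained route not present in the paper; it is more work than needed given the citation, but it is a reasonable backup and the Cauchy--Schwarz plus Weyl-law splitting you outline is sound in principle.
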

Applying Theorem \ref{thm:KElliptic} and Theorem \ref{SphericalSobolev} allows us to derive the following embeddings.

\begin{theorem}[$L^{\infty}$ Sobolev embedding for normalised derivatives on spacelike slices $\Sigma_{t^*}$]
\label{SliceSobolev}
    Fix $t^*$ and let $\psi$ be a function in $H^k_{KAdS}(\Sigma_{t^*})$ for $k\geq 3$ satisfying Dirichlet boundary conditions (\ref{Dirichlet}). Then one can bound $\psi$ and its $D$ derivatives (\ref{UnitDerivatives}) up to order $|\sigma|\leq k-3$ uniformly as
    \begin{align}
    &||D^{\sigma}\psi||_{L^{\infty}(\Sigma_{t^*})}\nonumber\\
     \leq&\frac{C}{r^{\frac{1}{2}}}\Bigg[\sum_{\substack{|\alpha|\leq |\sigma|+2,\\|\beta|\leq|\sigma|+1 ,\\|\gamma|\leq\big\lceil\frac{|\sigma|+1}{2}\big\rceil}}\bigg[||\Gamma^{\alpha}\psi||^2_{H^1_{KAdS}(\Sigma_{t^*})}+\int_{\Sigma_{t^*}}\bigg(\Big(\Gamma^{\beta}(\Box_g\psi+\alpha\psi)\Big)^2+\Big(\Box_g^{\gamma}(\Box_g\psi+\alpha\psi)\Big)^2\bigg) r^2\mathrm{d}r\mathrm{d}\omega\bigg]\Bigg]^{\frac{1}{2}}\label{SobolevEst}
    \end{align}
    for some $C>0$. Up to order $|\sigma|=k-4$, one has estimate (\ref{SobolevEst}) for the $\overline{D}$ derivatives (\ref{UnitDerivatives}) of $\psi$.
\end{theorem}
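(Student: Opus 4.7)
The strategy combines the spherical Sobolev embedding (Theorem \ref{SphericalSobolev}) with a weighted one-dimensional Sobolev-type inequality in the radial direction, and then invokes the elliptic hierarchy of Theorem \ref{thm:KElliptic} to convert the resulting spatial $L^2$ norms into the $\Gamma^{\alpha}$-commuted energies and nonlinearity integrals appearing on the right-hand side of (\ref{SobolevEst}).

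First I will derive a radial Sobolev estimate: for a scalar function $f(r)$ on $[r_+,\infty)$ with adequate decay at $\mathcal{I}$,
\begin{equation*}
    r|f(r)|^2 \leq C\int_{r_+}^\infty \Big(|f|^2 + (r'\partial_{r'}f)^2\Big)\,r'^2\,\mathrm{d}r',
\end{equation*}
obtained by applying the fundamental theorem of calculus to $r f^2$, followed by Cauchy-Schwarz and the bound $r'\geq r_+$ to match weights. The boundary term at $\mathcal{I}$ vanishes either directly from the Dirichlet condition (\ref{Dirichlet}) (when $\alpha>\tfrac{5}{4}$) or, in the complementary range, by a standard cut-off-and-limit argument exploiting the finite $H^k_{KAdS}(\Sigma_{t^*})$ norm.

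Next, applying this radial estimate pointwise in $(\theta,\phi)$ to $f(r) = D^{\sigma}\psi(r,\theta,\phi)$, taking the supremum over the sphere, and invoking Theorem \ref{SphericalSobolev} on both $D^{\sigma}\psi$ and $r\partial_r D^{\sigma}\psi$, I will obtain
\begin{equation*}
    r\sup_{(\theta,\phi)}|D^{\sigma}\psi(r,\theta,\phi)|^2 \leq C\int_{\Sigma_{t^*}} \sum_{j=0}^2 \Big(|\slashed\nabla^j D^{\sigma}\psi|^2 + |\slashed\nabla^j(r\partial_r D^{\sigma}\psi)|^2\Big)\,r^2\,\mathrm{d}r\,\mathrm{d}\omega.
\end{equation*}
After unpacking the normalised $D$-frame in coordinate derivatives, the right-hand side is a sum of weighted spatial $L^2$ norms involving derivatives of $\psi$ of total order at most $|\sigma|+3$. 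The final step is then to invoke Theorem \ref{thm:KElliptic} iteratively at orders $2\leq k'\leq|\sigma|+3$ to bound the above by the quantities on the right-hand side of (\ref{SobolevEst}). The multi-index bookkeeping produces $|\alpha|\leq|\sigma|+2$ for the $\Gamma^{\alpha}$-commuted $H^1_{KAdS}$ energies (gaining one order, since $H^1_{KAdS}$ already contains one derivative), together with $|\beta|\leq|\sigma|+1$ and $|\gamma|\leq\lceil(|\sigma|+1)/2\rceil$ for the $\Gamma^{\beta}$- and $\Box_g^{\gamma}$-commuted nonlinearity terms. The $\overline{D}^{\sigma}$ version follows by the same argument: since each un-normalised $\partial_{t^*}\in\overline{D}$ carries an extra $r$-factor relative to $\frac{1}{r}\partial_{t^*}\in D$, one additional level of regularity is needed to absorb the missing $r^{-1}$, which is why the admissible range contracts to $|\sigma|\leq k-4$.

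The main technical obstacle I anticipate is the careful tracking of $r$-weights across both the spherical and radial Sobolev steps, ensuring the normalised $D$-derivatives on the left and the $H^k_{KAdS}$-style weighted norms on the right are consistent, and in particular that the $r'^2\,\mathrm{d}r'$ measure produced by the radial step exactly matches the volume form arising from the spherical embedding. A secondary subtlety is the justification of the vanishing boundary term at $\mathcal{I}$, which is immediate only in the regime $\alpha>\tfrac{5}{4}$ and otherwise requires a brief cut-off argument.
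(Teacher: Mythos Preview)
Your proposal is correct and follows essentially the same route as the paper: a radial one-dimensional Sobolev/FTC step (the paper integrates $|\partial_r D^{\sigma}\psi|$ from $r$ to $\infty$ and Cauchy--Schwarz, whereas you differentiate $r f^2$, but both yield the $r^{-1/2}$ weight), then the spherical Sobolev embedding of Theorem~\ref{SphericalSobolev}, then Theorem~\ref{thm:KElliptic} at order $|\sigma|+3$ to produce exactly the index ranges $|\alpha|\leq|\sigma|+2$, $|\beta|\leq|\sigma|+1$, $|\gamma|\leq\lceil(|\sigma|+1)/2\rceil$. One small correction: after combining the radial and spherical steps the integrand should carry $r^4\,\mathrm{d}r\,\mathrm{d}\omega$ rather than $r^2\,\mathrm{d}r\,\mathrm{d}\omega$ (the spherical Sobolev contributes an extra $r^2$), which is precisely the weight needed to feed into Theorem~\ref{thm:KElliptic}.
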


\begin{proof}
Given that $\psi|_{\mathcal{I}}=0$, we can apply the Fundamental Theorem of Calculus to obtain
    \begin{align}
    |D^{\sigma}\psi|&\leq \int_r^{\infty}|\partial_rD^{\sigma}\psi|\mathrm{d}r\nonumber\\
    &\leq\bigg(\int_{r}^{\infty}\frac{1}{r^2}\mathrm{d}r\bigg)^{\frac{1}{2}}\bigg(\int_{r}^{\infty}|\partial_rD^{\sigma}\psi|^2 r^2\mathrm{d}r\bigg)^{\frac{1}{2}}\nonumber\\
    &\leq\frac{1}{r^{\frac{1}{2}}}\bigg(\int_r^{\infty}|\partial_rD^{\sigma}\psi|^2 r^2\mathrm{d}r\bigg)^{\frac{1}{2}}.
\end{align}
Then, by Theorem \ref{SphericalSobolev}, we have that
\begin{align}
    &\int_r^{\infty}|\partial_rD^{\sigma}\psi|^2 r^2\mathrm{d}r\nonumber\\ \leq&\tilde{c}\sum_{j= 0}^2\int_{\Sigma_{t^*}}|\slashed\nabla^{j}\partial_rD^{\sigma}\psi|^2 r^4\mathrm{d}r\mathrm{d}\omega\nonumber\\
    \leq&C\sum_{\substack{|\alpha|\leq |\sigma|+2,\\|\beta|\leq|\sigma|+1,\\|\gamma|\leq\big\lceil\frac{|\sigma|+1}{2}\big\rceil}}\bigg[||\Gamma^{\alpha}\psi||^2_{H^1_{KAdS}(\Sigma_{t^*})}+\int_{\Sigma_{t^*}}\bigg(\Big(\Gamma^{\beta}(\Box_g\psi+\alpha\psi)\Big)^2+\Big(\Box_g^{\gamma}(\Box_g\psi+\alpha\psi)\Big)^2\bigg) r^2\mathrm{d}r\mathrm{d}\omega\bigg]\label{ApplyKthElliptic}
\end{align}
for some $C>0$, where (\ref{ApplyKthElliptic}) follows from Theorem \ref{thm:KElliptic}. Thus,
\begin{align*}
     &|D^{\sigma}\psi|\nonumber\\
     \leq&\frac{C}{r^{\frac{1}{2}}}\Bigg[\sum_{\substack{|\alpha|\leq |\sigma|+2,\\|\beta|\leq|\sigma|+1,\\|\gamma|\leq\big\lceil\frac{|\sigma|+1}{2}\big\rceil}}\bigg[||\Gamma^{\alpha}\psi||^2_{H^1_{KAdS}(\Sigma_{t^*})}+\int_{\Sigma_{t^*}}\bigg(\Big(\Gamma^{\beta}(\Box_g\psi+\alpha\psi)\Big)^2+\Big(\Box_g^{\gamma}(\Box_g\psi+\alpha\psi)\Big)^2\bigg) r^2\mathrm{d}r\mathrm{d}\omega\bigg]\Bigg]^{\frac{1}{2}}
\end{align*}
for some $C>0$, as required. The estimate for $\overline{D}^{\sigma}\psi$ with $|\sigma|\leq k-4$ follows similarly, where the top-order $T$ derivative is controlled with a stronger weight by the zeroth-order term of (\ref{kthEnergy}) after commuting $|\sigma|$ times with $T$.
\end{proof}

We also note the following Sobolev embedding for derivatives where only 1 is arbitrary.

\begin{theorem}[$L^{\infty}$ Sobolev embedding for derivatives where only 1 is arbitrary]
\label{thm:SobolevOne}
    Fix $t^*$ and let $\psi$ be a function in $H^k_{KAdS}(\Sigma_{t^*})$ for $k\geq 3$ satisfying Dirichlet boundary conditions (\ref{Dirichlet}). Then, for each $|\sigma|\leq k-4$, one has
    \begin{align}
    &||N^{\sigma_1}T^{\sigma_2}\Phi^{\sigma_3}D\psi||_{L^{\infty}(\Sigma_{t^*})}\nonumber\\
     \leq&\frac{C}{r^{\frac{1}{2}}}\Bigg[\sum_{\substack{|\alpha|\leq |\sigma|+3,\\|\beta|\leq|\sigma|+2 ,\\|\gamma|\leq\big\lceil\frac{|\sigma|+2}{2}\big\rceil}}\bigg[||\Gamma^{\alpha}\psi||^2_{H^1_{KAdS}(\Sigma_{t^*})}+\int_{\Sigma_{t^*}}\bigg(\Big(\Gamma^{\beta}(\Box_g\psi+\alpha\psi)\Big)^2+\Big(\Box_g^{\gamma}(\Box_g\psi+\alpha\psi)\Big)^2\bigg)r^2\mathrm{d}r\mathrm{d}\omega\bigg]\Bigg]^{\frac{1}{2}}\label{SobolevOne}
    \end{align}
\end{theorem}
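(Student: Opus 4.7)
The plan is to follow the strategy of the proof of Theorem \ref{SliceSobolev}, applied now to the single-derivative objects $\Gamma^{\sigma}D\psi$, and then to carefully bookkeep the additional derivatives generated by both the radial fundamental theorem of calculus and the spherical Sobolev embedding. The bound $r^{-1/2}$ and the shifted upper limits on $|\alpha|,|\beta|,|\gamma|$ (each shifted up by one relative to Theorem \ref{SliceSobolev}) already suggest that one simply replaces $D^\sigma\psi$ with $\Gamma^\sigma D\psi$, runs the previous argument, and invokes Theorem \ref{thm:KElliptic} at one higher order.

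First I would invoke the Dirichlet boundary condition (\ref{Dirichlet}) — and the fact that the energy norms on the right-hand side of (\ref{SobolevOne}) control $\Gamma^{\sigma}D\psi$ in the $H^1_{KAdS}$-sense, ensuring its decay at $\mathcal{I}$, in the same implicit manner used in the proof of Theorem \ref{SliceSobolev} — to write
\begin{align*}
|\Gamma^{\sigma}D\psi|(t^*,r,\theta,\phi) \leq \int_{r}^{\infty}|\partial_{r'}\Gamma^{\sigma}D\psi|\,\mathrm{d}r' \leq \frac{1}{r^{1/2}}\bigg(\int_{r}^{\infty}|\partial_{r'}\Gamma^{\sigma}D\psi|^{2}r'^{2}\,\mathrm{d}r'\bigg)^{1/2},
\end{align*}
by Cauchy--Schwarz. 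Next, taking the supremum over the sphere $S^2_{t^*,r}$ and applying the spherical Sobolev embedding of Theorem \ref{SphericalSobolev} gives
\begin{align*}
\sup_{S^2_{t^*,r}}\int_{r}^{\infty}|\partial_{r'}\Gamma^{\sigma}D\psi|^{2}r'^{2}\,\mathrm{d}r' \leq \tilde{c}\sum_{j=0}^{2}\int_{\Sigma_{t^*}}|\slashed\nabla^{j}\partial_{r'}\Gamma^{\sigma}D\psi|^{2}r'^{4}\,\mathrm{d}r'\,\mathrm{d}\omega.
\end{align*}

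Schematically, the integrand on the right-hand side is a mixed derivative of $\psi$ of order at most $|\sigma|+1+1+2=|\sigma|+4$, so the final step is to invoke Theorem \ref{thm:KElliptic} with $k=|\sigma|+4$, which controls precisely this quantity by the $\Gamma^\alpha\psi$ energies with $|\alpha|\leq k-1=|\sigma|+3$ and the $\Gamma^\beta$-, $\Box_g^\gamma$-weighted terms involving $\Box_g\psi+\alpha\psi$ with $|\beta|\leq|\sigma|+2$ and $|\gamma|\leq\lceil(|\sigma|+2)/2\rceil$. Collecting these steps yields (\ref{SobolevOne}).

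The principal obstacle to make rigorous is the commutator bookkeeping: $T$ and $\Phi$ commute with $\partial_r$ and $\slashed\nabla$, but $N=T+\xi(r)Y$ does not, and commutators $[N,\partial_r]$, $[N,\slashed\nabla]$ are generated in rewriting $\partial_{r'}\slashed\nabla^{j}\Gamma^{\sigma}D\psi$ in a form suitable for direct application of Theorem \ref{thm:KElliptic}. These commutators however are supported on the compact $r$-region where $\xi$ is nontrivial, are of strictly lower order in $N$, and do not increase the total derivative count, exactly as in the derivation of (\ref{NCommutator}) in the proof of Proposition \ref{prop:2ofk}. Hence they are reabsorbed by inductively applying Theorem \ref{thm:KElliptic} at strictly lower orders, which are already dominated by the right-hand side of (\ref{SobolevOne}). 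The analogue of the $\overline{D}$-statement at the very end of Theorem \ref{SliceSobolev} is not needed here, as $D\psi$ already carries the weaker weights consistent with the $r^{-1/2}$ bound.
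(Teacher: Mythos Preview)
Your proposal is correct and follows precisely the approach the paper intends: the paper states Theorem \ref{thm:SobolevOne} immediately after Theorem \ref{SliceSobolev} without a separate proof, evidently viewing it as a direct variant obtained by running the same fundamental-theorem-of-calculus plus spherical Sobolev argument on $\Gamma^{\sigma}D\psi$ in place of $D^{\sigma}\psi$ and invoking Theorem \ref{thm:KElliptic} at one order higher. Your commutator remarks concerning $N$ are the only nontrivial bookkeeping beyond the proof of Theorem \ref{SliceSobolev}, and you handle them in the same spirit as (\ref{NCommutator}).
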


\subsection{Energy estimates for the linear inhomogeneous problem}
\label{sec:EgEst}
We now prove general $n$\textsuperscript{th}-order energy estimates for a solution $\psi$ of (\ref{FullProblem}). We begin with a result for the problem (\ref{FullProblem}) with general inhomogeneity $\tilde{\mathcal{F}}$.
\begin{proposition}[An $n$\textsuperscript{th}-order energy estimate for solutions of the linear inhomogeneous problem]
    \label{prop:LeaveFAlone}
    Suppose $\psi$ is a $CH^k_{KAdS}$ solution of (\ref{FullProblem}), with inhomogeneity $\tilde{\mathcal{F}}$ satisfying
    \begin{gather*}
        \sum_{|\sigma|\leq n-1}||N^{\sigma_1}T^{\sigma_2}\Phi^{\sigma_3}\tilde{\mathcal{F}}||_{L^2(\Sigma_{t^*})}<\infty,\\
        \sum_{\substack{|\beta|\leq n- 2,\\|\gamma|\leq\big\lceil\frac{n-2}{2}\big\rceil}}\int_{\Sigma_{t^*}}\bigg((\Gamma^{\beta}\tilde{\mathcal{F}})^2+(\Box_g^{\gamma}\tilde{\mathcal{F}})^2\bigg) r^2\mathrm{d}r\mathrm{d}\omega<\infty
    \end{gather*}
    for all $n\leq k$, $t^*$ in $[\tau,\tau']$. Then $\psi$ satisfies the general $n$\textsuperscript{th} order energy estimate
    \begin{align}
    \label{GenFEstimate}
        &||\psi||^2_{H^n_{KAdS}(\Sigma_{t^*})}\nonumber\\
        \leq& C\bigg[F^n_{\mathcal{H}^+\cap[\tau,\tau']}[\psi]+||\psi||^2_{H^n_{KAdS}(\Sigma_{\tau'})}+\sum_{\substack{|\beta|\leq n- 2,\\|\gamma|\leq\big\lceil\frac{n-2}{2}\big\rceil}}\int_{\Sigma_{t^*}}\bigg((\Gamma^{\beta}\tilde{\mathcal{F}})^2+(\Box_g^{\gamma}\tilde{\mathcal{F}})^2\bigg) r^2\mathrm{d}r\mathrm{d}\omega\nonumber\\
        &\quad+\int_{\tau}^{\tau'}\bigg(C_{\kappa,n}||\psi||^2_{H^n(\Sigma_{t^*})}+\sum_{|\beta|\leq n-3}||\Gamma^{\beta}\tilde{\mathcal{F}}||^2_{L^2(\Sigma_{t^*})}\bigg)\mathrm{d}t^*\nonumber\\
        &\quad+\bigg(\int_{\tau}^{\tau'}\sum_{|\sigma|\leq n-1}||N^{\sigma_1}T^{\sigma_2}\Phi^{\sigma_3}\tilde{\mathcal{F}}||_{L^2(\Sigma_{t^*})}\mathrm{d}t^*\bigg)^2\bigg],
    \end{align}
    for some $C>0$ and all $n\leq k$ where $F^n_{\mathcal{H}^+\cap[\tau,\tau']}$ is a flux term through the event horizon, involving derivatives of the scattering data $h_{\mathcal{H}^+}$ of up to $n$\textsuperscript{th}-order. The constant $C_{\kappa,n}$ depends on the spacetime surface gravity $\kappa$ and $n$.
\end{proposition}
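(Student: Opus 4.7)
The plan is to prove (\ref{GenFEstimate}) by first establishing its analogue at first order via a multiplier argument with $N$, then iterating via commutation with $\Gamma \in \{N, T, \Phi\}$, and finally filling in the missing general spatial derivatives using the elliptic hierarchy of Section \ref{sec:Elliptic}.

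\textbf{First-order estimate.} I would multiply $\Box_g \psi + \alpha \psi = \tilde{\mathcal{F}}$ by $N\psi$ and integrate by parts over $\mathcal{D}_{[\tau, \tau']}$ using (\ref{CurrentBulk}). The Dirichlet condition (\ref{Dirichlet}) eliminates the boundary flux at $\mathcal{I}$, leaving fluxes on $\Sigma_\tau$, $\Sigma_{\tau'}$ and $\mathcal{H}^+$, together with the spacetime bulk $\int_{\mathcal{D}}(K^N[\psi] + \tilde{\mathcal{F}} \cdot N\psi)\mathrm{vol}$. Since $N$ is everywhere timelike, one has the pointwise bound $|K^N[\psi]| \leq C_\kappa J^N_\mu[\psi] n^\mu_{\Sigma_{t^*}}$ with $C_\kappa$ proportional to the surface gravity (this is the blueshift). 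Combining with Proposition \ref{prop:NCoercive} to coercify the $N$-current on $\Sigma_\tau$ into $||\psi||^2_{H^1_{KAdS}(\Sigma_\tau)}$ (the auxiliary compact-support spacetime term produced by that lemma being absorbed into the bulk integral), and applying Cauchy-Schwarz to $\tilde{\mathcal{F}} \cdot N\psi$ in the form $\int_\tau^{\tau'}\int_{\Sigma_{t^*}}|\tilde{\mathcal{F}}\cdot N\psi| \leq \int_\tau^{\tau'}||\tilde{\mathcal{F}}||_{L^2}||N\psi||_{L^2}\,dt^*$ followed by Young's inequality $2ab \leq \varepsilon a^2 + \varepsilon^{-1} b^2$, yields (\ref{GenFEstimate}) at $n=1$.

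\textbf{Higher orders by commutation.} For $n \geq 2$, I would commute the equation with $\Gamma^\sigma$ for $|\sigma| \leq n-1$ and repeat the first-order argument for each resulting wave equation. Since $T$ and $\Phi$ are Killing, $[\Box_g, T] = [\Box_g, \Phi] = 0$, so they pass cleanly through, producing the top-order inhomogeneities $\Gamma^\sigma \tilde{\mathcal{F}}$. Each commutation with $N$ generates, via (\ref{NCommutator}) and its iterates, additional commutator terms strictly lower order in $N$, supported on the compact region where $\xi(r) \neq 0$, and involving at most two general spatial derivatives of $\psi$. Treating these commutators as extra (lower-order) inhomogeneities and proceeding inductively in $|\sigma|$ controls all non-degenerate energies $||\Gamma^\sigma \psi||^2_{H^1_{KAdS}}$ for $|\sigma| \leq n-1$, giving in particular the spacetime bulk term $\int_\tau^{\tau'}C_{\kappa,n}||\psi||^2_{H^n_{KAdS}}\,dt^*$ together with the lower-order source contribution $\int_\tau^{\tau'}\sum_{|\beta|\leq n-3}||\Gamma^\beta \tilde{\mathcal{F}}||^2_{L^2}\,dt^*$ recorded in (\ref{GenFEstimate}).

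\textbf{Closing via elliptic estimates.} The norm $||\psi||^2_{H^n_{KAdS}}$ defined in (\ref{kthEnergy}) contains all general spatial derivatives up to order $n$, whereas the commutation step only controls those of the form $\Gamma^\sigma \psi$. Theorem \ref{thm:KElliptic}, applied on each slice $\Sigma_{t^*}$, closes this gap and is precisely responsible for the appearance of the terms $(\Gamma^\beta \tilde{\mathcal{F}})^2$ with $|\beta| \leq n-2$ and $(\Box_g^\gamma \tilde{\mathcal{F}})^2$ with $|\gamma| \leq \lceil (n-2)/2 \rceil$ on the slice $\Sigma_{t^*}$ in (\ref{GenFEstimate}), whose indices mirror exactly those appearing in Theorem \ref{thm:KElliptic}. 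The square $\big(\int_\tau^{\tau'}\sum||\Gamma^\sigma \tilde{\mathcal{F}}||_{L^2}\,dt^*\big)^2$ comes from the Young-inequality step of the first-order argument carried through at order $n$. The main technical obstacle I anticipate is the careful bookkeeping of the $N$-commutators in the inductive step: although they are lower order in $N$, they may carry up to two general spatial derivatives, which forces one to control them using the compact-support Proposition \ref{prop:2ofk} applied to $\Gamma^{\sigma'}\psi$ for $|\sigma'|<|\sigma|$ rather than the full elliptic estimate, and one must verify that the resulting lower-order $\tilde{\mathcal{F}}$-contributions are consistent with the indices appearing under the spacetime integral in (\ref{GenFEstimate}). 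Once this is done, the remaining bulk term $C_{\kappa,n}\int_\tau^{\tau'}||\psi||^2_{H^n_{KAdS}}\,dt^*$ is left on the right-hand side, to be absorbed by Gronwall's inequality in Section \ref{sec:FiniteEstimates} against the hypothesized exponential decay of the scattering data.
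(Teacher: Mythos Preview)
Your proposal is correct and follows essentially the same approach as the paper: the $N$-multiplier identity combined with Proposition~\ref{prop:NCoercive} for the first-order estimate, commutation with $\Gamma^\sigma$ (with the $N$-commutators handled via Proposition~\ref{prop:2ofk}), and Theorem~\ref{thm:KElliptic} to recover the missing general spatial derivatives. The only step you leave implicit is that the paper upgrades the first-order inequality to an $L^\infty_{t^*}$-in-time estimate (by repeating it on all subintervals of $[\tau,\tau']$) before applying Young's inequality; this is what permits the $\varepsilon\sup_{t^*}\|\psi\|^2_{H^n_{KAdS}}$ term to be absorbed on the left and produces the squared time-integral $\big(\int_\tau^{\tau'}\sum\|\Gamma^\sigma\tilde{\mathcal F}\|_{L^2}\,dt^*\big)^2$ rather than $\int_\tau^{\tau'}\sum\|\Gamma^\sigma\tilde{\mathcal F}\|_{L^2}^2\,dt^*$.
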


\begin{proof}
We apply the vector field $N$ (\ref{RedshiftVF}) as a multiplier and integrate the equation over the spacetime region $\mathcal{D}_{[\tau,\tau']}=\mathcal{M}\cap[\tau,\tau']$. By direct calculation and an application of the Divergence Theorem, one obtains that
\begin{align*}
    \int_{\mathcal{D}_{[\tau,\tau']}}|N\psi\cdot \tilde{\mathcal{F}}| r^2\mathrm{d}t^*\mathrm{d}r\mathrm{d}\omega &=\int_{\mathcal{D}_{[\tau,\tau']}}|N\psi(\Box_g\psi+\alpha\psi) |r^2\mathrm{d}t^*\mathrm{d}r\mathrm{d}\omega\\
    &\geq C\int_{\partial\mathcal{D}_{[\tau,\tau']}}J^N_{\mu}[\psi]\cdot n_{\Sigma_{t^*}}^{\mu}\cdot\mathrm{vol}_{\Sigma_{t^*}}
\end{align*}
for some $C>0$. Hence, by integrating (\ref{CurrentBulk}) over $\mathcal{D}$ and applying Proposition \ref{prop:NCoercive}, it is clear that $\psi$ satisfies
\begin{align*}
    &\int_{\Sigma_{\tau}}J^N_{\mu}[\psi]\cdot n_{\Sigma_{t^*}}^{\mu}\cdot\mathrm{vol}_{\Sigma_{t^*}}+\tilde{C}_{\alpha}\bigg[\int_{\Sigma_{\tau'}\cap\{r<r_1\}}\psi^2\frac{\Sigma}{\Xi}\mathrm{d}r\mathrm{d}\omega+\int_{\mathcal{D}_{[\tau,\tau']}\cap\{r<r_1\}}\Big(\psi^2+(\partial_{t^*}\psi)^2\Big)\frac{\Sigma}{\Xi}\mathrm{d}r\mathrm{d}\omega\mathrm{d}t^*\bigg]\\
    \leq &C\bigg(\int_{\mathcal{H}^+\cap[\tau,\tau']}J_{\mu}^N[\psi]\cdot n_{\mathcal{H}^+}^{\mu}\cdot\mathrm{vol}_{\mathcal{H}^+}+\int_{\Sigma_{\tau'}}J^N_{\mu}[\psi]\cdot n_{\Sigma_{t^*}}^{\mu}\cdot\mathrm{vol}_{\Sigma_{t^*}}\\
    &+\int_{\mathcal{D}_{[\tau,\tau']}}\bigg(|N\psi\cdot \tilde{\mathcal{F}}|r^2\mathrm{d}r\mathrm{d}\omega+|K^N[\psi]|\mathrm{vol}_{\Sigma_{t^*}}\bigg)\mathrm{d}t^*\\
    &+\tilde{C}_{\alpha}\bigg[\int_{\Sigma_{\tau'}\cap\{r<r_1\}}\psi^2\frac{\Sigma}{\Xi}\mathrm{d}r\mathrm{d}\omega+\int_{\mathcal{D}_{[\tau,\tau']}\cap\{r<r_1\}}\Big(\psi^2+(\partial_{t^*}\psi)^2\Big)\frac{\Sigma}{\Xi}\mathrm{d}r\mathrm{d}\omega\mathrm{d}t^*\bigg]\bigg).
\end{align*}
where we have used the Dirichlet boundary condition to drop the boundary term on $\mathcal{I}$. Applying Proposition \ref{prop:NCoercive} to the integrals on the spacelike slices gives the energy estimate
\begin{align*}
    ||\psi||^2_{H^1_{KAdS(\Sigma_{\tau})}}\leq& C\bigg[F^1_{\mathcal{H}^+\cap[\tau,\tau']}[\psi]+||\psi||^2_{H^1_{KAdS(\Sigma_{\tau'})}}\\
    &\quad+\int_{\mathcal{D}_{[\tau,\tau']}}\bigg(||\psi||^2_{H^1_{KAdS}(\Sigma_{t^*})}+||\partial_{t^*}\psi||^2_{H^1_{KAdS}(\Sigma_{t^*})}+|N\psi\cdot \tilde{\mathcal{F}}| r^2\mathrm{d}r\mathrm{d}\omega\\
    &\qquad\qquad\quad+|K^N[\psi]|\mathrm{vol}_{\Sigma_{t^*}}\bigg)\mathrm{d}t^*\bigg],
\end{align*}
where
\begin{align*}
    F^1_{\mathcal{H}^+\cap[\tau,\tau']}[\psi]=\int_{\mathcal{H}^+\cap[\tau,\tau']}J_{\mu}^N[\psi]\cdot n_{\mathcal{H}^+}^{\mu}\cdot\mathrm{vol}_{\mathcal{H}^+}.
\end{align*}
We shall refer to the higher-order analogues of this horizon flux term as
\begin{align*}
    F^m_{\mathcal{H}^+\cap[\tau,\tau']}[\psi]=\sum_{|\sigma|\leq m-1}\int_{\mathcal{H}^+\cap[\tau,\tau']}J_{\mu}^N[\Gamma^{\sigma}\psi]\cdot n_{\mathcal{H}^+}^{\mu}\cdot\mathrm{vol}_{\mathcal{H}^+}.
\end{align*}
Since we can perform the same estimate over smaller time intervals, we in fact have the following $L^{\infty}$ in time estimate:
\begin{align*}
    &\sup_{t^*\in[\tau,\tau']}||\psi||^2_{H^1_{KAdS(\Sigma_{t^*})}}\\\leq &C\bigg[F^1_{\mathcal{H}^+[\tau,\tau']}[\psi]+||\psi||^2_{H^1_{KAdS(\Sigma_{\tau'})}}+\int_{\mathcal{D}_{[\tau,\tau']}}\bigg(||\psi||^2_{H^1_{KAdS}(\Sigma_{t^*})}+||\partial_{t^*}\psi||^2_{H^1_{KAdS}(\Sigma_{t^*})}\\
    &\qquad\qquad\qquad\qquad\qquad\qquad\qquad\qquad\qquad+|N\psi\cdot\tilde{\mathcal{F}}| r^2\mathrm{d}r\mathrm{d}\omega+|K^N[\psi]|\mathrm{vol}_{\Sigma_{t^*}}\bigg)\mathrm{d}t^*\bigg],
\end{align*}
In order to generate the general $k$\textsuperscript{th}-order estimate appearing in the Theorem, we commute the equation with $\Gamma^{\sigma}$:
\begin{align}
\label{Commuted}
    &\sum_{|\sigma|\leq n-1}\sup_{t^*\in[\tau,\tau']}||\Gamma^{\sigma}\psi||^2_{H^1_{KAdS(\Sigma_{t^*})}}\nonumber\\
    \leq&C\bigg[F^n_{\mathcal{H}^+\cap[\tau,\tau']}[\psi]+||\psi||^2_{H^n_{KAdS(\Sigma_{\tau'})}}+\int_{\mathcal{D}_{[\tau,\tau']}}\sum_{\substack{|\sigma|\leq n-1,\\|\tilde{\sigma}|\leq n-1}}\bigg(||\psi||^2_{H^n_{KAdS}(\Sigma_{t^*})}
    +|K^N[N^{\sigma_1}T^{\sigma_2}\Phi^{\sigma_3}\psi]|\mathrm{vol}_{\Sigma_{t^*}}\nonumber\\
    &\qquad\qquad\qquad\qquad\qquad\qquad\qquad\qquad+\Big(|\Gamma^{\tilde{\sigma}}N\psi\cdot \Gamma^{\sigma}\tilde{\mathcal{F}}|+|[\Box_g,N^{\sigma_1}](\psi)|\Big) r^2\mathrm{d}r\mathrm{d}\omega\bigg)\mathrm{d}t^*\bigg].
\end{align}
The term in (\ref{Commuted}) arising from the inhomogeneity $\tilde{\mathcal{F}}$ satisfies
\begin{align}
    &\int_{\tau}^{\tau'}\int_{\Sigma_{t^*}}\sum_{\substack{|\sigma|\leq n-1,\\|\tilde{\sigma}\leq n-1}}|N^{\tilde{\sigma}_1+1}T^{\tilde{\sigma_2}}\Phi^{\tilde{\sigma}_3}\psi\cdot N^{\sigma_1}T^{\sigma_2}\Phi^{\sigma_3}\tilde{\mathcal{F}}|r^2\mathrm{d}r\mathrm{d}\omega\mathrm{d}t^*\nonumber\\
    \leq &\int_{\tau}^{\tau'}\sum_{\substack{|\sigma|\leq n-1,\\|\tilde{\sigma}|\leq n-1}}||N^{\tilde{\sigma}_1+1}T^{\tilde{\sigma_2}}\Phi^{\tilde{\sigma}_3}\psi||_{L^2(\Sigma_{t^*})}||N^{\sigma_1}T^{\sigma_2}\Phi^{\sigma_3}\tilde{\mathcal{F}}||_{L^2(\Sigma_{t^*})}\mathrm{d}t^*\label{alphabetagamma}\\
    \leq &\frac{\varepsilon}{2}\sup_{t^*\in[\tau,\tau']}||\psi||^2_{H^n_{KAdS}(\Sigma_{t^*})}+\frac{1}{2\varepsilon}\bigg(\int_{\tau}^{\tau'}\sum_{|\sigma|\leq n-1}||N^{\sigma_1}T^{\sigma_2}\Phi^{\sigma_3}\tilde{\mathcal{F}}||_{L^2(\Sigma_{t^*})}\mathrm{d}t^*\bigg)^2\label{Last}
\end{align}
Estimate (\ref{alphabetagamma}) follows from an application of the Cauchy-Schwarz inequality, and (\ref{Last}) from the definition of the $n$\textsuperscript{th}-order energy of $\psi$. We choose $\varepsilon>0$ sufficiently small to allow the first term of (\ref{Last}) to eventually be absorbed by the left-hand side of estimate (\ref{Commuted}).\par 

Returning to (\ref{Commuted}), we use that the redshift vector field $N$ is Killing on $r\geq r_1$, so that bulk terms generated by commuting with $N$ are non-vanishing only on the compact $r$-region $r_+\leq r<r_1$. This ensures no large-$r$ bulk contribution, giving the bound
\begin{align}
    \int_{\Sigma_{t^*}}\sum_{|\sigma|\leq n-1}|K^N[N^{\sigma_1}T^{\sigma_2}\Phi^{\sigma_3}\psi]|\mathrm{vol}_{\Sigma_{t^*}}&\leq C_{\kappa,n}\int_{\Sigma_{t^*}}\sum_{|\sigma|\leq n-1}J^N_{\mu}[N^{\sigma_1}T^{\sigma_2}\Phi^{\sigma_3}\psi]\cdot n^{\mu}_{\Sigma_{t^*}}\cdot\mathrm{vol}_{\Sigma_{t^*}}\nonumber\\
    &\leq C_{\kappa,n}||\psi||^2_{H^n_{KAdS}(\Sigma_{t^*})},\label{BulkBound}
\end{align}
where $C_{\kappa,n}$ depends on the spacetime surface gravity $\kappa$ and the number of commutations. This leaves only the commutator term on the right-hand side of (\ref{Commuted}) to be treated. By direct calculation (for example, at first-order (\ref{NCommutator})) one finds that $[\Box_g,N^{\sigma_1}]$ is an operator of order $\sigma_1$. Clearly, its terms involve at most two general derivatives. Moreover, since $N=T$ near infinity (and $[\Box_g,T]=0$), one need not worry about $r$-weights. As such, Proposition \ref{prop:2ofk} gives
\begin{align}
    \int_{\Sigma_{t^*}}\sum_{\sigma_1\leq n-1}|[\Box_g,N^{\sigma_1}](\psi)|\leq &C\int_{\Sigma_{t^*}}\sum_{|\sigma|\leq n-3}\bigg[|\Gamma^{\sigma}\slashed\nabla^2\psi|^2+r^2|\Gamma^{\sigma}\slashed\nabla\partial_r\psi|^2+r^4(\Gamma^{\sigma}\partial_r^2\psi)^2\bigg]r^2\mathrm{d}r\mathrm{d}\omega\nonumber\\
        \leq&C\sum_{\substack{|\alpha|\leq n-2,\\ |\beta|\leq n-3}}\bigg[||\Gamma^{\alpha}\psi||_{H^1_{KAdS}(\Sigma_{t^*})}^2+\int_{\Sigma_{t^*}}\Big(\Gamma^{\beta}(\Box_g\psi+\alpha\psi)\Big)^2 r^2\mathrm{d}r\mathrm{d}\omega\bigg]\nonumber\\
        =&C\sum_{\substack{|\alpha|\leq n-2,\\ |\beta|\leq n-3}}\bigg[||\Gamma^{\alpha}\psi||_{H^1_{KAdS}(\Sigma_{t^*})}^2+\int_{\Sigma_{t^*}}(\Gamma^{\beta}\tilde{\mathcal{F}})^2 r^2\mathrm{d}r\mathrm{d}\omega\bigg].\label{TreatedCommutator}
\end{align}
Combining estimates (\ref{Commuted}), (\ref{Last}), (\ref{BulkBound}) and (\ref{TreatedCommutator}) gives
\begin{align}
     &\sum_{|\sigma|\leq n-1}\sup_{t^*\in[\tau,\tau']}||\Gamma^{\sigma}\psi||^2_{H^1_{KAdS(\Sigma_{t^*})}}\nonumber\\
     \leq&C\bigg[F^n_{\mathcal{H}^+\cap[\tau,\tau']}[\psi]+||\psi||^2_{H^n_{KAdS}(\Sigma_{\tau'})}+\bigg(\int_{\tau}^{\tau'}\sum_{|\sigma|\leq n-1}||N^{\sigma_1}T^{\sigma_2}\Phi^{\sigma_3}\tilde{\mathcal{F}}||_{L^2(\Sigma_{t^*})}\mathrm{d}t^*\bigg)^2\nonumber\\
        &+\int_{\tau}^{\tau'}\bigg(C_{\kappa,n}||\psi||^2_{H^n(\Sigma_{t^*})}+\sum_{|\beta|\leq n-3}||\Gamma^{\beta}\tilde{\mathcal{F}}||^2_{L^2(\Sigma_{t^*})}\bigg)\mathrm{d}t^*\bigg],\label{SecondStepCommuted}
\end{align}
for some $C>0$, where the first term on the right-hand side of (\ref{TreatedCommutator}) has been absorbed by the higher order $C_{\kappa,n}$-weighted bulk term. The left-hand side of estimate (\ref{SecondStepCommuted}) does not control all $n$\textsuperscript{th}-order deriatives of $\psi$. In particular, the purely angular, angular-radial and purely radial derivatives of order $n\geq 2$ are missing.\par

By Theorem \ref{thm:KElliptic}, we have that
\begin{align}
    &\sum_{|\sigma|\leq n}\int_{\Sigma_{t^*}}r^{2\sigma_2}|\slashed\nabla^{\sigma_1}\partial_r^{\sigma_2}\psi|^2 r^{2}\mathrm{d}r\mathrm{d}\omega\nonumber\\
        \leq   &C\sum_{\substack{|\alpha|\leq n-1,\\ |\beta|\leq n- 2,\\|\gamma|\leq\big\lceil\frac{n-2}{2}\big\rceil}}\bigg[||\Gamma^{\alpha}\psi||^2_{H^1_{KAdS}(\Sigma_{t^*})}+\int_{\Sigma_{t^*}}\bigg(\Big(\Gamma^{\beta}(\Box_g\psi+\alpha\psi)\Big)^2+\Big(\Box_g^{\gamma}(\Box_g\psi+\alpha\psi)\Big)^2\bigg) r^2\mathrm{d}r\mathrm{d}\omega\bigg]\nonumber\\
        =&C\sum_{\substack{|\alpha|\leq n-1,\\ |\beta|\leq n- 2,\\|\gamma|\leq\big\lceil\frac{n-2}{2}\big\rceil}}\bigg[||\Gamma^{\alpha}\psi||^2_{H^1_{KAdS}(\Sigma_{t^*})}+\int_{\Sigma_{t^*}}\bigg((\Gamma^{\beta}\tilde{\mathcal{F}})^2+(\Box_g^{\gamma}\tilde{\mathcal{F}})^2\bigg) r^2\mathrm{d}r\mathrm{d}\omega\bigg],\label{FReplacement}
\end{align}
Finally, combining estimates (\ref{SecondStepCommuted}) and (\ref{FReplacement}) produces estimate (\ref{GenFEstimate}) for all derivatives of $\psi$ up to $n$\textsuperscript{th}-order.
\end{proof}

\subsection{Estimating the nonlinearity}
\label{sec:EstTheNonlin}
Next, we estimate the nonlinearity $\mathcal{F}$ appearing in problem (\ref{FullProblem}) of the form (\ref{Quadratic}). This will eventually allow us to treat the terms involving $\mathcal{F}$ on the right-hand side of estimate (\ref{GenFEstimate}). We begin with the following proposition which provides an estimate for the top-order $\mathcal{F}$ terms on the right-hand side of (\ref{GenFEstimate}). Then we will comment on treating the other (lower-order) $\mathcal{F}$-dependent terms on the right-hand side of (\ref{GenFEstimate}). We note that Proposition \ref{prop:AbstractEnergyEst} confirms that the nonlinearity (\ref{Quadratic}) indeed satisfies Condition \ref{NonlinAssump}. 

\begin{proposition}[Estimates for the nonlinearity]
    \label{prop:AbstractEnergyEst}
    Let $\psi$ be as in Proposition \ref{prop:LeaveFAlone} with $k\geq 9$ and suppose $\mathcal{F}$ is a nonlinearity of the form (\ref{Quadratic}). Then $\mathcal{F}$ satisfies the estimates 
\begin{equation}
    \label{THEEstimate}
    \begin{gathered}
                \sum_{|\sigma|\leq k-1}||N^{\sigma_1}T^{\sigma_2}\Phi^{\sigma_3}\mathcal{F}||_{L^2(\Sigma_{t^*})}\leq C||\psi||_{H^k_{KAdS}(\Sigma_{t^*})}||\psi||_{H^{k-1}_{KAdS}(\Sigma_{t^*})}\\
                \sum_{|\gamma|\leq\lceil\frac{k-2}{2}\rceil}||\Box_g^{\gamma}\mathcal{F}||^2_{L^2(\Sigma_{t^*})}\leq \tilde{C}||\psi||^4_{H^{k-1}_{KAdS}(\Sigma_{t^*})}
    \end{gathered}
\end{equation}
    for some $C,\tilde{C}>0$.
\end{proposition}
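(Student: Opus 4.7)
The plan is to expand $\Gamma^{\sigma}\mathcal{F}$ and $\Box_g^{\gamma}\mathcal{F}$ via Leibniz into sums of products of derivatives of $\psi$ (with uniformly bounded coefficients built from $F^{\mu\nu}$ and its derivatives), then estimate each product in $L^2_{KAdS}$ by a H\"older split: the lower-order factor placed in $L^\infty$ via the Sobolev embeddings of Theorems \ref{SliceSobolev} and \ref{thm:SobolevOne}, and the higher-order factor in $L^2_{KAdS}$.

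For the first estimate, Leibniz applied to $\Gamma^{\sigma}(F^{\mu\nu}e_\mu\psi\,e_\nu\psi)$ with $|\sigma|\leq k-1$ produces a finite sum of terms $(\Gamma^{\rho}F^{\mu\nu})(\Gamma^{\alpha}e_\mu\psi)(\Gamma^{\beta}e_\nu\psi)$ with $|\rho|+|\alpha|+|\beta|=|\sigma|$. The coefficient $\Gamma^{\rho}F^{\mu\nu}$ is uniformly bounded by the hypotheses on $F$. It therefore suffices to estimate $||(\Gamma^{\alpha}e_\mu\psi)(\Gamma^{\beta}e_\nu\psi)||_{L^2_{KAdS}(\Sigma_{t^*})}$ with $|\alpha|+|\beta|\leq k-1$. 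Assuming without loss of generality $|\alpha|\leq|\beta|$, one has $|\alpha|\leq\lfloor(k-1)/2\rfloor\leq k-5$, which is strictly within the range $|\sigma|\leq k-4$ of Theorem \ref{thm:SobolevOne}. Placing the $\alpha$-factor in $L^\infty$ (the Sobolev weight $r^{-1/2}$ is uniformly bounded on $\{r\geq r_+\}$) and the $\beta$-factor in $L^2_{KAdS}$ gives
\begin{align*}
||(\Gamma^{\alpha}e_\mu\psi)(\Gamma^{\beta}e_\nu\psi)||_{L^2_{KAdS}(\Sigma_{t^*})}\leq C\,||\psi||_{H^{|\alpha|+4}_{KAdS}(\Sigma_{t^*})}\,||\psi||_{H^{|\beta|+1}_{KAdS}(\Sigma_{t^*})}.
\end{align*}
Since $|\alpha|+4\leq k-1$ (saturated precisely when $k=9$) and $|\beta|+1\leq k$, this is controlled by $C\,||\psi||_{H^{k-1}_{KAdS}(\Sigma_{t^*})}\,||\psi||_{H^k_{KAdS}(\Sigma_{t^*})}$, which is the first estimate of (\ref{THEEstimate}).

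The second estimate is obtained by the same philosophy, working directly with the squared norm. Iterated Leibniz on $\Box_g^{\gamma}\mathcal{F}$, using that $\Box_g$ is second-order with coefficients bounded in the normalised frame, produces a sum of terms schematically of the form $(\mathrm{bounded})\cdot(D^{a+1}\psi)(D^{b+1}\psi)$ with $|a|+|b|\leq 2|\gamma|$. Upon squaring, one integrates a four-fold product of derivatives of $\psi$ against $r^2\,\mathrm{d}r\,\mathrm{d}\omega$. Applying iterated H\"older so that the three lower-order factors are placed in $L^\infty$ via Theorem \ref{SliceSobolev} and the single top-order factor in $L^2_{KAdS}$ reduces matters to bounds of the form $||\psi||^2_{H^{m_1}_{KAdS}}||\psi||^2_{H^{m_2}_{KAdS}}$, where $m_1\leq|\gamma|+4$ and $m_2\leq 2|\gamma|+1$. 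For $|\gamma|\leq\lceil(k-2)/2\rceil$ and $k\geq 9$, both $m_1$ and $m_2$ are controlled by the $H^{k-1}_{KAdS}$ norm (up to absorbing top-order spatial derivatives via Theorem \ref{thm:KElliptic}), and summing over $|\gamma|$ yields the claimed quartic bound.

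The main obstacle is the combinatorial bookkeeping of derivative orders in the second estimate: $\lceil(k-2)/2\rceil$ permits $\Box_g^{\gamma}$ to place up to $k-1$ derivatives on $\mathcal{F}$, and distributing these across the four-fold H\"older split while keeping the three $L^\infty$-factors strictly within the Sobolev range $|\sigma|\leq k-3$ is what forces precisely the regularity threshold $k\geq 9$ stated in the proposition. The $r$-weight bookkeeping is, by contrast, effortless: the normalised frame $\{e_\mu\}$ is designed exactly so that $(e_\mu\psi)^2$ pairs with the volume weight $r^2\,\mathrm{d}r\,\mathrm{d}\omega$ of $L^2_{KAdS}$ without stray positive powers of $r$, and the Sobolev decay $r^{-1/2}$ from each $L^\infty$-embedding is, if anything, a free dividend.
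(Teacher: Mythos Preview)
Your approach is essentially the paper's: Leibniz on $\Gamma^\sigma\mathcal{F}$ (resp.\ $\Box_g^\gamma\mathcal{F}$), then a H\"older split placing the lower-order factor in $L^\infty$ via the Sobolev embeddings and the higher-order factor in $L^2_{KAdS}$. The order-counting and the role of the threshold $k\geq 9$ are also the same.

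There is, however, one point you gloss over that the paper treats explicitly. The Sobolev embeddings of Theorems \ref{SliceSobolev} and \ref{thm:SobolevOne} do \emph{not} deliver a clean bound of the form $||\Gamma^\alpha D\psi||_{L^\infty}\lesssim||\psi||_{H^{|\alpha|+4}_{KAdS}}$. Their right-hand sides carry, in addition to $\sum_{|\alpha'|}||\Gamma^{\alpha'}\psi||^2_{H^1_{KAdS}}$, terms of the form $||\Gamma^\beta(\Box_g\psi+\alpha\psi)||^2_{L^2}$ and $||\Box_g^\gamma(\Box_g\psi+\alpha\psi)||^2_{L^2}$ --- that is, precisely the quantities $||\Gamma^\beta\mathcal{F}||_{L^2}$ and $||\Box_g^\gamma\mathcal{F}||_{L^2}$ you are trying to bound. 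The same applies to the elliptic estimate of Theorem \ref{thm:KElliptic} that you invoke at the end. So your displayed line ``$\leq C\,||\psi||_{H^{|\alpha|+4}}\,||\psi||_{H^{|\beta|+1}}$'' hides a genuine recursion. The circularity is not fatal: each invocation of the embedding produces $\mathcal{F}$-terms of strictly lower order than the one you started with, so an induction on the order closes. The paper handles this by first establishing the $\Box_g^\gamma\mathcal{F}$ bound (itself closed by the same inductive observation) and then feeding it back into the $\Gamma^\sigma\mathcal{F}$ estimate. You should make this step explicit.

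A minor remark on your second estimate: the ``three factors in $L^\infty$, one in $L^2$'' split after squaring is workable but unnecessarily elaborate. The paper simply bounds $||AB||_{L^2}\leq||A||_{L^\infty}||B||_{L^2}$ and then squares, keeping the bookkeeping to a two-way split and making the claimed $||\psi||^4_{H^{k-1}_{KAdS}}$ bound more transparent.
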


\begin{proof}
By the chain rule and assumption (\ref{Quadratic}) on $\mathcal{F}$, we have that
\begin{align}
    \sum_{|\sigma|\leq k-1}||N^{\sigma_1}T^{\sigma_2}\Phi^{\sigma_3}\mathcal{F}||_{L^2(\Sigma_{t^*})}
    =&\sum_{|\sigma|\leq k-1}||N^{\sigma_1}T^{\sigma_2}\Phi^{\sigma_3}(F^{\mu\nu}(t^*,x)\cdot e_{\mu}\psi e_{\nu}\psi)||_{L^2(\Sigma_{t^*})}\nonumber\\
    \leq&C_F\sum_{\substack{|\sigma|\leq k-1\\ \delta\leq\sigma_1,\gamma\leq\sigma_2,\rho\leq\sigma_3\\0\leq\mu,\nu\leq 3}}||N^{\sigma_1-\delta}T^{\sigma_2-\gamma}\Phi^{\sigma_3-\rho}e_{\mu}\psi\cdot N^{\delta}T^{\gamma}\Phi^{\rho}e_{\nu}\psi||_{L^2(\Sigma_{t^*})}\label{ExpandF},
\end{align}
where $C_F$ is a constant depending on the functions $F^{\mu\nu}(t^*,x)$ appearing in $\mathcal{F}$ (\ref{Quadratic}). Assuming, without loss of generality, that $\delta+\gamma\leq\frac{k-1}{2}$ yields
\begin{align}
    &\sum_{\substack{|\sigma|\leq k-1\\\delta\leq\sigma_1,\gamma\leq\sigma_2,\rho\leq\sigma_3\\0\leq\mu,\nu\leq 3}}||N^{\sigma_1-\delta}T^{\sigma_2-\gamma}\Phi^{\sigma_3-\rho}e_{\mu}\psi\cdot N^{\delta}T^{\gamma}\Phi^{\rho}e_{\nu}\psi||_{L^2(\Sigma_{t^*})}\nonumber\\
    \leq&\sum_{\substack{|\sigma|\leq k-1\\|\delta|\leq\frac{k-1}{2},\\0\leq\mu,\nu\leq 3}}||N^{\sigma_1}T^{\sigma_2}\Phi^{\sigma_3}e_{\mu}\psi||_{L^2(\Sigma_{t^*})} ||N^{\delta_1}T^{\delta_2}\Phi^{\delta_3}e_{\nu}\psi||_{L^{\infty}(\Sigma_{t^*})}\nonumber\\
    \leq& C||\psi||_{H^k_{KAdS}(\Sigma_{t^*})}\bigg[||\psi||_{H^{k-1}_{KAdS}(\Sigma_{t^*})}+\sum_{|\gamma|\leq\lceil\frac{|\delta|+2}{2}\rceil}||\Box_g^{\gamma}\mathcal{F}||_{L^2(\Sigma_{t^*})}\bigg].\label{FirstBox}
\end{align}
Here, the last inequality follows from Theorem \ref{thm:SobolevOne} and the assumption $k\geq 9$. By the assumption $k\geq 9$, the $\Box$ terms on the right-hand side of (\ref{FirstBox}) are of lower-order than those  appearing in (\ref{THEEstimate}), so we demonstrate how to treat those first, as follows.\par

We expand the $\Box_g$ term as follows, treating the second-order terms in each $\Box$ explicitly. The other terms are dealt with similarly.
\begin{align*}
    &\sum_{|\gamma|\leq\lceil\frac{k-2}{2}\rceil}||\Box_g^{\gamma}\mathcal{F}||^2_{L^2(\Sigma_{t^*})}\nonumber\\
    \leq& C\sum_{|\gamma|\leq\lceil\frac{k-2}{2}\rceil}\bigg|\bigg|(r^2\partial_r^2)^{\gamma_1}\bigg(\frac{1}{r^2}\partial_{t^*}^2\bigg)^{\gamma_2}\bigg(\frac{1}{r^2}\partial_{\theta}^2\bigg)^{\gamma_3}\bigg(\frac{1}{r^2}\partial_{\phi}^2\bigg)^{\gamma_4}(F^{\mu\nu}(t^*,x)\cdot e_{\mu}\psi e_{\nu}\psi)\bigg|\bigg|^2_{L^2(\Sigma_{t^*})}+\ldots\nonumber\\
    \leq& C_F\sum_{\substack{|\gamma|\leq\lceil\frac{k-2}{2}\rceil,\\\delta\leq\gamma_1,\rho\leq\gamma_2,\sigma\leq\gamma_3,\beta\leq\gamma_4\\0\leq \mu,\nu\leq 3}}\bigg|\bigg|(r^2\partial_r^2)^{\gamma_1-\delta}\bigg(\frac{1}{r^2}\partial_{t^*}^2\bigg)^{\gamma_2-\rho}\cdots e_{\mu}\psi\cdot(r^2\partial_r^2)^{\delta}\bigg(\frac{1}{r^2}\partial_{t^*}^2\bigg)^{\rho}\cdots e_{\nu}\psi\bigg|\bigg|^2_{L^2(\Sigma_{t^*})},
\end{align*}
where $C_F$ is a constant depending on the functions $F^{\mu\nu}(t^*,x)$ appearing in $\mathcal{F}$ (\ref{Quadratic}). Assuming, without loss of generality, that $\delta+\rho+\sigma+\beta\leq\frac{1}{2}\lceil\frac{k-2}{2}\rceil\leq\frac{k-2}{4}$ yields
\begin{align*}
    &\sum_{\substack{|\gamma|\leq\lceil\frac{k-2}{2}\rceil,\\\delta\leq\gamma_1,\rho\leq\gamma_2,\sigma\leq\gamma_3,\beta\leq\gamma_4\\0\leq \mu,\nu\leq 3}}\bigg|\bigg|(r^2\partial_r^2)^{\gamma_1-\delta}\bigg(\frac{1}{r^2}\partial_{t^*}^2\bigg)^{\gamma_2-\rho}\cdots e_{\mu}\psi\cdot(r^2\partial_r^2)^{\delta}\bigg(\frac{1}{r^2}\partial_{t^*}^2\bigg)^{\rho}\cdots e_{\nu}\psi\bigg|\bigg|^2_{L^2(\Sigma_{t^*})}\nonumber\\
    \leq&\sum_{\substack{|\gamma|\leq\lceil\frac{k-2}{2}\rceil,\\|\sigma|\leq\frac{k-2}{4},\\0\leq \mu,\nu\leq 3}}\bigg|\bigg|(r^2\partial_r^2)^{\gamma_1}\bigg(\frac{1}{r^2}\partial_{t^*}^2\bigg)^{\gamma_2}\cdots e_{\mu}\psi\bigg|\bigg|^2_{L^2(\Sigma_{t^*})}\bigg|\bigg|(r^2\partial_r^2)^{\sigma_1}\bigg(\frac{1}{r^2}\partial_{t^*}^2\bigg)^{\sigma_2}\cdots e_{\nu}\psi\bigg|\bigg|^2_{L^{\infty}(\Sigma_{t^*})}\nonumber\\
    \leq& C||\psi||^4_{H^{k-1}_{KAdS}(\Sigma_{t^*})},
\end{align*}
for some $C>0$. Here, we have applied Theorem \ref{SliceSobolev} (treating lower order terms on the right-hand side of (\ref{SobolevEst}) via the same argument) and the assumption $k\geq 9$.\par

Applying the same argument to the $\Box$ terms on the right-hand side of (\ref{FirstBox}), we have that
\begin{align*}
    \sum_{|\sigma|\leq k-1}||N^{\sigma_1}T^{\sigma_2}\Phi^{\sigma_3}\mathcal{F}||_{L^2(\Sigma_{t^*})}\leq C||\psi||_{H^k_{KAdS}(\Sigma_{t^*})}||\psi||_{H^{k-1}_{KAdS}(\Sigma_{t^*})}.
\end{align*}
\end{proof}
By the same method, we treat the other terms (of strictly lower order) on the right-hand side of (\ref{GenFEstimate}):
\begin{align}
    &\sum_{\substack{|\beta|\leq k- 2,\\|\gamma|\leq\big\lceil\frac{k-2}{2}\big\rceil}}\int_{\Sigma_{t^*}}(\Gamma^{\beta}\mathcal{F})^2 r^2\mathrm{d}r\mathrm{d}\omega+\int_{\tau}^{\tau'}\sum_{|\beta|\leq k-3}||\Gamma^{\beta}\mathcal{F}||^2_{L^2(\Sigma_{t^*})}\mathrm{d}t^*\nonumber\\
    \leq&C\int_{\tau}^{\tau'}||\psi||^2_{H^{k-2}_{KAdS}(\Sigma_{t^*})}||\psi||^2_{H^{k-3}_{KAdS}(\Sigma_{t^*})},\label{EstOtherFTerms}
\end{align}
for some $C>0$. We now have all the ingredients required to proceed to stating and proving the main result.

\section{The main result}
\label{sec:Result}
In order to precisely formulate the main result, we define the following $k$\textsuperscript{th}-order quantities $\mathfrak{D}_k$ which depend on the scattering data $h_{\mathcal{H}^+}$.

\begin{definition}[Weighted $k$\textsuperscript{th}-order data quantities]
\label{def:data}
Given scattering data $h_{\mathcal{H}^+}:\{t^*\in[0,\infty)\}\times S^2_{t^*,r_+}\to\mathbb{R}$ on $\mathcal{H}^+$, we define the quantity $\mathfrak{D}_k$ (which depends on derivatives of the scattering data $h_{\mathcal{H}}$ up to $k$\textsuperscript{th}-order) as
\begin{align*}
    \mathfrak{D}_{k}= \sum_{|\sigma|\leq k}\int_{S^2_{t^*,r_+}}\int_0^{\infty}|\exp(\Upsilon_{k}\cdot t^*)\slashed\nabla^{\sigma_1}\partial_{t^*}^{\sigma_2} h_{\mathcal{H}^+}\rvert^2\mathrm{d}t^*\mathrm{d}\omega,
\end{align*}
where $\Upsilon_{k}$ is a constant depending on $k$.
\end{definition}
Given this, we now state the main result.

\begin{theorem}[Exponentially decaying nonlinear waves on Kerr-AdS]
\label{MainResult}
Let $t_0^*>0$ be fixed, $k\geq 9$ and let ${h_{\mathcal{H}^+}}:\mathbb{R}^+\times S^2_{t^*,r_+}\to\mathbb{R}$ be sufficiently smooth scattering data with
\begin{align}
    \mathfrak{D}_{k}<\infty
\end{align}
for associated constant (see Definition \ref{def:data}) $\Upsilon_{\kappa, k}\geq B_k$, with $B_k>0$ sufficiently large depending on $k$ and the spacetime surface gravity $\kappa$. Then there exists a solution $\psi:\{t^*\geq t_0^*\}\times\{r>r_+\}\times S^2_{t^*,r}\to\mathbb{R}$ in $CH^k_{KAds}$ of problem (\ref{FullProblem}) satisfying
\begin{align}
\label{MainL2}
    ||\psi||^2_{H^k_{KAdS}(\Sigma_{t^*})}&\leq \overline{C}\exp(-B_k\cdot t^*)\mathfrak{D}_{k}.
\end{align}
for some $\overline{C}>0$. Moreover, the solution satisfies
\begin{equation}
\begin{aligned}
\label{MainEstimate2}
    \sum_{|\sigma|\leq k-3}\lvert\lvert r^{\frac{1}{2}}D^{\sigma}\psi\rvert\rvert^2_{L^{\infty}(\Sigma_{t^*})}&\leq C\exp(-B_{k}\cdot t^*)\mathfrak{D}_{k},\\
    \sum_{|\sigma|\leq k-4}\lvert\lvert r^{\frac{1}{2}}\overline{D}^{\sigma}\psi\rvert\rvert^2_{L^{\infty}(\Sigma_{t^*})}&\leq C\exp(-B_{k}\cdot t^*)\mathfrak{D}_{k}.
\end{aligned}
\end{equation}
for some $C>0$.
\end{theorem}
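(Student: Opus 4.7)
My plan is to construct $\psi$ as a (weak-$\ast$) limit of solutions to a sequence of finite-in-time approximate problems, following the scheme sketched in Section 1.3. Fix a sequence of cut-off functions $\tilde{\chi}_n \in C^{\infty}_c(\mathbb{R};[0,1])$ with $\tilde{\chi}_n \equiv 1$ on $[t_0^*, T_n - 1]$ and vanishing to all orders at $T_n$, where $T_n \to \infty$. For each $n$, I would solve the mixed characteristic-spacelike problem
\begin{equation*}
\begin{cases}
\Box_g \psi_n + \alpha\psi_n = \mathcal{F}(\psi_n, \partial\psi_n), \\
\psi_n|_{\mathcal{H}^+ \cap \{t^* \leq T_n\}} = \tilde{\chi}_n h_{\mathcal{H}^+}, \quad (\psi_n, \partial_{t^*}\psi_n)|_{\Sigma_{T_n}} = 0, \\
r^{\frac{3}{2}-s}\psi_n|_{\mathcal{I}} = 0,
\end{cases}
\end{equation*}
on $[t_0^*, T_n]$, with local well-posedness in $CH^k_{KAdS}$ following from an adaptation of the linear well-posedness theory of \cite{WP}.

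\textbf{Uniform estimates via bootstrap.} The heart of the argument is the uniform-in-$n$ a priori bound
\begin{equation*}
\|\psi_n\|^2_{H^k_{KAdS}(\Sigma_{\tau})} \leq \bar{C}\exp(-B_k \tau)\mathfrak{D}_k, \qquad \tau \in [t_0^*, T_n],
\end{equation*}
established by a continuity argument. Applying Proposition \ref{prop:LeaveFAlone} with $\tilde{\mathcal{F}} = \mathcal{F}(\psi_n, \partial\psi_n)$ and using that the terminal data on $\Sigma_{T_n}$ vanishes, the right-hand side reduces to the horizon flux $F^k_{\mathcal{H}^+ \cap [\tau, T_n]}$ (bounded by $C\mathfrak{D}_k \exp(-\Upsilon_k \tau)$ via Definition \ref{def:data} together with the cut-off truncation), a blueshift term $\int_{\tau}^{T_n} C_{\kappa,k} \|\psi_n\|^2_{H^k_{KAdS}(\Sigma_{t^*})} \, dt^*$, and nonlinear contributions. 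The latter are controlled by Proposition \ref{prop:AbstractEnergyEst} and (\ref{EstOtherFTerms}) by higher-order products of $\|\psi_n\|_{H^k_{KAdS}}$ and $\|\psi_n\|_{H^{k-1}_{KAdS}}$. Under the bootstrap assumption these decay at rate at least $\exp(-2 B_k t^*)$ and are thus absorbable, for $\mathfrak{D}_k$ sufficiently small (equivalently, after tracking the dependence of $\bar{C}$ on a rescaling $h_{\mathcal{H}^+} \mapsto \varepsilon h_{\mathcal{H}^+}$). A backwards Gr\"onwall inequality --- exactly the manipulation leading to (\ref{PostGronwall}) in the linear toy computation --- then closes the estimate, provided $\Upsilon_k \geq B_k > C_{\kappa,k}$. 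The growth of $C_{\kappa,n}$ with the commutation order $n$ explains why $B_k$ must be taken large depending on $k$.

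\textbf{Passage to the limit, pointwise bounds, and main obstacle.} Given the uniform $H^k_{KAdS}$ bounds, weak-$\ast$ compactness on compact-in-$t^*$ subsets yields a subsequential limit $\psi \in CH^k_{KAdS}$ satisfying (\ref{MainL2}) via lower semicontinuity of norms. Convergence $\mathcal{F}(\psi_n, \partial\psi_n) \to \mathcal{F}(\psi, \partial\psi)$ in the distributional sense, which is needed to verify that $\psi$ solves (\ref{FullProblem}), follows from Rellich-type strong convergence of $D\psi_n$ on compact $(t^*, r)$-subsets, coupled with the quadratic structure (\ref{Quadratic}) and the $r$-boundedness of the coefficients $F^{\mu\nu}$. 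The pointwise estimate (\ref{MainEstimate2}) is then immediate from applying Theorem \ref{SliceSobolev} slice-by-slice to (\ref{MainL2}). The main technical obstacle is closing the nonlinear bootstrap uniformly across all orders $2 \leq n \leq k$: one must simultaneously ensure (i) the blueshift constant $C_{\kappa,n}$ at each commutation level is strictly dominated by $B_k$, forcing the hierarchy of constants $\Upsilon_k$ to grow with $k$, and (ii) the nonlinear self-interactions, which are genuinely quadratic in $D\psi$, do not destroy the exponential decay --- a property ultimately traceable to the $r^{-1/2}$ pointwise bounds of Theorem \ref{SliceSobolev}, which forbid spatial growth of $\mathcal{F}$ at infinity.
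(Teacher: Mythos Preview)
Your proposal follows essentially the same scheme as the paper: a sequence of finite-in-time mixed characteristic--spacelike problems, a bootstrap argument driven by Proposition~\ref{prop:LeaveFAlone} together with the nonlinear estimates of Proposition~\ref{prop:AbstractEnergyEst} and~(\ref{EstOtherFTerms}), closed by a backwards Gr\"onwall, followed by a compactness limit. Two small points of divergence are worth flagging. First, you impose smallness of $\mathfrak{D}_k$ to absorb the nonlinear terms, whereas the theorem as stated assumes only $\mathfrak{D}_k<\infty$; the paper instead lets the constants $\overline{C}$ and $B_k$ (hence $\Upsilon_{\kappa,k}$) depend on $\mathfrak{D}_k$, so that the nonlinear contribution $\overline{C}\exp(-B_k\tau)\mathfrak{D}_k$ inside the Gr\"onwall coefficient is rendered harmless by choosing $B_k$ large rather than $\mathfrak{D}_k$ small---your parenthetical rescaling remark is morally the same observation, but you should make this explicit to match the hypotheses. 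Second, for the limit the paper invokes Arzel\`a--Ascoli directly on $\psi_i$ and successive derivatives (using the uniform $L^\infty$ bounds~(\ref{4.1Uniform}) already available from the bootstrap) rather than weak-$\ast$ compactness plus Rellich; both routes are standard and yield the same conclusion, though the Arzel\`a--Ascoli route gives pointwise convergence more directly and avoids having to separately verify that the limit lies in $CH^k_{KAdS}$ rather than merely $L^\infty_{t^*}H^k_{KAdS}$.
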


\begin{remark}[Uniqueness of solutions]
Note that Theorem \ref{MainResult} makes no claims regarding uniqueness of the solution $\psi$. However, by applying the estimates appearing in sections \ref{sec:EgEst} and \ref{sec:EstTheNonlin} to the difference of two solutions $\psi_1'$, $\psi_2'$, one can derive uniqueness in the class of exponentially decaying solutions. One can see \cite{SchwzScattering} for an analogous discussion of the problem of uniqueness in the setting of a scattering problem for exponentially decaying perturbations of the asymptotically flat Kerr black hole exterior.
\end{remark}

\section{The proof}
\label{sec:Proof}
This section is concerned with the proof of Theorem \ref{MainResult}. Firstly, a sequence of finite-in-time problems approximating the full problem are constructed (Section \ref{sec:FiniteProblems}). The bootstrap argument of Section \ref{sec:Bootstrap} is used to show existence of uniformly bounded, exponentially decaying solutions of these finite problems via the bootstrap improvement argument appearing in Section \ref{sec:FiniteEstimates}. Finally, a standard argument is used to ensure the sequence of solutions of the finite problems converges to the target solution of the global towards the future nonlinear problem. This appears in Section \ref{sec:Convergence}.

\subsection{An approximating sequence of nonlinear waves}
\label{sec:FiniteProblems}
Given a sequence $t_i^*\to\infty$ of times to the future of a fixed $t_0^*>0$, we consider the finite-in-time problems
\begin{align}
    \label{FiniteProblem}
    \begin{cases}
        \Box_g\psi_i + \alpha\psi_i = \mathcal{F}_{i} = \mathcal{F}(D\psi_i,D\psi_i),\\
        \psi_i|_{\mathcal{H}^+}={h_{\mathcal{H}^+}}_i,\quad \psi_i|_{\Sigma_{t_i^*}}=0,\quad r^{\frac{3}{2}-s}\psi_i|_{\mathcal{I}}=0,\quad s=\sqrt{\frac{9}{4}-\alpha}
    \end{cases}
\end{align}
for $\psi_i$ $\in$ $CH^k_{KAdS}$. Here ${h_{\mathcal{H}^+}}_i$ is the appropriately truncated scattering data, satisfying
\begin{align}
    {h_{\mathcal{H}^+}}_i(t^*,r=r_+,\theta,\phi)=\begin{cases}
        h_{\mathcal{H}^+}(t^*,\theta,\phi)\text{  for  } t^*\leq t_i^*-\vartheta_i,\quad\vartheta_i=\frac{1}{10} t_i^*\\
        0\text{  for  }t^*=t_i^*
    \end{cases} \label{Truncation}
\end{align}
and interpolating smoothly between its two cases.

\begin{figure}[h]
    \centering
    \includegraphics[width=0.25\textwidth]{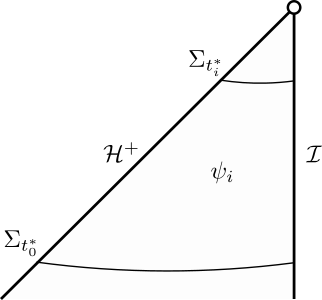}
    \caption{The spacetime region on which the finite-in-time problems are constructed.}
\end{figure}

We prove the following estimates on the $\psi_i$ and their derivatives.

\begin{theorem}[Uniformly bounded solutions of the finite problems]
\label{FiniteUniformThm}
Let $t_0^*>0$ be fixed, $k\geq 9$ and let ${h_{\mathcal{H}^+}}_i:\mathbb{R}^+\times S^2_{t^*,r_+}\to\mathbb{R}$ be sufficiently smooth truncated scattering data with
\begin{align}
\label{DataDecayAssumption}
    \mathfrak{D}_{k}<\infty
\end{align}
for associated constant (see Definition \ref{def:data}) $\Upsilon_{\kappa, k}\geq B_k$, with $B_k>0$ sufficiently large depending on $k$ and the spacetime surface gravity $\kappa$. Then there exists a solution $\psi_i:\{t^*\text{ }\in\text{ }[t_0^*,t_i^*]\}\times\{r\geq r_+\}\times S^2_{t^*,r}\to\mathbb{R}$ $\in$ $CH^k_{KAdS}$ of the nonlinear scattering problem (\ref{FiniteProblem}) satisfying
\begin{align}
\label{BootstrapAssumption}
    ||\psi_i||^2_{H^k_{KAdS}(\Sigma_{t^*})}&\leq\overline{C}\exp(-B_{k}\cdot t^*)\mathfrak{D}_{k}
\end{align}
for some $\overline{C}>0$. Moreover, the solution satisfies
\begin{equation}
\begin{aligned}
\label{4.1Uniform}
    \sum_{|\sigma|\leq k-3}\lvert\lvert r^{\frac{1}{2}}D^{\sigma}\psi_i\rvert\rvert^2_{L^{\infty}(\Sigma_{t^*})}&\leq C\exp(-B_{k}\cdot t^*)\mathfrak{D}_{k},\\
    \sum_{|\sigma|\leq k-4}\lvert\lvert r^{\frac{1}{2}}\overline{D}^{\sigma}\psi_i\rvert\rvert^2_{L^{\infty}(\Sigma_{t^*})}&\leq C\exp(-B_{k}\cdot t^*)\mathfrak{D}_{k}.
\end{aligned}
\end{equation}
for some $C>0$.
\end{theorem}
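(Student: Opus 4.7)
The plan is to construct each $\psi_i$ by a backwards continuity argument on $[t_0^*, t_i^*]$, seeded by the zero Cauchy data on $\Sigma_{t_i^*}$. Local well-posedness on some short interval $[t_i^* - \epsilon, t_i^*]$ with the mixed (characteristic + spacelike) data prescribed by (\ref{FiniteProblem}) follows from an adaptation to our semilinear class of the linear well-posedness result \cite{WP}, and will not be repeated here. Given this, define the bootstrap set
\[
\mathcal{A} = \bigl\{\tau \in [t_0^*, t_i^*] : \psi_i \in CH^k_{KAdS} \text{ exists on } [\tau, t_i^*] \text{ and satisfies } \|\psi_i\|^2_{H^k_{KAdS}(\Sigma_{t^*})} \leq 2\overline{C}\,e^{-B_k t^*}\,\mathfrak{D}_k \text{ for all } t^* \in [\tau, t_i^*]\bigr\}.
\]
Local well-posedness makes $\mathcal{A}$ non-empty and relatively open in $[t_0^*, t_i^*]$; relative closedness follows from the continuity of $t^* \mapsto \|\psi_i\|_{H^k_{KAdS}(\Sigma_{t^*})}^2$ guaranteed by membership in $CH^k_{KAdS}$. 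It then suffices to \emph{improve} the bootstrap, replacing $2\overline{C}$ by $\overline{C}$.

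For the improvement, I would apply Proposition \ref{prop:LeaveFAlone} with $n = k$ and inhomogeneity $\tilde{\mathcal{F}} = \mathcal{F}_i$ on the interval $[\tau, t_i^*]$. The future boundary term $\|\psi_i\|^2_{H^k_{KAdS}(\Sigma_{t_i^*})}$ vanishes by the zero Cauchy data. The horizon flux $F^k_{\mathcal{H}^+ \cap [\tau, t_i^*]}[\psi_i]$ is bounded by $C_1\,e^{-2\Upsilon_{\kappa, k} \tau}\,\mathfrak{D}_k$ via the exponential decay assumption (\ref{DataDecayAssumption}) together with the truncation (\ref{Truncation}), since the truncated data's $H^k$ norm is controlled pointwise in $t^*$ by that of $h_{\mathcal{H}^+}$. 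The nonlinear $\mathcal{F}$-dependent contributions on the right-hand side of (\ref{GenFEstimate}) are controlled via Proposition \ref{prop:AbstractEnergyEst} and (\ref{EstOtherFTerms}), which under the bootstrap assumption yield the quartic bound $\lesssim \overline{C}^{\,2}\,\mathfrak{D}_k^{\,2}\,e^{-2B_k t^*}$; these are strictly subdominant to the horizon-flux term provided $\overline{C}\mathfrak{D}_k$ is small (which may be arranged by a smallness hypothesis on the data, or equivalently by taking $B_k$ large enough that the decay provides the required smallness). The remaining bulk term $\int_{\tau}^{t_i^*} C_{\kappa, k}\,\|\psi_i\|_{H^k_{KAdS}(\Sigma_{t^*})}^2\,dt^*$ is then absorbed by a backwards Grönwall inequality mimicking the model calculation (\ref{PostGronwall}): provided $\Upsilon_{\kappa, k} \geq B_k > C_{\kappa, k}$, one recovers
\[
\|\psi_i\|^2_{H^k_{KAdS}(\Sigma_\tau)} \leq C_2\Bigl(1 + \tfrac{C_{\kappa, k}}{B_k - C_{\kappa, k}}\Bigr)\,e^{-B_k \tau}\,\mathfrak{D}_k,
\]
and for $B_k$ sufficiently large this constant is strictly less than $2\overline{C}$, closing the bootstrap.

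The pointwise bounds (\ref{4.1Uniform}) then follow from Theorems \ref{SliceSobolev} and \ref{thm:SobolevOne}, once the $\Gamma^\beta(\Box_g \psi_i + \alpha \psi_i) = \Gamma^\beta \mathcal{F}_i$ and $\Box_g^\gamma \mathcal{F}_i$ terms appearing on the right of (\ref{SobolevEst}), (\ref{SobolevOne}) are bounded via Proposition \ref{prop:AbstractEnergyEst}; the quartic contribution decays at rate $2B_k$ and is thus dominated by the established linear decay $e^{-B_k t^*}\mathfrak{D}_k$. The main obstacle is the interplay between the gravitational blueshift (manifested in the $C_{\kappa, k}$-weighted bulk term) and the handling of the top-order nonlinear contributions: Proposition \ref{prop:AbstractEnergyEst} requires placing one factor of the product $\mathcal{F} \sim (D\psi)^2$ in $L^\infty$ via Sobolev embedding, which costs several derivatives and is precisely what forces the regularity threshold $k \geq 9$. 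Once this hierarchy is accommodated, the Grönwall-based absorption of the blueshift bulk term, the quartic smallness of the nonlinearity under the bootstrap assumption, and the exponential decay of the horizon data combine to yield the required estimate with a strictly improved constant, completing the proof.
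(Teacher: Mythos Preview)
Your approach matches the paper's: a backwards bootstrap argument seeded at $t_i^*$, with the improvement step carried out via Proposition~\ref{prop:LeaveFAlone}, Proposition~\ref{prop:AbstractEnergyEst}, and a Gr\"onwall inequality exploiting $B_k > C_{\kappa,k}$, followed by Theorem~\ref{SliceSobolev} for the $L^\infty$ bounds. One correction to the bootstrap logic: openness of $\mathcal{A}$ does \emph{not} follow from local well-posedness alone---local existence extends the solution past $\tau$, but the bound $2\overline{C}$ need not persist on the extension; openness is precisely where the improvement $2\overline{C}\to\overline{C}$ is used (so that continuity then gives $\leq 2\overline{C}$ on a short further extension), exactly as the paper organises it in Section~\ref{sec:Bootstrap}.
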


\subsection{The bootstrap argument}
\label{sec:Bootstrap}
We prove Theorem \ref{FiniteUniformThm} via the following bootstrap argument. Openness of the bootstrap set $\mathcal{B}$ is given by Theorem \ref{BootstrapThm}, which is stated and proven in Section \ref{sec:FiniteEstimates}. For the duration of this section and Section \ref{sec:FiniteEstimates}, $i$ is fixed.

\begin{proof}[Proof of Theorem \ref{FiniteUniformThm}]
Consider the set
\begin{align}
\label{BootstrapRegion}
    \mathcal{B}=\bigg\{t^*\in[t_0^*,t_i^*]\text{ }\bigg|\text{ }\text{ }&\exists\text{ a solution $\psi_i$ in $CH^k_{KAdS}$ of (\ref{FiniteProblem}) on $[t^*,t_i^*]$ satisfying (\ref{BootstrapAssumption}) $\forall$ $\tau$ $\in$ $[t^*,t_i^*]$}\bigg\}.
\end{align}
We will show that $\mathcal{B}=[t_0^*,t_i^*]$.\par

\begin{itemize}
    \item Firstly, by local existence of solutions of (\ref{FiniteProblem}), $\mathcal{B}$ is non-empty.
    \item  Furthermore, let $\tilde{\tau}_j$ be a sequence of times in $\mathcal{B}$ converging to a limit $\tilde{\tau}$. Then by local existence and the fact that the time of existence depends only on (\ref{BootstrapAssumption}) which is uniform, the limit $\tilde{\tau}$ must also lie in $\mathcal{B}$. As such, $\mathcal{B}$ is closed.
    \item Now, let $t^*$ $\in$ $\mathcal{B}$. Then, by continuity of the solution $\psi_i$ and a further application of local existence, there exists a small open ball $\mathbf{B}_{\epsilon}(t^*)=\mathcal(t^*-\epsilon,t^*+\epsilon)$ of radius $\epsilon>0$ on which $\psi_i$ solves (\ref{FiniteProblem}) and satisfies
\begin{align}
\label{3/4}
||\psi_i(\tau)||_{H_{\mathrm{KAdS}}^k(\Sigma_{\tau})}^2\leq 2\overline{C}\exp(-B_k\cdot\tau)\mathfrak{D}_{k}.
\end{align}
Theorem \ref{BootstrapThm} below implies that the bound (\ref{3/4}) can be improved
\begin{align*}
||\psi_i(\tau)||_{H_{\mathrm{KAdS}}^k(\Sigma_{\tau})}^2\leq \overline{C}\exp(-B_k\cdot\tau)\mathfrak{D}_{k},
\end{align*}
so that $\mathbf{B}_{\epsilon}(t^*)\subset\mathcal{B}$ and $\mathcal{B}$ is open.
\end{itemize}
Hence, $\mathcal{B}$ is open, closed and non-empty, implying $\mathcal{B}=[t_0^*,t_i^*]$.\par

The uniform bound (\ref{4.2Uniform}) follows from Theorem \ref{BootstrapThm}, completing the proof.
\end{proof}

\subsection{Improving the bootstrap assumption}
\label{sec:FiniteEstimates}
The following bootstrap assumption improvement result gives openness of the bootstrap set $\mathcal{B}$ and the uniform bound (\ref{4.2Uniform}). These are required for the proof of Theorem \ref{FiniteUniformThm} appearing in the previous section. The proof follows from the energy estimate for the linear inhomogeneous problem (Proposition \ref{prop:LeaveFAlone}) and that for the nonlinearity (Proposition \ref{prop:AbstractEnergyEst}).

\begin{theorem}[Improving the bootstrap assumption]
\label{BootstrapThm}
Let $t_0^*>0$ be fixed, $k\geq 9$ and let $h_{\mathcal{H}^+}$ be as in Theorem \ref{FiniteUniformThm}. Furthermore, let $t_0^*<T<t_i^*$ and suppose there exists a solution $\psi_i$ $\in CH^k_{KAdS}$ of problem (\ref{FiniteProblem}) satisfying (\ref{BootstrapAssumption}) for some $\overline{C}>0$ sufficiently large at $\tau$ $\in$ $[T,t_i^*]$. Then $\psi_i$ decays in $t^*$ with an improved constant
\begin{align}
\label{ImprovedConstant}
    ||\psi_i||^2_{H^k_{KAdS(\Sigma_{\tau})}}\leq\frac{\overline{C}}{2}\exp(-B_k\cdot\tau)\mathfrak{D}_{k}.
\end{align}
and satisfies
\begin{equation}
\begin{aligned}
\label{4.2Uniform}
    \sum_{|\sigma|\leq k-3}\lvert\lvert r^{\frac{1}{2}}D^{\sigma}\psi_i\rvert\rvert^2_{L^{\infty}(\Sigma_{t^*})}&\leq C\exp(-B_{k}\cdot t^*)\mathfrak{D}_{k},\\
    \sum_{|\sigma|\leq k-4}\lvert\lvert r^{\frac{1}{2}}\overline{D}^{\sigma}\psi_i\rvert\rvert^2_{L^{\infty}(\Sigma_{t^*})}&\leq C\exp(-B_{k}\cdot t^*)\mathfrak{D}_{k}.
\end{aligned}
\end{equation}
for $B_{n}=C_{\kappa,n}$ and some $C>0$.
\end{theorem}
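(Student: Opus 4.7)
The plan is to apply the linear inhomogeneous energy estimate of Proposition \ref{prop:LeaveFAlone} with $\tilde{\mathcal{F}} = \mathcal{F}_i$, estimate the nonlinear contributions via Proposition \ref{prop:AbstractEnergyEst}, and then use the bootstrap assumption (\ref{BootstrapAssumption}) to convert the right-hand side into an exponentially decaying forcing plus a Gr\"onwall bulk term. A Gr\"onwall argument analogous to (\ref{PostGronwall}) then recovers a bound of the form (\ref{ImprovedConstant}) whose coefficient can be forced to be strictly smaller than $\overline{C}$ by taking $\overline{C}$, $B_k$ sufficiently large.

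First I would apply Proposition \ref{prop:LeaveFAlone} with $n = k$ on the interval $[\tau,t_i^*]$. The endpoint term $\|\psi_i\|^2_{H^k_{KAdS}(\Sigma_{t_i^*})}$ vanishes by the trivial Cauchy condition in (\ref{FiniteProblem}), and the horizon flux is controlled using the truncation (\ref{Truncation}), Definition \ref{def:data}, and the hypothesis $\Upsilon_{\kappa,k}\geq B_k$, giving
\begin{equation*}
    F^k_{\mathcal{H}^+\cap[\tau,t_i^*]}[\psi_i] \leq C_0\exp(-B_k\cdot\tau)\mathfrak{D}_{k}.
\end{equation*}
The $\mathcal{F}_i$-dependent terms on the right-hand side of (\ref{GenFEstimate}) are then estimated by Proposition \ref{prop:AbstractEnergyEst} together with (\ref{EstOtherFTerms}), producing quantities schematically of the form $C\|\psi_i\|^2_{H^k_{KAdS}}\|\psi_i\|^2_{H^{k-1}_{KAdS}}$ and $C\bigl(\int \|\psi_i\|_{H^k_{KAdS}}\|\psi_i\|_{H^{k-1}_{KAdS}}\,\mathrm{d}t^*\bigr)^2$. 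Feeding the bootstrap bound (\ref{BootstrapAssumption}) into one factor of each product, all nonlinear contributions are controlled by $C\overline{C}\mathfrak{D}_k\exp(-B_k\cdot\tau)\cdot\sup_{t^*\in[\tau,t_i^*]}\|\psi_i\|^2_{H^k_{KAdS}(\Sigma_{t^*})}$, and can therefore be absorbed into the corresponding left-hand side term provided $\overline{C}\mathfrak{D}_k$ is sufficiently small, equivalently provided $\tau\geq t_0^*$ is taken sufficiently large so that the exponential factor dominates.

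Combining these ingredients, Proposition \ref{prop:LeaveFAlone} reduces to an inequality of the form
\begin{equation*}
    \|\psi_i\|^2_{H^k_{KAdS}(\Sigma_\tau)} \leq \tilde{C}\exp(-B_k\cdot\tau)\mathfrak{D}_{k} + C_{\kappa,k}\int_\tau^{t_i^*}\|\psi_i\|^2_{H^k_{KAdS}(\Sigma_{t^*})}\,\mathrm{d}t^*,
\end{equation*}
with $\tilde{C}$ independent of $\overline{C}$. Applying Gr\"onwall exactly as in (\ref{PostGronwall}) and choosing $B_k > C_{\kappa,k}$ (the role of the ``sufficiently large'' requirement on $B_k$) yields
\begin{equation*}
    \|\psi_i\|^2_{H^k_{KAdS}(\Sigma_\tau)} \leq \tilde{C}\bigg(1 + \frac{C_{\kappa,k}}{B_k - C_{\kappa,k}}\bigg)\exp(-B_k\cdot\tau)\mathfrak{D}_{k},
\end{equation*}
and taking $\overline{C}$ large enough that the prefactor does not exceed $\overline{C}/2$ produces the improved bound (\ref{ImprovedConstant}). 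The pointwise estimates (\ref{4.2Uniform}) then follow by substituting (\ref{ImprovedConstant}) into the Sobolev embeddings of Theorem \ref{SliceSobolev} and Theorem \ref{thm:SobolevOne}, after replacing the $\Box_g\psi_i + \alpha\psi_i = \mathcal{F}_i$ terms on their right-hand sides by the bounds of Proposition \ref{prop:AbstractEnergyEst}.

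The main obstacle is the simultaneous balancing of three ``sufficiently large/small'' parameters: $B_k$ must exceed the Gr\"onwall constant $C_{\kappa,k}$ to yield exponential decay at rate $B_k$; $\overline{C}$ must dominate the Gr\"onwall prefactor $\tilde{C}\bigl(1 + C_{\kappa,k}/(B_k - C_{\kappa,k})\bigr)$; and $\overline{C}\mathfrak{D}_k$ (tied to the choice of $t_0^*$) must be small enough for the quadratic nonlinear terms to be absorbable before Gr\"onwall is invoked. It is precisely this tripartite balance that permits the strict improvement to $\overline{C}/2$ required to open the bootstrap set $\mathcal{B}$ in the proof of Theorem \ref{FiniteUniformThm}.
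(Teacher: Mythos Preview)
Your proposal follows essentially the same route as the paper: apply Proposition~\ref{prop:LeaveFAlone}, estimate the nonlinear terms via Proposition~\ref{prop:AbstractEnergyEst} and (\ref{EstOtherFTerms}), insert the bootstrap assumption into the lower-order factors, and close by Gr\"onwall. The paper organises the nonlinear terms slightly differently---rather than absorbing them into the left-hand side before Gr\"onwall, it keeps them inside the bulk integrand so that the Gr\"onwall coefficient becomes $a(C_{\kappa,k})=C\bigl(3\overline{C}\exp(-B_k\tau)\mathfrak{D}_k+C_{\kappa,k}\bigr)$---but this is a cosmetic distinction and both arrangements close under the same largeness conditions on $B_k$ and $\overline{C}$.

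One phrasing in your write-up should be corrected: you say the absorption step requires ``$\overline{C}\mathfrak{D}_k$ sufficiently small, equivalently $\tau\geq t_0^*$ sufficiently large''. Since $\overline{C}$ must be \emph{large} (to dominate the Gr\"onwall prefactor) and $t_0^*$ is \emph{fixed} in the hypotheses, neither of these is the mechanism. The quantity that must be small is $\overline{C}\mathfrak{D}_k\exp(-B_k t_0^*)$, and the smallness is achieved by taking $B_k$ large---this is exactly what the paper does via the requirement $2\Upsilon_{\kappa,k}>a(C_{\kappa,k})$ with $B_k$ chosen large depending on $C_{\kappa,k}$, $C$ and $\mathfrak{D}_k$. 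With that adjustment your tripartite balance is correctly identified and the argument goes through.
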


\begin{proof}
By Propositions \ref{prop:LeaveFAlone} and \ref{prop:AbstractEnergyEst}, as well as estimate (\ref{EstOtherFTerms}), $\psi_i$ satisfies
\begin{align}
    &||\psi_i||^2_{H^k_{KAdS}(\Sigma_{\tau})}\nonumber\\
    \leq& C\bigg[F^k_{\mathcal{H}^+\cap[\tau,t_i^*]}[\psi_i]+||\psi_i||^2_{H^k_{KAdS}(\Sigma_{t_i^*})}+||\psi_i||^4_{H^{k-1}_{KAdS}(\Sigma_{\tau})}+\int_{\tau}^{t_i^*}||\psi_i||^2_{H^{k-2}_{KAdS}(\Sigma_{t^*})}||\psi_i||^2_{H^{k-3}_{KAdS}(\Sigma_{t^*})}\mathrm{d}t^*\nonumber\\
    &\quad+\int_{\tau}^{t_i^*}C_{\kappa,k}||\psi_i||^2_{H^k(\Sigma_{t^*})}\mathrm{d}t^*+\bigg(\int_{\tau}^{t_i^*}||\psi_i||_{H^k_{KAdS}(\Sigma_{t^*})}||\psi_i||_{H^{k-1}_{KAdS}(\Sigma_{t^*})}\mathrm{d}t^*\bigg)^2\bigg]\nonumber\\     
    \leq&C\bigg[F^k_{\mathcal{H}^+\cap[\tau,t_i^*]}[\psi_i]+||\psi_i||^2_{H^k_{KAdS}(\Sigma_{t_i^*})}+\bigg(\int_{\tau}^{t_i^*}||\psi_i||_{H^k_{KAdS}(\Sigma_{t^*})}||\psi_i||_{H^{k-1}_{KAdS}(\Sigma_{t^*})}\mathrm{d}t^*\bigg)^2\nonumber\\
    &\quad+\int_{\tau}^{t_i^*}\bigg(||\psi_i||^2_{H^{k-1}_{KAdS}(\Sigma_{t^*})}+||\psi_i||^2_{H^{k-3}_{KAdS}(\Sigma_{t^*})}+C_{\kappa,k}\bigg)||\psi_i||^2_{H^k(\Sigma_{t^*})}\mathrm{d}t^*\bigg]\label{ComboEstimate}   
\end{align}
where
\begin{align*}
   F^k_{\mathcal{H}^+\cap[\tau,t_i^*]}[\psi_i]\leq C\exp(-2\Upsilon_{\kappa,k}\cdot t^*)\mathfrak{D}_{k}
\end{align*}
for some $C>0$, by truncation of the scattering data (\ref{Truncation}).\par

By the decay assumption (\ref{DataDecayAssumption}) on the approximate scattering data ${h_{\mathcal{H}^+}}_i$, the assumption that $\psi_i|_{\Sigma_{t_i^*}}=0$, and applying the bootstrap assumption (\ref{BootstrapAssumption}) to the lower order norm in the final term on the right-hand side, (\ref{ComboEstimate}) becomes
\begin{align}
    ||\psi_i||^2_{H^k_{KAdS}(\Sigma_{\tau})}\leq&C\bigg[\exp(-2\Upsilon_{\kappa,k}\cdot t^*)\mathfrak{D}_{k}+\int_{\tau}^{t_i^*}\bigg(2\overline{C}\exp(-B_k\cdot\tau)\mathfrak{D}_k+C_{\kappa,k}\bigg)||\psi_i||^2_{H^k(\Sigma_{t^*})}\mathrm{d}t^*\nonumber\\
        &\qquad+\bigg(\int_{\tau}^{t_i^*}||\psi_i||_{H^k_{KAdS}(\Sigma_{t^*})}\Big(\overline{C}\exp(-B_k\cdot t^*)\mathfrak{D}_{k}\Big)^{\frac{1}{2}}\mathrm{d}t^*\bigg)^2\bigg]\nonumber\\
        \leq&C\bigg[\exp(-2\Upsilon_{\kappa,k}\cdot t^*)\mathfrak{D}_{k}+\int_{\tau}^{t_i^*}\bigg(3\overline{C}\exp(-B_k\cdot\tau)\mathfrak{D}_k+C_{\kappa,k}\bigg)||\psi_i||^2_{H^k(\Sigma_{t^*})}\mathrm{d}t^*\bigg]\label{BootstrapApplied}
\end{align}
Let
\begin{gather*}
    a(C_{\kappa,k})=C\bigg(3\overline{C}\exp(-B_k\cdot\tau)\mathfrak{D}_k+C_{\kappa,k}\bigg),\nonumber\\
    b(t^*)=C\exp(-2\Upsilon_{\kappa,k}\cdot t^*)\mathfrak{D}_{k}.
\end{gather*}
Then a Gr\"onwall inequality \cite[Proposition 2.6]{Gronwall} yields
\begin{align}
     &||\psi_i||^2_{H^k_{KAdS}(\Sigma_{\tau})}\nonumber\\
     \leq&b(t^*)+\int_{\tau}^{t_i^*}a(C_{\kappa,k})b(t^*)\exp\bigg(\int_{\tau}^{t^*}a(C_{\kappa,k})\mathrm{d}s\bigg)\mathrm{d}t^*\nonumber\\
     =&b(t^*)+\int_{\tau}^{t_i^*}a(C_{\kappa,k})b(t^*)\exp\Big((t^*-\tau)a(C_{\kappa,k})\Big)\mathrm{d}t^*\nonumber\\
     =&b(t^*)+C\exp(-a(C_{\kappa,k})\cdot\tau)\int_{\tau}^{t_i*}a(C_{\kappa,k})\mathfrak{D}_k\exp\Big((a(C_{\kappa,k})-2\Upsilon_{\kappa,k})t^*\Big)\mathrm{d}t^*\nonumber\\
     =&b(t^*)-C\exp(-a(C_{\kappa,k})\cdot\tau)\int_{\tau}^{t_i*}\frac{a(C_{\kappa,k})\mathfrak{D}_k}{2\Upsilon_{\kappa,k}-a(C_{\kappa,k})}\partial_{t^*}\bigg(\exp\Big((a(C_{\kappa,k})-2\Upsilon_{\kappa,k})t^*\Big)\bigg)\mathrm{d}t^*\nonumber\\
     \leq& b(t^*)+C\exp(-a(C_{\kappa,k})\cdot\tau)\frac{a(C_{\kappa,k})\mathfrak{D}_k}{2\Upsilon_{\kappa,k}-a(C_{\kappa,k})}\exp\Big((a(C_{\kappa,k})-2\Upsilon_{\kappa,k})\tau\Big)\nonumber\\
     \leq&C\exp(-2\Upsilon_{\kappa,k}\cdot\tau)\mathfrak{D}_k\bigg(1+\frac{a(C_{\kappa,k})}{2\Upsilon_{\kappa,k}-a(C_{\kappa,k})}\bigg)\label{PreG}
\end{align}
The desired improved estimate
\begin{align}
    \label{FinalGronwall}
    ||\psi_i||^2_{H^k_{KAdS(\Sigma_{\tau})}}\leq\frac{\overline{C}}{2}\exp(-B_k\cdot\tau)\mathfrak{D}_{k}
\end{align}
follows, provided $2\Upsilon_{\kappa,k}\geq B_k$ and
\begin{align}
    \label{Requirements}
    \frac{\overline{C}}{2}\bigg[\frac{2C}{\overline{C}}+\frac{2Ca(C_{\kappa,k})}{\overline{C}(2\Upsilon_{\kappa,k}-a(C_{\kappa,k}))}\bigg]\leq\frac{\overline{C}}{2}.
\end{align}
Inequality (\ref{Requirements}) holds, provided $2\Upsilon_{\kappa,k}>a(C_{\kappa,k})$ and $\overline{C}$ and $B_k$ are sufficiently large, depending on $C_{\kappa,k}$, $C$ and $\mathfrak{D}_k$.\par

To obtain the uniform bounds (\ref{4.2Uniform}), we apply Theorem \ref{SliceSobolev}. For each $|\sigma|<k-3$, this gives
\begin{align*}
    &||\overline{D}^{\sigma}\psi_i||_{L^{\infty}(\Sigma_{\tau})}\nonumber\\
     \leq&\frac{C}{r^{\frac{1}{2}}}\Bigg[\sum_{\substack{|\alpha|\leq |\sigma|+2,\\|\beta|\leq|\sigma|+1,\\|\gamma|\leq\big\lceil\frac{|\sigma|+1}{2}\big\rceil}}\bigg[||\Gamma^{\alpha}\psi||^2_{H^1_{KAdS}(\Sigma_{\tau})}+\int_{\Sigma_{\tau}}\bigg(\Big(\Gamma^{\beta}(\Box_g\psi+\alpha\psi)\Big)^2+\Big(\Box_g^{\gamma}(\Box_g\psi+\alpha\psi)\Big)^2\bigg) r^2\mathrm{d}r\mathrm{d}\omega\bigg]\Bigg]^{\frac{1}{2}}\nonumber\\
     =&\frac{C}{r^{\frac{1}{2}}}\Bigg[\sum_{\substack{|\alpha|\leq|\sigma|+2,\\|\beta|\leq|\sigma|+1,\\|\gamma|\leq\big\lceil\frac{|\sigma|+1}{2}\big\rceil}}\bigg[||\Gamma^{\alpha}\psi||^2_{H^1_{KAdS}(\Sigma_{\tau})}+\int_{\Sigma_{\tau}}\bigg((\Gamma^{\beta}\mathcal{F})^2+(\Box_g^{\gamma}\mathcal{F})^2\bigg) r^2\mathrm{d}r\mathrm{d}\omega\bigg]\Bigg]^{\frac{1}{2}}\nonumber\nonumber\\
     \leq&\frac{C}{r^{\frac{1}{2}}}\Bigg[||\psi||^2_{H^{|\sigma|+3}_{KAdS}(\Sigma_{\tau})}+||\psi||^2_{H^{|\sigma|+2}_{KAdS}(\Sigma_{\tau})}||\psi||^2_{H^{|\sigma|+1}_{KAdS}(\Sigma_{\tau})}\Bigg]^{\frac{1}{2}}\nonumber\\
     \leq&\frac{C}{r^{\frac{1}{2}}}\bigg(\frac{\overline{C}}{2}\bigg)^{\frac{1}{2}}\exp\bigg(-\frac{B_k}{2}\cdot\tau\bigg)\mathfrak{D}_{|\sigma|+3}^{\frac{1}{2}},
\end{align*}
where the third inequality follows from estimate (\ref{EstOtherFTerms}) and the last from (\ref{FinalGronwall}). For $|\sigma|=k-3$, one obtains the same estimate for $D^{\sigma}$.
\end{proof}

\subsection{Convergence of the solution}
\label{sec:Convergence}
All that remains is to prove Theorem \ref{MainResult} as a consequence of Theorem \ref{FiniteUniformThm}.
\begin{proof}[Proof of Theorem \ref{MainResult}]
By Theorem \ref{FiniteUniformThm}, we have uniform boundedness of the solutions $\psi_i$ to the finite problems (\ref{FiniteProblem}) and their first-order derivatives $D\psi_i$. Since we have set $\psi_i|_{\Sigma_{t_i^*}}=0$, each solution $\psi_i$ trivially extends to a function which is identically zero on $[t_i^*,\infty)$. Thus, we may apply the \textit{Arz\`ela-Ascoli Theorem} on the domain $[t_0^*,\infty)\times \{r\geq r_{\mathcal{H}}\}\times S^2_{t^*,r}$ to obtain a uniformly convergent subsequence ${\psi_i}_j\to\psi$. We may also apply \textit{Arz\`ela-Ascoli} to the derivatives $\{D{\psi_i}_j,D^2{{\psi_i}_j}_k,\dots\}$ up to sufficiently high order to obtain uniformly convergent subsequences ${D\psi_i}_{j}\to D\psi$, ${{D^2\psi_i}_{j}}_k\to D^2\psi$. Thus, the limiting function $\psi$ solves (\ref{EqnOfChoice}). Finally, the uniform bound (\ref{4.1Uniform}) gives the estimates (\ref{MainEstimate2}).
\end{proof}
\section*{Acknowledgements}
I would like to thank Martin Taylor, Gustav Holzegel and Claude Warnick for their insightful comments during the preparation of this paper, as well as the Engineering and Physical Sciences Research Council (EPSRC) for funding this project.

\begin{bibdiv}
\begin{biblist}

\bib{Numerics}{article}
{
title={Weakly turbulent instability of anti-de Sitter spacetime},
author={Biz\'on, P.},
author={Rostworowski, A.},
journal={Physical Review Letters},
volume={107},
number={3},
date={2011},
pages={0331102:1--4}
}

\bib{BoyerLindquist}{article}
{
title={Maximal analytic extension of the Kerr metric},
author={Boyer, R.H.},
author={Lindquist, R.W.},
journal={Journal of Mathematical Physics},
volume={8},
number={2},
date={1967},
pages={265--281}
}

\bib{BF}{article}
{
title = {Positive energy in anti-de Sitter backgrounds and gauged extended supergravity},
author = {Breitenlohner, P.},
author = {Freedman, D. Z.},
journal = {Physics Letters B},
volume = {115},
number = {3},
date = {1982},
pages = {197--201}
}

\bib{Christodoulou}{article}
{
title={Global solutions of nonlinear hyperbolic equations for small initial data},
author={Christodoulou, D.},
journal={Communications in Pure and Applied Mathematics},
volume={39},
date={1986},
pages={267--282}
}

\bib{FormOfBHs}{article}
{
title={The formation of black holes in general relativity},
author={Christodoulou, D.},
journal={\href{https://arxiv.org/pdf/0805.3880.pdf}{arXiv:0805.3880 [gr-qc]}},
date={2008}
}

\bib{SchwzScattering}{article}
{
title={A scattering theory construction of dynamical vacuum black holes},
author={Dafermos, M.},
author={Holzegel, G.},
author={Rodnianski, I.},
date={2024},
journal={Journal of Differential Geometry},
volume={126},
number={2},
pages={633--740}
}

\bib{DHR}{article}
{
title={The linear stability of the Schwarzschild solution to gravitational perturbations},
author={Dafermos, M.},
author={Holzegel, G.},
author={Rodnianski, I.},
date={2019},
journal={Acta Mathematica},
volume={222},
pages={1--214}
}

\bib{Physicists}{article}
{
title={On the nonlinear stability of asymptotically anti-de Sitter solutions},
author={Dias, O. J.},
author={Horowitz, G. T.},
author={Marolf, D.},
author={Santos, J. E.},
journal={Classical and Quantum Gravity},
volume={29},
number={23},
date={2012},
pages={235019:1--24}
}

\bib{Dold}{article}
{
title={Unstable mode solutions to the Klein-Gordon equation in Kerr-anti-de Sitter spacetimes},
author={Dold, D.},
journal={Communications in Mathematical Physics},
volume={350},
date={2017},
pages={639--697}
}

\bib{EncisoKamranWave}{article}
{
title={A singular initial-boundary value problem for nonlinear wave equations and holography in asymptotically anti-de Sitter spaces},
author={Enciso, A.},
author={Kamran, N.},
journal={Journal de Math\'ematiques Pures et Appliqu\'ees},
volume={103},
number={4},
date={2015},
pages={1053--1091}
}

\bib{EncisoKamran}{article}
{
title={Lorentzian Einstein metrics with prescribed conformal infinity},
author={Enciso, A.},
author={Kamran, N.},
journal={Journal of Differential Geometry},
volume={112},
number={3},
date={2019},
pages={505--554}
}

\bib{Gannot1}{article}
{
title={Quasinormal modes for Schwarzschild–AdS black holes: exponential convergence to the real axis},
author={Gannot, O.},
journal={Communications in Mathematical Physics},
volume={330},
date={2014},
pages={771--799}
}

\bib{Gannot2}{article}
{
title={Existence of quasinormal modes for Kerr-AdS black holes},
author={Gannot, O.},
journal={Annales Henri Poincar\'e},
volume={18},
date={2017},
pages={2757--2788}
}

\bib{Teukolsky}{article}
{
title={Mode stability results for the Teukolsky equations on Kerr-anti-de Sitter spacetimes},
author={Graf, O.},
author={Holzegel, G.},
journal={Classical and Quantum Gravity},
volume={40},
number={045003},
date={2023},
pages={1--43}
}

\bib{Olivier1}{article}
{
title={Linear Stability of Schwarzschild-Anti-de Sitter spacetimes I: The system of gravitational perturbations
},
author={Graf, O.},
author={Holzegel, G.},
journal={\href{https://arxiv.org/pdf/2408.02251}{arXiv:2408.02251 [gr-qc]}},
date={2024}
}

\bib{Olivier2}{article}
{
title={Linear Stability of Schwarzschild-Anti-de Sitter spacetimes II: Logarithmic decay of solutions to the Teukolsky system},
author={Graf, O.},
author={Holzegel, G.},
journal={\href{https://arxiv.org/pdf/2408.02252}{arXiv:2408.02252 [gr-qc]}},
date={2024}
}

\bib{Olivier3}{article}
{
title={Linear Stability of Schwarzschild-Anti-de Sitter spacetimes III: Quasimodes and sharp decay of gravitational perturbations
},
author={Graf, O.},
author={Holzegel, G.},
journal={\href{https://arxiv.org/pdf/2410.21994}{arXiv:2410.21994 [gr-qc]}},
date={2024}
}

\bib{Riemannian}{book}
{
title={Nonlinear Analysis on Manifolds: Sobolev Spaces and Inequalities},
author={Hebey, E.},
series={Courant Lecture Notes in Mathematics},
volume={5},
publisher={New York: Courant Institute},
date={1999}
}

\bib{WP}{article}
{
title={Well-posedness for the massive wave equation on asymptotically anti-de Sitter spacetimes},
author={Holzegel, G.},
journal={Journal of Hyperbolic Differential Equations},
volume={9},
number={2},
date={2012},
pages={239--261}
}

\bib{PosEng}{article}
{
title={On the massive wave equation on slowly rotating Kerr-AdS spacetimes},
author={Holzegel, G.},
journal={Communications in Mathematical Physics},
volume={294},
number={1},
date={2010},
pages={169--197}
}

\bib{KGDecay}{article}
{
title={Decay properties of Klein-Gordon fields on Kerr-AdS spacetimes},
author={Holzegel, G.},
author={Smulevici, J.},
journal={Communications on Pure and Applied Mathematics},
volume={66},
number={11},
date={2013},
pages={1751--1802}
}

\bib{SharpDecay}{article}
{
title={Quasimodes and a lower bound on the uniform energy decay rate for Kerr-AdS spacetimes},
author={Holzegel, G.},
author={Smulevici, J.},
journal={Analysis \& PDE},
volume={7},
number={5},
date={2014},
pages={1057--1090}
}

\bib{Conjecture}{article}
{
title={Asymptotic properties of linear field equations in Anti-de Sitter space},
author={Holzegel, G.},
author={Luk, J.},
author={Smulevici, J.},
author={Warnick, C.},
journal={Communications in Mathematical Physics},
volume={374},
number={2},
date={2020},
pages={1125--1178}
}

\bib{Upcoming}{article}
{
title={The global nonlinear stability of the anti-de Sitter space with dissipative boundary conditions (in preparation)},
author={Holzegel, G.},
author={Luk, J.},
author={Smulevici, J.},
author={Warnick, C.},
}

\bib{Twisted}{article}
{
title={Boundedness and growth for the massive wave equation on asymptotically anti-de Sitter black holes},
author={Holzegel, G.},
author={Warnick, C.},
journal={Journal of Functional Analysis},
volume={266},
number={4},
date={2014},
pages={2436--2485}
}

\bib{FritzJohn}{article}
{
title={Blow-up for quasi-linear wave equations in three space dimensions},
author={John, F.},
journal={Fritz John: Collected Papers},
volume={1},
date={1985},
pages={566--588}
}

\bib{Klainerman}{article}
{
title={The null condition and global existence to nonlinear wave equations},
author={Klainerman, S.},
part={Nonlinear Systems of Partial Differential Equations in Applied Mathematics Part I   (Santa Fe, NM, 1984)},
journal={Lectures in Applied Mathematics},
volume={23},
date={1986},
pages={293--326}
}

\bib{NullDust}{article}
{
title={A proof of the instability of AdS for the Einstein null-dust system with an inner mirror},
author={Moschidis, G.},
journal={Analysis and PDE},
volume={13},
number={6},
date={2020},
pages={1671--1754}
}

\bib{EinsteinVlasov}{article}
{
title={A proof of the instability of AdS for the Einstein-massless Vlasov system},
author={Moschidis, G.},
journal={Inventiones Mathematicae},
volume={231},
date={2023},
pages={467--672}
}

\bib{Gronwall}{article}
{
title={Inverse modified scattering and polyhomogeneous expansions for the Vlasov--Poisson system},
author={Schlue, V.},
author={Taylor, M.},
journal={\href{https://arxiv.org/pdf/2404.15885}{arXiv:2404.15885 [math.AP]}},
date={2024}
}

\bib{Extremality}{article}
{
title={The Kerr-Newman-(anti-)de Sitter spacetime: Extremal configurations and electrogeodesics},
author={Vesel\'y, J.},
author={\v{Z}ofka, M.},
journal={General Relativity and Gravitation},
volume={51},
number={11},
date={2019},
pages={156:1--22}
}

\bib{Claude}{article}
{
title={The massive wave equation in asymptotically AdS spacetimes},
author={Warnick, C.},
journal={Communications in Mathematical Physics},
volume={321},
number={1},
date={2013},
pages={85--111}
}

\bib{ClaudeQNM}{article}
{
title={(In)completeness of quasinormal modes},
author={Warnick, C.},
journal={Acta Physica Polonica B, Proceedings Supplement},
volume={15},
number={1-A13},
date={2022}
}
\end{biblist}
\end{bibdiv}
\end{document}